\theoremstyle{plain}
\newtheorem{theorem}{Theorem}[section]
\newtheorem{lemma}[theorem]{Lemma}
\newtheorem{prop}[theorem]{Proposition}
\newtheorem{question}[theorem]{Question}
\theoremstyle{definition}
\numberwithin{equation}{section}
\newcommand{\volg}{1+ u^{2}| \nabla f |_{g}^{2}} 
\newcommand{\volbarg}{1-u^{2}| \overline{\nabla} f |_{\overline{g}}^{2}}  
\newcommand{\barG}{\overline{\Gamma}}  
\newcommand{\tilG}{\widetilde{\Gamma}} 
\newcommand{\barna}{\overline{\nabla}} 
\newcommand{\tilna}{\widetilde{\nabla}} 
\newcommand{\bg}{\overline{g}}  
\newcommand{\tg}{\widetilde{g}} 
\newcommand{\by}{\overline{Y}}  
\newcommand{\byp}{Y^{\phi}}  
\newcommand{\bw}{\overline{w}} 
\newcommand{\bk}{\overline{k}} 
\newcommand{\bdiv}{div_{\overline{g}}} 
\newcommand{\brho}{\overline{\rho}} 
\newcommand{\br}{\overline{r}} 
\newcommand{\bth}{\overline{\theta}} 
\newcommand{\bu}{\overline{U}} 
\newcommand{\bz}{\overline{z}} 
\newcommand{\bal}{\overline{\alpha}} 
\begin{document}

\title[Deformations of Axially Symmetric Initial Data] {Deformations of Axially Symmetric Initial Data and the Mass-Angular Momentum Inequality}

\author[Cha]{Ye Sle Cha}
\address{Department of Mathematics\\
Stony Brook University\\
Stony Brook, NY 11794, USA}
\email{ycha@math.sunysb.edu}

\author[Khuri]{Marcus A. Khuri}
\address{Department of Mathematics\\
Stony Brook University\\
Stony Brook, NY 11794, USA}
\email{khuri@math.sunysb.edu}

\thanks{The second author acknowledges the support of
NSF Grants DMS-1007156 and DMS-1308753. This paper is also based upon work supported by NSF under
Grant No. 0932078 000, while the authors were in residence at
the Mathematical Sciences Research Institute in Berkeley, California, during
the fall of 2013.}

\begin{abstract}
We show how to reduce the general formulation of the mass-angular momentum inequality, for axisymmetric initial data
of the Einstein equations, to the known maximal case whenever a geometrically motivated system of equations admits a solution.
This procedure is based on a certain deformation of the initial data which preserves the relevant geometry, while achieving the maximal
condition and its implied inequality (in a weak sense) for the scalar curvature; this answers a question posed by R. Schoen. The primary equation involved, bears a strong resemblance to the Jang-type equations studied in the context of the positive mass theorem and the Penrose inequality. Each equation in the system is analyzed in detail individually, and it is shown that appropriate existence/uniqueness results hold with the solution satisfying desired asymptotics. Lastly, it is shown that the same reduction argument applies to the basic inequality yielding a lower bound for the area of black holes in terms of mass and angular momentum.

\end{abstract}
\maketitle

\section{Introduction}
\label{sec1} \setcounter{equation}{0}
\setcounter{section}{1}

The standard picture of gravitational collapse \cite{Choquet-Bruhat}, \cite{ChruscielGallowayPollack} asserts that generically, an asymptotically flat spacetime should eventually
settle down to a stationary final state, consisting of (possibly multiple) disconnected black hole spacetimes. The black hole uniqueness theorem implies that, in vacuum, each of these solutions must be the Kerr spacetime; note that there are still important
unresolved technical aspects associated with this uniqueness result \cite{ChruscielCosta0}. It is also conceivable that these black holes are coupled to matter fields. In any event, as in
Kerr, the following inequality holds between mass and angular momentum $m_{f}\geq\sqrt{|\mathcal{J}_{f}|}$ for each of the connected components of the final state, and hence for the final state itself. Moreover, as gravitational radiation carries positive energy, the mass of any initial state should not be smaller than that of the final state $m\geq m_{f}$. If auxiliary conditions are imposed,
one of which usually includes axisymmetry, in order to ensure the conservation of angular momentum, then $\mathcal{J}=\mathcal{J}_{f}$ where $\mathcal{J}$, $\mathcal{J}_{f}$ denote the (ADM) angular momentums of the initial and final state.
This leads to the mass-angular momentum inequality \cite{Dain}
\begin{equation}\label{1}
m\geq\sqrt{|\mathcal{J}|}
\end{equation}
for any initial state. A counterexample to \eqref{1} would pose a serious challenge to this standard picture of collapse, whereas a verification of \eqref{1} would only lend credence to this model.

Consider an initial data set $(M, g, k)$ for the Einstein equations. This consists of a 3-manifold $M$, Riemannian metric $g$, and symmetric 2-tensor $k$ representing the
extrinsic curvature (second fundamental form) of the embedding into spacetime, which satisfy the constraint equations
\begin{align}\label{2}
\begin{split}
16\pi\mu &= R+(Tr_{g}k)^{2}-|k|_{g}^{2},\\
8\pi J &= div_{g}(k-(Tr_{g}k)g).
\end{split}
\end{align}
Here $\mu$ and $J$ are the energy and momentum densities of the matter fields, respectively, and $R$ is the scalar curvature of $g$. The following inequality will be referred to as the dominant energy condition
\begin{equation}\label{2.1}
\mu\geq|J|_{g}.
\end{equation}
Suppose that $M$ has at least two ends, with one designated end being asymptotically flat, and the remainder being either asymptotically flat or asymptotically cylindrical. Recall that a domain $M_{\text{end}}\subset M$ is an
asymptotically flat end if it is diffeomorphic to $\mathbb{R}^{3}\setminus\text{Ball}$, and in the coordinates given by the asymptotic
diffeomorphism the following fall-off conditions hold
\begin{equation}\label{3}
g_{ij}=\delta_{ij}+o_{l}(r^{-\frac{1}{2}}),\text{ }\text{ }\text{ }\text{ }\partial g_{ij}\in L^{2}(M_{\text{end}}),\text{
}\text{ }\text{ }
\text{ }k_{ij}=O_{l-1}(r^{-\lambda}),\text{
}\text{ }\text{ }
\text{ }\lambda>\frac{5}{2},
\end{equation}
for some $l\geq 5$\footnote{The notation $f=o_{l}(r^{-a})$ asserts that $\lim_{r\rightarrow\infty}r^{a+n}\partial^{n}f=0$
for all $n\leq l$, and
$f=O_{l}(r^{-a})$ asserts that $r^{a+n}|\partial^{n}f|\leq C$ for all $n\leq l$. The assumption $l\geq 6$ is needed for
the results in \cite{Chrusciel}.}. In the context of the mass-angular momentum inequality, these asymptotics may be weakened, see for example \cite{SchoenZhou}. The asymptotics for cylindrical ends is
most easily described in Brill coordinates, to be given in the next section.

We say that the initial data are axially symmetric if the group of isometries of the Riemannian manifold $(M,g)$ has a subgroup
isomorphic to $U(1)$, and that the remaining quantities defining the initial data are invariant under the $U(1)$ action. In particular, if $\eta$ denotes the Killing field associated with this
symmetry, then
\begin{equation}\label{3.1}
\mathfrak{L}_{\eta}g=\mathfrak{L}_{\eta}k=0,
\end{equation}
where $\mathfrak{L}_{\eta}$ denotes Lie differentiation.
If $M$ is simply connected and the data are axially symmetric, it is shown in \cite{Chrusciel} that the analysis reduces to the study
of manifolds diffeomorphic to $\mathbb{R}^{3}$ minus a finite number of points. Each point represents a black hole, and has the geometry of an asymptotically flat or cylindrical end.
The fall-off conditions in the designated asymptotically flat end guarantee that the ADM mass and angular momentum are well-defined by the following limits
\begin{equation}\label{4}
m=\frac{1}{16\pi}\int_{S_{\infty}}(g_{ij,i}-g_{ii,j})\nu^{j},
\end{equation}
\begin{equation}\label{5}
\mathcal{J}=\frac{1}{8\pi}\int_{S_{\infty}}(k_{ij}-(Tr_{g} k)g_{ij})\nu^{i}\eta^{j},
\end{equation}
where $S_{\infty}$ indicates the limit as $r\rightarrow\infty$ of integrals over coordinate spheres $S_{r}$, with unit outer normal $\nu$. Note that \eqref{3} implies that the ADM linear momentum vanishes.

Angular momentum is conserved \cite{DainKhuriWeinsteinYamada} if
\begin{equation}\label{6}
J_{i}\eta^{i}=0.
\end{equation}
Moreover, when $M$ is simply connected, this is a necessary and sufficient condition \cite{DainKhuriWeinsteinYamada} for the existence of a twist potential $\omega$:
\begin{equation}\label{7}
2\epsilon_{ijl}(k^{jn}-(Tr_{g}k)g^{jn}))\eta^{l}\eta_{n}dx^{i}=d\omega
\end{equation}
where $\epsilon_{ijl}$ is the volume form for $g$.

In \cite{Dain0} Dain has confirmed \eqref{1} under the hypotheses that the
initial data have two ends, are maximal ($Tr_{g}k=0$), vacuum ($\mu=|J|_{g}=0$), and admit a global Brill coordinate system. He also established the rigidity statement, which asserts that equality occurs in \eqref{1} if and only if the initial data arise as the $t=0$ slice of the extreme Kerr spacetime. Chrusciel, Li, and Weinstein \cite{Chrusciel}, \cite{ChruscielLiWeinstein} improved these results by showing that global Brill coordinates exist under general conditions, and by replacing the vacuum assumption with the hypotheses that $\mu\geq 0$ and a twist potential exists; they also studied the case of multiple black holes. Later Schoen and Zhou \cite{SchoenZhou} gave a simplified proof for more general asymptotics, still assuming the maximal condition, and Zhou \cite{Zhou} treated the near maximal case. It should be noted that such results are false without the assumption of axial symmetry \cite{HuangSchoenWang}.

The focus of this paper is on the general case without the maximal or near maximal hypothesis. We will exhibit a reduction argument by which the general case is reduced to the maximal case, assuming that a canonical
system of elliptic PDEs possesses a solution. The procedure is motivated by, and bears a resemblance to, previous reduction arguments that have been applied to other geometric inequalities such as the positive mass theorem and the Penrose inequality \cite{BrayKhuri1}, \cite{BrayKhuri2}, \cite{DisconziKhuri}, \cite{Jang}, \cite{KhuriWeinstein}, \cite{SchoenYau}. Moreover, the primary equation is related to the Jang-type equations that appear in each of
these procedures. The end result yields a natural deformation of the initial data, in which the geometry relevant to the mass-angular momentum inequality is preserved, while achieving the maximal condition. In particular, this answers a question posed by R. Schoen \cite{Zhou}:

\begin{question}\label{q1}
Is there a canonical way to deform a non-maximal, axisymmetric, vacuum data to a
unique maximal, vacuum data with the same physical quantities, i.e. the mass and
angular momentum, which also preserves the axial symmetry?
\end{question}

This paper is organized as follows. In the next section we describe the deformation in detail, while in
Section \ref{sec3} the reduction argument is established and the case of equality is treated. Section \ref{sec4} contains an application to the basic inequality yielding a lower bound for the area of black holes in terms of mass and angular momentum. In Sections \ref{sec5} and \ref{sec6} we give an initial analysis of the canonical system of PDEs, and finally four appendices are added to include several important but lengthy calculations.\medskip

\textbf{Acknowledgements.} The authors would like to thank Lars Andersson, Piotr Chru\'{s}ciel, Sergio Dain, Marc Mars, Martin Reiris, Richard Schoen, and Xin Zhou for discussions related to this work.

\section{Deformation of Initial Data}
\label{sec2} \setcounter{equation}{0}
\setcounter{section}{2}

In this section we will describe the deformation procedure which leads to the reduction argument for the mass-angular momentum inequality. It will be assumed that $(M,g,k)$ is a simply connected,
axially symmetric initial data set with multiple ends as described in the previous section. Simple connectedness and axial symmetry imply \cite{Chrusciel} that $M\cong\mathbb{R}^{3}\setminus\sum_{n=1}^{N}i_{n}$, where $i_{n}$ are points in $\mathbb{R}^{3}$
and represent asymptotic ends (in total there are $N+1$ ends). Moreover there exists a global (cylindrical) Brill coordinate system $(\rho,\phi,z)$ on $M$, where the points $i_{n}$ all lie on the $z$-axis,
and in which the Killing field is given by $\eta=\partial_{\phi}$. In these coordinates the metric takes a simple form
\begin{equation}\label{9}
g=e^{-2U+2\alpha}(d\rho^{2}+dz^{2})+\rho^{2}e^{-2U}(d\phi+A_{\rho}d\rho+A_{z}dz)^{2},
\end{equation}
where $\rho e^{-U}(d\phi+A_{\rho}d\rho+A_{z}dz)$ is the dual 1-form to $|\eta|^{-1}\eta$ and all coefficient functions are independent of $\phi$. Let $M_{end}^{0}$ denote the end associated with limit $r=\sqrt{\rho^{2}+z^{2}}\rightarrow\infty$.
The asymptotically flat fall-off conditions \eqref{3} will be satisfied if
\begin{equation}\label{10}
U=o_{l-3}(r^{-\frac{1}{2}}),\text{ }\text{ }\text{ }\text{ }\alpha=o_{l-4}(r^{-\frac{1}{2}}),\text{ }\text{ }\text{ }\text{ }A_{\rho},A_{z}=o_{l-3}(r^{-\frac{3}{2}}).
\end{equation}
The remaining ends associated with the points $i_{n}$ will be denoted by $M_{end}^{n}$, and are associated with the limit $r_{n}\rightarrow 0$, where $r_{n}$ is the Euclidean distance to $i_{n}$. The asymptotics for
asymptotically flat and cylindrical ends are given, respectively, by
\begin{equation}\label{11}
U=2\log r_{n}+o_{l-4}(r_{n}^{\frac{1}{2}}),\text{ }\text{ }\text{ }\text{ }\alpha=o_{l-4}(r_{n}^{\frac{1}{2}}),\text{ }\text{ }\text{ }\text{ }A_{\rho},A_{z}=o_{l-3}(r_{n}^{\frac{3}{2}}),
\end{equation}
\begin{equation}\label{12}
U=\log r_{n}+o_{l-4}(r_{n}^{\frac{1}{2}}),\text{ }\text{ }\text{ }\text{ }\alpha=o_{l-4}(r_{n}^{\frac{1}{2}}),\text{ }\text{ }\text{ }\text{ }A_{\rho},A_{z}=o_{l-3}(r_{n}^{\frac{3}{2}}).
\end{equation}

It will also be assumed that the dominant energy condition \eqref{2.1} is satisfied, and that
\begin{equation}\label{13}
div_{g} k(\eta)=0,
\end{equation}
which is equivalent to \eqref{6}. Equation \eqref{13} gives rise to a twist potential $\omega$ \eqref{7} that is constant on each connected component of the axis of rotation. Let $I_{n}$ denote the
interval of the $z$-axis between $i_{n+1}$ and $i_{n}$, where $i_{0}=-\infty$ and $i_{N+1}=\infty$. Then a standard formula (see Appendix D) yields the angular momentum for each black hole
\begin{equation}\label{14}
\mathcal{J}_{n}=\frac{1}{8}(\omega|_{I_{n}}-\omega|_{I_{n-1}}).
\end{equation}
According to \eqref{5} and \eqref{13}, the total angular momentum is given by
\begin{equation}\label{15}
\mathcal{J}=\sum_{n=1}^{N}\mathcal{J}_{n}.
\end{equation}

We seek a deformation of the initial data $(M,g,k)\rightarrow(\overline{M},\overline{g},\overline{k})$ such that the manifolds are diffeomorphic $M\cong\overline{M}$, the geometry of the ends is preserved, and
\begin{equation}\label{16}
\overline{m}=m,\text{ }\text{ }\text{ }\text{ }\overline{\mathcal{J}}=\mathcal{J},\text{ }\text{ }\text{ }\text{ }Tr_{\overline{g}}\overline{k}=0,\text{ }\text{ }\text{ }\text{ }\overline{J}(\eta)=0,\text{ }\text{ }\text{ }\text{ }
\overline{R}\geq|\overline{k}|_{\overline{g}}^{2}\text{ }\text{ weakly, }\text{ }\text{ }
\end{equation}
where $\overline{m}$, $\overline{\mathcal{J}}$, $\overline{J}$, and $\overline{R}$ are the mass, angular momentum, momentum density, and scalar curvature of the new data. The inequality in \eqref{16} is said to hold `weakly' if it is valid when
integrated against an appropriate test function. The validity of this inequality plays a central role in the proof of the mass-angular momentum inequality in the maximal case, and it is precisely the lack of this inequality
in the non-maximal case which prevents the proof from generalizing. Thus, the primary goal of the deformation is to obtain such a lower bound for the scalar curvature, while preserving all other aspects of the geometry.

With intuition from previous work \cite{BrayKhuri1}, \cite{BrayKhuri2}, \cite{SchoenYau} we search for the deformation in the form of a graph inside a stationary 4-manifold
\begin{equation}\label{17}
\overline{M}=\{t=f(x)\}\subset(M\times\mathbb{R}, g+2Y_{i}dx^{i}dt+\varphi dt^{2}),
\end{equation}
where the 1-form $Y=Y_{i}dx^{i}$ and functions $\varphi$ and $f$ are defined on $M$ and satisfy
\begin{equation}\label{18}
\mathfrak{L}_{\eta}f=\mathfrak{L}_{\eta}\varphi=\mathfrak{L}_{\eta}Y=0.
\end{equation}
Define
\begin{equation}\label{19}
\overline{g}_{ij}=g_{ij}+f_{i}Y_{j}+f_{j}Y_{i}+\varphi f_{i}f_{j},\text{ }\text{ }\text{ }\text{ }
\overline{k}_{ij}=\frac{1}{2u}\left(\overline{\nabla}_{i}Y_{j}+\overline{\nabla}_{j}Y_{i}\right),
\end{equation}
where $f_{i}=\partial_{i}f$, $\overline{\nabla}$ is the Levi-Civita connection with respect to $\overline{g}$, and
\begin{equation}\label{20}
u^{2}=\varphi+|Y|_{\overline{g}}^{2}.
\end{equation}
In the `Riemannian' setting \eqref{17}, $\overline{g}$ arises as the induced metric on the graph $\overline{M}$. However in the `Lorentzian' setting
\begin{equation}\label{21}
M=\{t=f(x)\}\subset(\overline{M}\times\mathbb{R}, \overline{g}-2Y_{i}dx^{i}dt-\varphi dt^{2}),
\end{equation}
the deformed data arise as the induced metric and second fundamental form of the $t=0$ slice. Notice that
\begin{equation}\label{22}
\partial_{t}=un-\overline{Y},
\end{equation}
where $n$ is the unit normal to the $t=0$ slice and $\overline{Y}$ is the vector field dual to $Y$ with respect to $\overline{g}$. Thus $(u,-\overline{Y})$ comprise the lapse and shift of this stationary spacetime. Based on the structure of the Kerr spacetime, we make the following simplifying
assumption that $\overline{Y}$ has only one component
\begin{equation}\label{23}
\overline{Y}^{i}\partial_{i}:=\overline{g}^{ij}Y_{j}\partial_{i}=Y^{\phi}\partial_{\phi}.
\end{equation}

\begin{lemma}\label{lemma1}
Under the hypothesis \eqref{23}, $\overline{g}$ is a Riemannian metric, $Tr_{\overline{g}}\overline{k}=0$, and $\varphi=u^{2}-g_{\phi\phi}(Y^{\phi})^{2}$. Moreover if $\{e_{i}\}_{i=1}^{3}$ is an orthonormal frame for $\overline{g}$ with $e_{3}=|\eta|^{-1}\eta$, then
\begin{equation}\label{23.1}
\overline{k}(e_{i},e_{j})=\overline{k}(e_{3},e_{3})=0,\text{ }\text{ }\text{ }\text{ }
\overline{k}(e_{i},e_{3})=\frac{|\eta|}{2u}e_{i}(Y^{\phi}),
\text{ }\text{ }\text{ }\text{ }i,j\neq 3.
\end{equation}
Lastly
\begin{equation} \label{1001}
(\volg)(\volbarg) = 1,
\end{equation}
where $\nabla^{i} f=g^{ij}f_{j}$ and $\overline{\nabla}^{i}f=\overline{g}^{ij}f_{j}$.
\end{lemma}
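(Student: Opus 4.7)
The proof reduces to two consequences of axial symmetry (18): (a) the Killing field $\eta = \partial_\phi$ annihilates $f$, $\varphi$, and $\byp$; and (b) $\eta$ remains Killing for $\bg$, since each summand in the definition (19) is $\eta$-invariant. I would verify the five assertions of the lemma in the order stated, reusing these facts throughout.

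For $\varphi = u^{2} - g_{\phi\phi}(\byp)^{2}$: since $f_\phi = 0$, expansion of (19) gives $\bg_{\phi\phi} = g_{\phi\phi}$; lowering the index in (23) with $\bg$ then produces $Y_\phi = g_{\phi\phi}\byp$, so $|Y|^{2}_{\bg} = Y_i \by^i = g_{\phi\phi}(\byp)^{2}$, and the formula follows from (20). To see that $\bg$ is Riemannian, I would decompose an arbitrary vector as $v = v_\parallel + v_\perp$ with $v_\parallel$ proportional to $\eta$ and $v_\perp$ being $g$-orthogonal to $\eta$, set $\xi = g(v,\eta)$ and $a = f_i v^i$, and use $f(v_\parallel) = 0$ together with the explicit formula $Y_i = g_{i\phi}\byp + f_i g_{\phi\phi}(\byp)^{2}$ obtained from (23). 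Substituting into (19) and completing a single square gives
\[
\bg(v,v) = |v_\perp|_g^{2} + u^{2} a^{2} + \left(\frac{\xi}{\sqrt{g_{\phi\phi}}} + a\,\byp\sqrt{g_{\phi\phi}}\right)^{2},
\]
which is manifestly non-negative and vanishes only at $v = 0$ (assuming $u > 0$).

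For the traceless condition, the short calculation
\[
Tr_{\bg}\overline{k} = \frac{1}{u}\,\bg^{ij}\barna_i Y_j = \frac{1}{u}\,\barna_i \by^i = \frac{1}{u}\,\bdiv(\byp \eta) = 0
\]
works because $\eta$ is $\bg$-Killing, hence divergence-free, while $\eta(\byp) = 0$. For (23.1), I would rewrite $2u\,\overline{k} = \mathfrak{L}_{\by}\bg = \mathfrak{L}_{\byp \eta}\bg$ and apply the scalar-multiplication Leibniz rule for the Lie derivative,
\[
(\mathfrak{L}_{\byp \eta}\bg)(A,B) = A(\byp)\,\bg(\eta,B) + B(\byp)\,\bg(\eta,A),
\]
the $\byp\,\mathfrak{L}_\eta \bg$ term vanishing. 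The three cases then follow from $\bg(\eta,e_a) = |\eta|\delta_{a3}$ and $e_3(\byp) = |\eta|^{-1}\eta(\byp) = 0$.

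Finally, for (1001), I would invert the relation $\bg_{ij}(\barna f)^j = f_i$ directly. Expanding (19) and using $\by(f) = \byp f_\phi = 0$ produces
\[
(\barna f)^k = (1 - \varphi\alpha)(\nabla f)^k - \alpha\, g^{ki}Y_i, \qquad \alpha = |\barna f|^{2}_{\bg}.
\]
Contracting with $f_k$ and applying $g^{ij}f_i Y_j = g_{\phi\phi}(\byp)^{2}|\nabla f|_g^{2}$ (itself a direct consequence of $g^{ij}f_i g_{j\phi} = f_\phi = 0$) yields $\alpha(1 + u^{2}|\nabla f|_g^{2}) = |\nabla f|_g^{2}$, equivalent to $(\volg)(\volbarg) = 1$. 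The only real obstacle in the proof is this last inversion, but axial symmetry kills every cross-term that would otherwise complicate the algebra, so the identity emerges after one line of bookkeeping.
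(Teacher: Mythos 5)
Your proof is correct and covers all five assertions, but several steps take genuinely different routes from the paper's. For positive-definiteness, the paper writes $\overline{g}$ as a $3\times3$ matrix in a $g$-orthonormal frame adapted to $\eta$ and checks that the leading minors are positive, producing in passing the key determinant identity $\det\overline{g}=(1+u^2|\nabla f|_g^2)\det g$; you instead decompose $v=v_\parallel+v_\perp$ and complete a square, which gives an equivalent and slightly more conceptual positivity argument but does not by itself hand you the determinant relation. This matters because the paper then derives (2.15) purely from determinants, using (2.19) together with its Lorentzian analogue $\det g=(1-u^2|\overline\nabla f|_{\overline g}^2)\det\overline{g}$ (stated in (2.20) as ``an analogous computation''). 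You avoid that appeal entirely by inverting $\overline g_{ij}(\overline\nabla f)^j=f_i$ directly, using $\overline Y(f)=0$ and the identity $g^{ij}f_iY_j=g_{\phi\phi}(Y^\phi)^2|\nabla f|_g^2$ to isolate $\alpha=|\overline\nabla f|^2_{\overline g}$ and read off $\alpha(1+u^2|\nabla f|_g^2)=|\nabla f|_g^2$. That is a self-contained and arguably cleaner derivation of (2.15). For (2.14), the paper computes $2u\overline k_{ij}=\overline g_{\phi i}\partial_jY^\phi+\overline g_{\phi j}\partial_iY^\phi$ via Christoffel symbols; you obtain the same identity via the Leibniz rule for $\mathfrak L_{Y^\phi\eta}\overline g$ plus the fact that $\eta$ is $\overline g$-Killing — equivalent content, tidier bookkeeping. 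The tracelessness computation ($\mathrm{div}_{\overline g}(Y^\phi\eta)=0$ from $\eta$ Killing and $\eta(Y^\phi)=0$) is the same idea as the paper's $\overline\Gamma^i_{i\phi}Y^\phi=0$ calculation. Overall your argument is sound, and the $(\overline\nabla f)$-inversion for (2.15) is a worthwhile simplification relative to the paper's implicit reliance on the Lorentzian determinant identity.
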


\begin{proof}
From \eqref{18} it follows that $\overline{g}_{\phi\phi}=g_{\phi\phi}$, and so $|Y|_{\overline{g}}^{2}=g_{\phi\phi}(Y^{\phi})^{2}$. This yields the formula for $\varphi$.
Next observe that
\begin{align}\label{24}
\begin{split}
uTr_{\overline{g}}\overline{k} &= \overline{\nabla}_{i}Y^{i}\\
&= \partial_{i}Y^{i}-\overline{\Gamma}_{ij}^{i}Y^{j}\\
&=-\overline{\Gamma}_{i\phi}^{i}Y^{\phi}\\
&=-\left(\frac{1}{\sqrt{\det\overline{g}}}\partial_{\phi}\sqrt{\det\overline{g}}\right)Y^{\phi}\\
&=0,
\end{split}
\end{align}
where $\overline{\Gamma}_{ij}^{l}$ are Christoffel symbols.

We now show that $\overline{g}$ is Riemannian. Equations \eqref{18} and \eqref{23} imply that
\begin{equation}\label{25}
Y_{\phi}=g_{\phi\phi}Y^{\phi},\text{ }\text{ }\text{ }\text{ } Y_{i}=\overline{g}_{ij}Y^{j}=\overline{g}_{i\phi}Y^{\phi}=(g_{i\phi}+f_{i}Y_{\phi})Y^{\phi}=(g_{i\phi}+f_{i}g_{\phi\phi}Y^{\phi})Y^{\phi}.
\end{equation}
Inserting this into \eqref{19} produces
\begin{equation}\label{26}
\overline{g}_{ij}=g_{ij}+(f_{i}g_{j\phi}+f_{j}g_{i\phi})Y^{\phi}+(u^{2}+g_{\phi\phi}(Y^{\phi})^{2})f_{i}f_{j}.
\end{equation}
Take a $g$-orthonormal frame $(d_{1},d_{2},d_{3}=|\eta|^{-1}\eta)$ at a point, and express $\overline{g}$ as a matrix with respect to this frame
\begin{equation}\label{27}
\overline{g}=
\begin{pmatrix}
1+(u^{2}+g_{\phi\phi}(Y^{\phi})^{2})f_{1}^{2} & (u^{2}+g_{\phi\phi}(Y^{\phi})^{2})f_{1}f_{2} & \sqrt{g_{\phi\phi}}Y^{\phi}f_{1} \\
                                              & 1+(u^{2}+g_{\phi\phi}(Y^{\phi})^{2})f_{2}^{2} & \sqrt{g_{\phi\phi}}Y^{\phi}f_{2} \\
                                              &                                               & 1
\end{pmatrix}.
\end{equation}
The determinant of the lower $2\times 2$ minor is $1+u^{2}f_{2}^{2}> 0$, and the full determinant is given by
\begin{equation}\label{29}
\det\overline{g} = (1+u^{2}|\nabla f|_{g}^{2})\det g> 0.
\end{equation}
It follows that $\overline{g}$ is positive definite. Observe also that an analogous computation in the Lorentzian setting produces
\begin{equation}\label{29.0}
\det g = (1-u^{2}|\overline{\nabla} f|_{\overline{g}}^{2})\det\overline{g}.
\end{equation}
Equations \eqref{29} and \eqref{29.0} together yield \eqref{1001}.

In order to establish \eqref{23.1}, observe that
\begin{equation}\label{29.1}
2u\overline{k}_{ij}=\overline{\nabla}_{i}Y_{j}+\overline{\nabla}_{j}Y_{i}
=\partial_{i}Y_{j}+\partial_{j}Y_{i}-2\overline{\Gamma}_{ij}^{a}Y_{a},
\end{equation}
and
\begin{equation}\label{29.2}
\partial_{i}Y_{j}=\partial_{i}(\overline{g}_{\phi j}Y^{\phi})
=(\partial_{i}\overline{g}_{\phi j})Y^{\phi}+\overline{g}_{\phi j}\partial_{i}Y^{\phi},
\end{equation}
\begin{align}\label{29.3}
\begin{split}
2\overline{\Gamma}_{ij}^{a}Y_{a}&=\overline{g}^{al}(\partial_{i}\overline{g}_{jl}
+\partial_{j}\overline{g}_{il}-\partial_{l}\overline{g}_{ij})Y_{a}\\
&=(\partial_{i}\overline{g}_{j\phi}+\partial_{j}\overline{g}_{i\phi})Y^{\phi}.
\end{split}
\end{align}
Therefore
\begin{equation}\label{29.4}
2u\overline{k}_{ij}=\overline{g}_{\phi i}\partial_{j}Y^{\phi}+\overline{g}_{\phi j}\partial_{i}Y^{\phi}.
\end{equation}
Clearly $\overline{k}(e_{3},e_{3})=0$, and if we express $e_{i}$, $i=1,2$ in coordinates \eqref{45}, then
for $i,j=1,2$
\begin{align}\label{29.5}
\begin{split}
2u\overline{k}(e_{i},e_{j})&=2ue^{2\overline{U}-2\overline{\alpha}}(\overline{k}_{ij}
-A_{i}\overline{k}_{j\phi}-A_{j}\overline{k}_{i\phi}+A_{i}A_{j}\overline{k}_{\phi\phi})\\
&=e^{2\overline{U}-2\overline{\alpha}}(\overline{g}_{\phi i}\partial_{j}Y^{\phi}
+\overline{g}_{\phi j}\partial_{i}Y^{\phi}-A_{i}\overline{g}_{\phi\phi}\partial_{j}Y^{\phi}
-A_{j}\overline{g}_{\phi\phi}\partial_{i}Y^{\phi})\\
&=0,
\end{split}
\end{align}
since $\overline{g}_{\phi i}=A_{i}\overline{g}_{\phi\phi}$ from \eqref{42}. Also
\begin{equation}\label{29.6}
2u\overline{k}(e_{i},e_{3})=\frac{\overline{g}_{\phi\phi}}{|\eta|}e_{i}(Y^{\phi})
=|\eta|e_{i}(Y^{\phi}).
\end{equation}
\end{proof}

This lemma shows that the deformed data set is maximal, satisfying one requirement of \eqref{16}. Furthermore, it shows that $\varphi$ is determined by the functions
$u$ and $Y^{\phi}$. Thus, the three functions $(u,Y^{\phi},f)$ completely determine the new data, and will be chosen to satisfy the remaining statements in \eqref{16}, so as
to yield a reduction argument for the mass-angular momentum inequality.

The next task is to show how to choose the three functions $(u,Y^{\phi},f)$. In order to apply the techniques from the maximal case, the existence of a twist potential for $(\overline{M},\overline{g},\overline{k})$
is needed. Therefore we require
\begin{equation}\label{30}
div_{\overline{g}}\overline{k}(\eta)=0.
\end{equation}
This turns out to be a linear elliptic equation for $Y^{\phi}$ (if $u$ is independent of $Y^{\phi}$), as is shown in the appendix. As discussed in Section \ref{sec4}, the function $Y^{\phi}$ is uniquely determined
among bounded solutions of \eqref{30}, if the $r^{-3}$-fall-off rate is prescribed at $M_{end}^{0}$. In particular, we will choose the following boundary condition
\begin{equation}\label{31}
Y^{\phi}=-\frac{2\mathcal{J}}{r^{3}}+o_{2}(r^{-\frac{7}{2}})\text{ }\text{ }\text{ as }\text{ }\text{ }r\rightarrow\infty.
\end{equation}

\begin{lemma}\label{lemma2}
If $\overline{g}$ is asymptotically flat and $u\rightarrow 1$ as $r\rightarrow\infty$, then the boundary condition \eqref{31} guarantees that $\overline{\mathcal{J}}=\mathcal{J}$.
\end{lemma}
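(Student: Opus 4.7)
The plan is to evaluate the ADM angular momentum of the deformed data directly, showing that the leading behavior of the surface integral at $M_{end}^{0}$ is determined entirely by the prescribed asymptotic \eqref{31}. Since Lemma \ref{lemma1} already gives $Tr_{\bg}\bk=0$, the definition \eqref{5} for the deformed data collapses to
\begin{equation*}
\overline{\mathcal{J}}=\frac{1}{8\pi}\lim_{r\to\infty}\int_{S_{r}}\bk_{ij}\nu^{i}\eta^{j}\, dA_{\bg},
\end{equation*}
with $\nu$ the $\bg$-unit outer normal.

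The first step is to derive a clean formula for the integrand from Lemma \ref{lemma1}. Writing $\eta=|\eta|e_{3}$ and using \eqref{23.1}, a short linear computation (exploiting $e_{3}(\byp)=|\eta|^{-1}\partial_{\phi}\byp=0$ by axisymmetry) yields $\bk(X,\eta)=\frac{|\eta|^{2}_{\bg}}{2u}\,X(\byp)$ for every vector $X$. Combining this with $\bg_{\phi\phi}=g_{\phi\phi}$ (from the proof of Lemma \ref{lemma1}, since $\partial_{\phi}f=0$), one obtains
\begin{equation*}
\bk_{ij}\nu^{i}\eta^{j}=\frac{g_{\phi\phi}}{2u}\,\partial_{\nu}\byp.
\end{equation*}

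The second step is the asymptotic expansion at $M_{end}^{0}$. Because $\bg$ is asymptotically flat with $u\to 1$, and $g_{\phi\phi}=\rho^{2}e^{-2U}$ with $U=o(r^{-1/2})$ in Brill coordinates, on $S_{r}$ we have $g_{\phi\phi}=r^{2}\sin^{2}\theta(1+o(1))$, $dA_{\bg}=r^{2}\sin\theta\, d\theta d\phi(1+o(1))$, and $\nu\to\partial_{r}$; the boundary condition \eqref{31} gives $\partial_{r}\byp=6\mathcal{J}r^{-4}+o_{1}(r^{-9/2})$. Substituting and passing to the limit:
\begin{equation*}
\overline{\mathcal{J}}=\frac{1}{8\pi}\int_{0}^{2\pi}\!\!\int_{0}^{\pi}\frac{r^{2}\sin^{2}\theta}{2}\cdot\frac{6\mathcal{J}}{r^{4}}\cdot r^{2}\sin\theta\, d\theta d\phi=\frac{1}{8\pi}\cdot 2\pi\cdot 3\mathcal{J}\cdot\frac{4}{3}=\mathcal{J}.
\end{equation*}

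The main technical point, and the only place where the precise exponents in \eqref{31} are used, is verifying that each $o(\cdot)$ correction above contributes $o(1)$ to the spherical integral after multiplication by the $r^{4}$ measure factor. The $o_{2}(r^{-7/2})$ remainder in \eqref{31}, whose radial derivative is $o_{1}(r^{-9/2})$, together with the decay supplied by asymptotic flatness of $\bg$ and the assumption $u\to 1$, are precisely tuned so that all subleading contributions vanish as $r\to\infty$.
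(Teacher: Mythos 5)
Your proof is correct and takes essentially the same approach as the paper: both evaluate the ADM angular momentum integral for the deformed data directly, substitute $\overline{k}_{r\phi}=\tfrac{g_{\phi\phi}}{2u}\partial_{r}Y^{\phi}$ (equivalently your frame identity $\overline{k}(X,\eta)=\tfrac{|\eta|_{\overline{g}}^{2}}{2u}X(Y^{\phi})$, which follows from \eqref{29.4} or \eqref{23.1}), and use the prescribed fall-off \eqref{31} together with $u\to1$, $g_{\phi\phi}\sim r^{2}\sin^{2}\theta$, $dA\sim r^{2}\sin\theta\,d\theta\,d\phi$ to reduce to $\tfrac{3\mathcal{J}}{4}\int_{0}^{\pi}\sin^{3}\theta\,d\theta=\mathcal{J}$. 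The only difference is that you spell out the frame-level identity and the vanishing of the error terms more explicitly than the paper's terse computation \eqref{32}.
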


\begin{proof}
Observe that since $g_{\phi\phi}\sim r^{2}\sin^{2}\theta$ as $r\rightarrow\infty$, where $\rho=r\sin\theta$ and $z=r\cos\theta$, we have
\begin{align}\label{32}
\begin{split}
\overline{\mathcal{J}}&=\lim_{r\rightarrow\infty}\frac{1}{8\pi}\int_{S_{r}}\overline{k}(\partial_{\phi},\partial_{r})\\
&=\lim_{r\rightarrow\infty}\frac{1}{16\pi}\int_{0}^{\pi}\int_{0}^{2\pi}g_{\phi\phi}\partial_{r}Y^{\phi}r^{2}\sin\theta d\phi d\theta\\
&=\frac{3\mathcal{J}}{4}\int_{0}^{\pi}\sin^{3}\theta d\theta\\
&=\mathcal{J}.
\end{split}
\end{align}
\end{proof}

Let us now show how to choose $f$. As with previous deformations arising from the positive mass theorem and Penrose inequality, $f$ is chosen to impart positivity properties to the
scalar curvature. With this in mind, it is instructive to calculate the scalar curvature for an arbitrary $f$. The following result requires a long and detailed computation, and is therefore relegated
to the appendix.

\begin{theorem}\label{thm1}
Suppose that \eqref{3.1}, \eqref{13}, \eqref{18}, \eqref{23}, and \eqref{30} are satisfied, then the scalar curvature of $\overline{g}$ is given by
\begin{align}\label{33}
\begin{split}
\overline{R}-|\overline{k}|_{\overline{g}}^{2}=& 16\pi(\mu-J(v))+|k-\pi|_{g}^{2}+2u^{-1}div_{\overline{g}}(uQ)\\
&+(Tr_{g}\pi)^{2}-(Tr_{g}k)^{2}+2v(Tr_{g}\pi-Tr_{g}k),
\end{split}
\end{align}
where
\begin{equation}\label{35}
\pi_{ij}=\frac{u\nabla_{ij}f+u_{i}f_{j}+u_{j}f_{i}+\frac{1}{2u}(g_{i\phi}Y^{\phi}_{,j}+g_{j\phi}Y^{\phi}_{,i})}{\sqrt{1+u^{2}|\nabla f|_{g}^{2}}}
\end{equation}
is the second fundamental form of the graph $M$ in the Lorentzian setting,
\begin{equation}\label{34}
v^{i}=\frac{uf^{i}}{\sqrt{1+u^{2}|\nabla f|_{g}^{2}}},\text{ }\text{ }\text{ }\text{ }w^{i}=\frac{uf^{i}+u^{-1}\overline{Y}^{i}}{\sqrt{1+u^{2}|\nabla f|_{g}^{2}}},
\end{equation}
and
\begin{equation}\label{36}
Q_{i}=\overline{Y}^{j}\overline{\nabla}_{ij}f-u\overline{g}^{jl}f_{l}\overline{k}_{ij}+w^{j}(k-\pi)_{ij}+uf_{i}w^{l}w^{j}(k-\pi)_{lj}\sqrt{1+u^{2}|\nabla f|_{g}^{2}}.
\end{equation}
Furthermore, if $Y\equiv 0$ then the same conclusion holds without any of the listed hypotheses.
\end{theorem}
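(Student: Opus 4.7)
My plan is to carry out the Gauss-Codazzi bookkeeping in two complementary pictures simultaneously: the Riemannian ambient \eqref{17} and the Lorentzian ambient \eqref{21}. In the Lorentzian picture the $t=0$ slice has induced metric $\overline{g}$ and, because $\partial_{t}$ is Killing with lapse $u$ and shift $-\overline{Y}$ by \eqref{22}, its second fundamental form is precisely the tensor $\overline{k}$ defined in \eqref{19}. In the same ambient, the graph $\{t=f(x)\}$ has induced metric $g$ (a direct pullback computation, consistent with the reciprocal identity \eqref{1001}), and its shape operator, computed from the unit normal along the graph, yields the tensor $\pi$ of \eqref{35}.

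First I would apply the Gauss equation to each of the two slices inside the common Lorentzian ambient. Using $Tr_{\overline{g}}\overline{k}=0$ from Lemma \ref{lemma1}, the $t=0$ slice gives
\[\overline{R}-|\overline{k}|_{\overline{g}}^{2}=R^{(4)}+2\mathrm{Ric}^{(4)}(n_{0},n_{0}),\]
while the graph $\{t=f\}$ gives
\[R=R^{(4)}+2\mathrm{Ric}^{(4)}(n_{f},n_{f})-(Tr_{g}\pi)^{2}+|\pi|_{g}^{2}.\]
Subtracting, invoking the Hamiltonian constraint $R=16\pi\mu-(Tr_{g}k)^{2}+|k|_{g}^{2}$, and using the algebraic identity $|k|_{g}^{2}-|\pi|_{g}^{2}=|k-\pi|_{g}^{2}+2\langle k-\pi,\pi\rangle_{g}$, I reduce \eqref{33} to the single Ricci identity
\[\mathrm{Ric}^{(4)}(n_{0},n_{0})-\mathrm{Ric}^{(4)}(n_{f},n_{f})=-8\pi J(v)+u^{-1}\bdiv(uQ)+v(Tr_{g}\pi-Tr_{g}k)-\langle k-\pi,\pi\rangle_{g}.\]

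Next I would attack this Ricci identity via the ADM 3+1 decomposition. The momentum constraint on the $t=0$ slice gives $\mathrm{Ric}^{(4)}(n_{0},e_{i})=\overline{\nabla}^{j}\overline{k}_{ij}$, and by \eqref{30} together with the axisymmetry conditions \eqref{3.1}, \eqref{18} the $\eta$-component vanishes. The graph normal $n_{f}$ resolves into a multiple of $n_{0}$ plus a tangential direction whose horizontal projection is proportional to the vector $v$ of \eqref{34}; expanding $\mathrm{Ric}^{(4)}(n_{f},n_{f})$ along this splitting introduces the factor $J(v)$ through the momentum constraint $8\pi J=div_{g}(k-(Tr_{g}k)g)$ for the original data $(g,k)$, while the pure-tangential piece pairs with $\pi$ to supply the $v(Tr_{g}\pi-Tr_{g}k)$ and $\langle k-\pi,\pi\rangle_{g}$ contributions. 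The residual derivatives of $u$, $Y^{\phi}$, and $f$ assemble, after a rearrangement that exploits the Killing character of $\partial_{t}$, into the single divergence $u^{-1}\bdiv(uQ)$ with $Q$ as in \eqref{36}. Hypothesis \eqref{23} is crucial here, because it forces the shift to carry only a $\partial_{\phi}$ component and thereby annihilates the many mixed terms that would otherwise appear; in the degenerate case $Y\equiv 0$ the Lorentzian ambient reduces to the standard product, $\pi$ and $Q$ collapse to the classical Schoen-Yau-Jang expressions, and the hypotheses \eqref{13}, \eqref{23}, \eqref{30} are not required.

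The main obstacle I anticipate is this last stage, namely the identification of the precise divergence structure $u^{-1}\bdiv(uQ)$. Because the ambient 4-manifold is not assumed vacuum, one cannot appeal to 4-dimensional constraints; each occurrence of $\mathrm{Ric}^{(4)}$ must be converted by hand through the ADM decomposition, keeping careful track of all derivatives of $u$, $f$, and $Y^{\phi}$ that arise when passing between the $\overline{g}$-connection on the graph and the $g$-connection on the original slice. Since the calculation is long but essentially algebraic, I would split it into auxiliary lemmas computing $\overline{\Gamma}-\Gamma$, the shape operator of the graph, and the Codazzi and Hamiltonian identities on each slice, and relegate them to separate appendices, exactly as the paper indicates.
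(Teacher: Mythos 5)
Your overall strategy---view $(M,g)$ as a graph and the $t=0$ slice as a second hypersurface in the same stationary Lorentzian ambient, then invoke Gauss--Codazzi---matches the paper in spirit, and your preliminary reduction is correct: subtracting the two Gauss equations, applying the Hamiltonian constraint, and using $|k|_{g}^{2}-|\pi|_{g}^{2}=|k-\pi|_{g}^{2}+2\langle k-\pi,\pi\rangle_{g}$ does reduce \eqref{33} to exactly the Ricci identity you display. However, the step you describe for verifying that identity has a genuine gap. You propose to expand $n_{f}$ in terms of $n_{0}$ and a tangential vector and then apply momentum constraints term by term. Writing $n_{f}=\sqrt{\volg}\,(n_{0}+u\barna f)$ and expanding gives $\mathrm{Ric}^{(4)}(n_{f},n_{f})=(\volg)\bigl[\mathrm{Ric}^{(4)}(n_{0},n_{0})+2u\,\mathrm{Ric}^{(4)}(n_{0},\barna f)+u^{2}\mathrm{Ric}^{(4)}(\barna f,\barna f)\bigr]$. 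Only the middle (normal--tangent) term is a Codazzi/momentum quantity. The residual $u^{2}|\nabla f|_{g}^{2}\,\mathrm{Ric}^{(4)}(n_{0},n_{0})$ piece does not cancel, and the pure tangent--tangent piece $\mathrm{Ric}^{(4)}(\barna f,\barna f)$ is \emph{not} a constraint at all---it belongs to the evolution (Ricci/Mainardi) system. So the constraint equations alone do not close your Ricci identity, and ``converting $\mathrm{Ric}^{(4)}$ by hand through the ADM decomposition'' would mean bringing in the stationary evolution equations, a substantially longer detour that your outline does not carry out.

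The device that makes the paper's computation close at the constraint level is the choice to compute a single \emph{mixed} Einstein-tensor quantity $G(N,n)$ in two different ways (\eqref{0011} and \eqref{0012}): expanding $N$ in terms of $n$ and a slice-tangent direction produces $G(n,n)$ and $G(\text{tangent},n)$, the Hamiltonian and momentum constraints for $(\overline{M},\bg,\bk)$; expanding $n$ in terms of $N$ and a graph-tangent direction produces $G(N,N)$ and $G(N,\text{tangent})$, the Hamiltonian and momentum constraints for $(M,g,\pi)$. Equating the two gives \eqref{0013} with no $R^{(4)}$ and no tangent--tangent Ricci ever appearing. Equivalently, since $\langle n_{0},n_{0}\rangle=\langle n_{f},n_{f}\rangle=-1$ forces $\mathrm{Ric}^{(4)}(n_{0},n_{0})-\mathrm{Ric}^{(4)}(n_{f},n_{f})=G(n_{0},n_{0})-G(n_{f},n_{f})$, and the $G(N,n)$ identity re-expresses this difference as a sum of two mixed (normal--tangent) Codazzi terms, one per slice. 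You should insert this trick after your subtraction step; once \eqref{0013}/\eqref{0016} is in hand, the remainder of the argument---the computation of $\pi$ via the Christoffel-difference \textbf{Identity 1}, the cascade of divergence identities converting $\bdiv\bk(u\barna f)$ and $\mathrm{div}_{g}(k-\pi)(v)$ into $u^{-1}\bdiv(uQ)-\langle k-\pi,\pi\rangle_{g}$, and the final assembly---is the long but essentially algebraic work you correctly anticipate, and hypothesis \eqref{23} enters exactly where you expect (killing the off-axis shift components that would otherwise spoil the cancellations).
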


This theorem, together with the dominant energy condition \eqref{2.1}, make it clear that in order to obtain the inequality $\overline{R}\geq|\overline{k}|_{\overline{g}}^{2}$ at least weakly,
$f$ should be chosen to solve the equation
\begin{equation}\label{37}
Tr_{g}(\pi-k)=0.
\end{equation}
It follows that
\begin{equation}\label{38}
\overline{R}-|\overline{k}|_{\overline{g}}^{2}= 16\pi(\mu-J(v))+|k-\pi|_{g}^{2}+2u^{-1}div_{\overline{g}}(uQ),
\end{equation}
which yields the inequality in \eqref{16} after multiplying by $u$ and applying the divergence theorem; it is assumed that appropriate asymptotic conditions are imposed (see below) in order
to ensure that the boundary integrals vanish in each of the ends. Equation \eqref{37} is similar to previous Jang-type equations that have been used in connection with deformations of initial data, in particular
for the positive mass theorem \cite{SchoenYau} and the Penrose inequality \cite{BrayKhuri2}. These previous equations have the form
\begin{equation}\label{39}
Tr_{\overline{g}}(\pi-k)=0,
\end{equation}
where it is assumed that $u=1$ and $Y=0$ \cite{SchoenYau}, and $Y=0$ \cite{BrayKhuri2}. Note that \eqref{39} does not reduce to \eqref{37} even in the setting of \cite{SchoenYau} or \cite{BrayKhuri2}.
This suggests that there is a significant difference between these two equations. In fact, solutions of \eqref{37} do not blow-up, while solutions of \eqref{39}
typically blow-up at apparent horizons or can be prescribed to blow-up at these surfaces \cite{HanKhuri}. This separate behavior arises from the fact that the trace in \eqref{37}
is taken with respect to $g$, whereas the trace in \eqref{39} is taken with respect to $\overline{g}$. As a result, the analysis of \eqref{37} is much more simple than that of \eqref{39}.  Lastly, in order to ensure that $\overline{m}=m$, we will impose the following asymptotics
\begin{equation}\label{40}
|f|+r|\nabla f|_{g}+r^{2}|\nabla^{2} f|_{g}\leq cr^{-\varepsilon}\text{ }\text{ }\text{ in }\text{ }\text{ }M_{end}^{0},
\end{equation}
for some $0<\varepsilon<1$. A bounded solution may be obtained by prescribing the following asymptotics at the remaining ends
\begin{equation}\label{41}
r_{n}^{-1}|\nabla f|_{g}+r_{n}^{-2}|\nabla^{2} f|_{g}\leq c\text{ }\text{ }\text{ in asymptotically flat }\text{ }\text{ }M_{end}^{n},
\end{equation}
\begin{equation}\label{41.0}
|\nabla f|_{g}+|\nabla^{2} f|_{g}\leq cr_{n}^{\frac{1}{2}}\text{ }\text{ }\text{ in asymptotically cylindrical }\text{ }\text{ }M_{end}^{n}.
\end{equation}

At this point we have shown how to choose $f$ and $Y$, in order to produce a deformation of the initial data which satisfies \eqref{16}. It remains to choose $u$, in such a way as to facilitate
a proof of the mass-angular momentum inequality. This shall be accomplished in the next section.

\section{The Reduction Argument and Case of Equality}
\label{sec3} \setcounter{equation}{0}
\setcounter{section}{3}

Here we shall follow the maximal case proof of the mass-angular momentum inequality, within the setting of the deformed initial data $(\overline{M},\overline{g},\overline{k})$.
The primary stumbling block is a lack of the pointwise scalar curvature inequality as appearing in \eqref{16}. However a judicious choice of $u$ will overcome this difficulty.

Assuming that the functions $(u,Y^{\phi},f)$ are chosen to possess the appropriate asymptotics, the geometry of the ends will be preserved in the deformation. Since the deformed
data are also simply connected and axially symmetric, the results of \cite{Chrusciel} apply to yield a global Brill coordinate system $(\overline{\rho},\phi,\overline{z})$ such that
\begin{equation}\label{42}
\overline{g}=e^{-2\overline{U}+2\overline{\alpha}}(d\overline{\rho}^{2}+d\overline{z}^{2})
+\overline{\rho}^{2}e^{-2\overline{U}}(d\phi+A_{\overline{\rho}}d\overline{\rho}+A_{\overline{z}}d\overline{z})^{2}.
\end{equation}
Next, recall that \eqref{30} implies the existence of a twist potential $\overline{\omega}$.
An important property of the Brill coordinates is that they yield a simple formula for the mass (\cite{Brill}, \cite{Dain0})
\begin{equation}\label{43}
\overline{m}-\mathcal{M}(\overline{U},\overline{\omega})
=\frac{1}{32\pi}\int_{\mathbb{R}^{3}}\left(2e^{-2\overline{U}+2\overline{\alpha}}\!\text{ }\overline{R}
+\overline{\rho}^{2}e^{-2\overline{\alpha}}(A_{\overline{\rho},\overline{z}}-A_{\overline{z},\overline{\rho}})^{2}
-\overline{g}_{\phi\phi}^{\!\text{ }-2}|\partial\overline{\omega}|^{2}\right)dx,
\end{equation}
where $|\partial\overline{\omega}|$ and $dx$ denote the Euclidean norm and volume element, and
\begin{equation}\label{44}
\mathcal{M}(\overline{U},\overline{\omega})
=\frac{1}{32\pi}\int_{\mathbb{R}^{3}}\left(4|\partial\overline{U}|^{2}+\overline{g}_{\phi\phi}^{\!\text{ }-2}|\partial\overline{\omega}|^{2}\right)dx.
\end{equation}

Let
\begin{equation}\label{45}
e_{\overline{\rho}}=e^{\overline{U}-\overline{\alpha}}(\partial_{\overline{\rho}}-A_{\overline{\rho}}\partial_{\phi}),
\text{ }\text{ }\text{ }\text{ } e_{\overline{z}}=e^{\overline{U}-\overline{\alpha}}(\partial_{\overline{z}}-A_{\overline{z}}\partial_{\phi}),
\text{ }\text{ }\text{ }\text{ } e_{\phi}=\frac{1}{\sqrt{\overline{g}_{\phi\phi}}}\partial_{\phi},
\end{equation}
be an orthonormal frame. Then according to \eqref{7} and $\overline{g}_{\phi\phi}=g_{\phi\phi}$,
\begin{equation}\label{46}
\overline{k}(e_{\overline{\rho}},e_{\phi})=-\frac{1}{2|\eta|_{\overline{g}}^{2}}e_{\overline{z}}(\overline{\omega})=-\frac{e^{\overline{U}-\overline{\alpha}}}{2g_{\phi\phi}}\partial_{\overline{z}}\overline{\omega},
\text{ }\text{ }\text{ }\text{ } \overline{k}(e_{\overline{z}},e_{\phi})=\frac{1}{2|\eta|_{\overline{g}}^{2}}e_{\overline{\rho}}(\overline{\omega})=\frac{e^{\overline{U}-\overline{\alpha}}}{2g_{\phi\phi}}\partial_{\overline{\rho}}\overline{\omega}.
\end{equation}
In light of Lemma \ref{lemma1} it follows that
\begin{equation}\label{47}
|\overline{k}|_{\overline{g}}^{2}=2(\overline{k}(e_{\overline{\rho}},e_{\phi})^{2}+\overline{k}(e_{\overline{z}},e_{\phi})^{2})
=\frac{e^{2\overline{U}-2\overline{\alpha}}}{2g_{\phi\phi}^{2}}|\partial\overline{\omega}|^{2},
\end{equation}
and hence with the help of Theorem \ref{thm1} and the dominant energy condition
\begin{align}\label{48}
\begin{split}
\overline{m}-\mathcal{M}(\overline{U},\overline{\omega})
&\geq\frac{1}{32\pi}\int_{\mathbb{R}^{3}}2e^{-2\overline{U}+2\overline{\alpha}}(\overline{R}-|\overline{k}|_{\overline{g}}^{2})dx\\
&\geq\frac{1}{8\pi}\int_{\mathbb{R}^{3}}\frac{e^{-2\overline{U}+2\overline{\alpha}}}{u}div_{\overline{g}}(uQ)dx\\
&\geq\frac{1}{8\pi}\int_{\overline{M}}\frac{e^{\overline{U}}}{u}div_{\overline{g}}(uQ)dx_{\overline{g}},
\end{split}
\end{align}
where the volume element for $\overline{g}$ is given by $dx_{\overline{g}}=e^{-3\overline{U}+2\overline{\alpha}}dx$.

Inequality \eqref{48} suggests that we choose
\begin{equation}\label{49}
u=e^{\overline{U}}=\frac{\overline{\rho}}{\sqrt{\overline{g}_{\phi\phi}}}=\frac{\overline{\rho}}{\sqrt{g_{\phi\phi}}}.
\end{equation}
If $\overline{g}$ preserves the asymptotic geometry of $g$, then based on \eqref{10}, \eqref{11}, \eqref{12}
\begin{equation}\label{50}
u=1+o_{l-3}(r^{-\frac{1}{2}})\text{ }\text{ }\text{ as }\text{ }\text{ }r\rightarrow\infty\text{ }\text{ }
\text{ in }\text{ }\text{ }M_{end}^{0},
\end{equation}
\begin{equation}\label{51}
u=r_{n}^{2}+o_{l-4}(r_{n}^{\frac{5}{2}})\text{ }\text{ }\text{ as }\text{ }\text{ }r_{n}\rightarrow 0\text{ }\text{ }\text{ in asymptotically flat}\text{ }\text{ }M_{end}^{n},
\end{equation}
\begin{equation}\label{52}
u=r_{n}+o_{l-4}(r_{n}^{\frac{3}{2}})\text{ }\text{ }\text{ as }\text{ }\text{ }r_{n}\rightarrow 0\text{ }\text{ }\text{ in asymptotically cylindrical}\text{ }\text{ }M_{end}^{n},
\end{equation}
where $r_{n}$ is the Euclidean distance to the point $i_{n}$ defining the end. Therefore, with the help of
the asymptotics for $f$ \eqref{40}, \eqref{41} and $Y^{\phi}$ \eqref{31}, as well as the assumption
\begin{equation} \label{52.0}
|k|_{g}+|k(\partial_{\phi},\cdot)|_{g}+|k(\partial_{\phi},\partial_{\phi})|\leq c\text{ }\text{ }\text{ on }
\text{ }\text{ }M,
\end{equation}
the asymptotic boundary integrals arising from the right-hand side of \eqref{48} all vanish as long as $\mathcal{J}=\overline{\mathcal{J}}$. This is proven in Appendix C, Section \ref{sec9}. It follows that
\begin{equation}\label{53}
\overline{m}\geq\mathcal{M}(\overline{U},\overline{\omega}).
\end{equation}

\begin{theorem}\label{thm2}
Let $(M,g,k)$ be a smooth, simply connected, axially symmetric initial data set satisfying the dominant energy condition \eqref{2.1} and conditions \eqref{6}, \eqref{52.0}, and with two ends, one designated asymptotically flat and the other either asymptotically flat or asymptotically cylindrical. If the system of equations \eqref{30}, \eqref{37}, \eqref{49} admits a smooth solution $(u,Y^{\phi},f)$ satisfying the asymptotics \eqref{31}, \eqref{40}-\eqref{41.0}, \eqref{50}-\eqref{52}, then
\begin{equation}\label{54}
m\geq\sqrt{|\mathcal{J}|}
\end{equation}
and equality holds if and only if $(M,g,k)$ arises from an embedding into the extreme Kerr spacetime.
\end{theorem}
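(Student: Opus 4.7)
The plan is to reduce the theorem to the maximal mass-angular momentum inequality of Dain, Chru\'sciel--Li--Weinstein, and Schoen--Zhou by applying the deformation machinery of Section \ref{sec2}. Given the hypothesized smooth solution $(u, Y^{\phi}, f)$ of the system \eqref{30}, \eqref{37}, \eqref{49} with the prescribed asymptotics, Lemma \ref{lemma1} produces deformed initial data $(\overline{M}, \bg, \bk)$ that is maximal, while Lemma \ref{lemma2} together with the boundary condition \eqref{31} yields $\overline{\mathcal{J}} = \mathcal{J}$. The asymptotics \eqref{40}, \eqref{41}, \eqref{41.0} together with \eqref{50}--\eqref{52} guarantee that $(\overline{M}, \bg, \bk)$ inherits the topological and asymptotic structure of $(M, g, k)$, and in particular $\overline{m} = m$.

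I would then chain two inequalities. For the first, Theorem \ref{thm1} combined with equation \eqref{37} and the dominant energy condition \eqref{2.1} yields the pointwise bound $\overline{R} - |\bk|_{\bg}^{2} \geq 2u^{-1}\bdiv(uQ)$; inserting this into Brill's mass formula \eqref{43} and discarding the manifestly nonnegative twist term $\brho^{2}e^{-2\bal}(A_{\brho,\bz} - A_{\bz,\brho})^{2}$ produces the chain \eqref{48}. The divergence theorem applied to the resulting integral, together with the vanishing of boundary contributions verified in Appendix C using \eqref{31}, \eqref{40}--\eqref{41.0}, \eqref{50}--\eqref{52.0}, gives $m = \overline{m} \geq \mathcal{M}(\overline{U}, \overline{\omega})$. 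For the second inequality, since $(\overline{M}, \bg, \bk)$ is simply connected, axisymmetric, maximal, and admits the twist potential $\overline{\omega}$ supplied by \eqref{30}, the known maximal case result applies and delivers $\mathcal{M}(\overline{U}, \overline{\omega}) \geq \sqrt{|\overline{\mathcal{J}}|} = \sqrt{|\mathcal{J}|}$. Concatenation yields \eqref{54}.

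For the rigidity statement, suppose $m = \sqrt{|\mathcal{J}|}$. Then every inequality in the chain above must be saturated. The equality $\mathcal{M}(\overline{U}, \overline{\omega}) = \sqrt{|\overline{\mathcal{J}}|}$ invokes Dain's rigidity theorem, forcing $(\overline{U}, \overline{\omega})$ to coincide with the extreme Kerr harmonic map data, so that $(\overline{M}, \bg, \bk)$ is isometric to the canonical maximal slice of the extreme Kerr spacetime. The equality in \eqref{48} simultaneously forces $A_{\brho,\bz} = A_{\bz,\brho}$, and via \eqref{38} the pointwise identities $\mu - J(v) = 0$ and $k = \pi$. The last identity expresses that $(M, g, k)$ arises precisely as the $t = f(x)$ slice of the stationary Lorentzian spacetime \eqref{21} constructed over the extreme Kerr maximal slice, producing the desired isometric embedding into extreme Kerr; the converse direction is immediate from the explicit form of the solution.

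The main obstacle I anticipate is the rigidity analysis. One must argue carefully that $\mu - J(v) = 0$ together with the dominant energy condition \eqref{2.1} suffices to render the ambient stationary spacetime \eqref{21} a bona fide vacuum extension of $(M, g, k)$, and that the identification of $(\overline{M}, \bg, \bk)$ with the extreme Kerr slice (modulo the axis-regularity subtleties inherent in Dain's theorem) globalizes, via the graph relation $k = \pi$, to an isometric embedding of the original data into the full extreme Kerr spacetime rather than merely into a stationary spacetime locally modeled on it.
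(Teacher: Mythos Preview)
Your argument for the inequality \eqref{54} is correct and matches the paper's approach exactly: deform via $(u,Y^{\phi},f)$, invoke Theorem~\ref{thm1} with \eqref{37} and the dominant energy condition to control $\overline{R}-|\bk|_{\bg}^{2}$, feed this into the Brill mass formula \eqref{43}, kill the boundary terms using Appendix~C, and finish with the harmonic map lower bound $\mathcal{M}(\overline{U},\overline{\omega})\geq\sqrt{|\overline{\mathcal{J}}|}$.

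The rigidity argument, however, has a logical gap that you partially flag but misdiagnose. Saturation of $\mathcal{M}(\overline{U},\overline{\omega})\geq\sqrt{|\overline{\mathcal{J}}|}$ pins down only the pair $(\overline{U},\overline{\omega})$ as the extreme Kerr harmonic map; it does \emph{not} by itself determine $\overline{\alpha}$ or $A$, so you cannot yet conclude that $(\overline{M},\bg,\bk)$ is the extreme Kerr slice. The paper reverses your order: it first extracts from saturation of \eqref{48} the pointwise identities $\mu=|J|_{g}=0$ (via the decomposition $\mu-J(v)=(\mu-|J|_{g})+(1-|v|_{g})|J|_{g}+(|J|_{g}|v|_{g}-J(v))$ and $|v|_{g}<1$), $k=\pi$, and $A_{\brho,\bz}=A_{\bz,\brho}$, and then carries out an explicit orthonormal-frame computation of $\bdiv\bk$ using the special structure \eqref{23.1} of $\bk$ together with the closedness of $A$ to prove $\overline{J}=0$; a separate appeal to \eqref{0016} then gives $\overline{\mu}=0$. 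Only after the deformed data is shown to be \emph{vacuum} does the full Dain/Schoen--Zhou rigidity apply to force $(\overline{M},\bg,\bk)$ to be the extreme Kerr slice. Your ``obstacle'' paragraph speaks of making the ambient stationary spacetime vacuum, but what is actually required is that the deformed three-dimensional data be vacuum.

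There is a second step you omit entirely. Once $(\overline{M},\bg,\bk)$ is identified with the extreme Kerr $t=0$ slice, you still need to recognize the auxiliary fields: the paper invokes the \emph{uniqueness} of bounded solutions to \eqref{30} with boundary condition \eqref{31} (Theorem~\ref{thm6}) to conclude $Y^{\phi}=Y_{EK}^{\phi}$, and then $u=e^{\overline{U}}=e^{U_{EK}}=u_{EK}$ from \eqref{49}. Only with these identifications does the graph $t=f(x)$ genuinely sit inside $\mathbb{EK}^{4}$, so that $g$ coincides with the induced metric \eqref{70} and $k=\pi$ coincides with its second fundamental form.
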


\begin{proof}
The existence of a solution $(u,Y^{\phi},f)$ ensures that we may apply the maximal case proof to the deformed
initial data $(\overline{M},\overline{g},\overline{k})$ as above, arriving at the inequality \eqref{53}. The results of \cite{ChruscielLiWeinstein}, \cite{Dain0}, \cite{SchoenZhou} then imply that
\begin{equation}\label{55}
\mathcal{M}(\overline{U},\overline{\omega})\geq\sqrt{|\overline{\mathcal{J}}|}.
\end{equation}
Moreover, according to \eqref{16} $\overline{m}=m$ and $\overline{\mathcal{J}}=\mathcal{J}$, and hence \eqref{53} yields the desired inequality \eqref{54}.

Consider now the case of equality in \eqref{54}. In the process of deriving \eqref{53}, several nonnegative
terms were left out from the right-hand side. These terms arise from \eqref{33} and \eqref{43}. In the current
situation, they must all vanish
\begin{equation}\label{56}
|\mu-J(v)|=|k-\pi|_{g}=|A_{\overline{\rho},\overline{z}}-A_{\overline{z},\overline{\rho}}|=0.
\end{equation}
Furthermore, in light of the dominant energy condition, the fact that $|v|_{g}<1$, and the identity
\begin{equation}\label{57}
\mu-J(v)=(\mu-|J|_{g})+(1-|v|_{g})|J|_{g}+(|J|_{g}|v|_{g}-J(v)),
\end{equation}
it follows that
\begin{equation}\label{58}
\mu=|J|_{g}=0.
\end{equation}

We claim that $(\overline{M},\overline{g},\overline{k})$ is now a vacuum initial data set. By Lemma \ref{lemma1}
$Tr_{\overline{g}}\overline{k}=0$, so that the momentum density is given by
\begin{equation}\label{59}
8\pi \overline{J}=div_{\overline{g}}\overline{k}.
\end{equation}
Let $\{e_{i}\}_{i=1}^{3}$ denote the orthonormal basis \eqref{45} with $e_{3}=e_{\phi}$, then
\begin{align}\label{60}
\begin{split}
(div_{\overline{g}}\overline{k})(e_{i})&=\sum_{i=1}^{3}(\overline{\nabla}_{e_{j}}\overline{k})(e_{i},e_{j})\\
&=\sum_{j=1}^{3}\left[e_{j}(\overline{k}(e_{i},e_{j}))
-\sum_{a=1}^{3}\langle\overline{\nabla}_{e_{j}}e_{i},e_{a}\rangle\overline{k}(e_{a},e_{j})
-\sum_{a=1}^{3}\langle\overline{\nabla}_{e_{j}}e_{j},e_{a}\rangle\overline{k}(e_{i},e_{a})\right].
\end{split}
\end{align}
Assume now that $i\neq 3$, then by Lemma \ref{lemma1}
\begin{equation}\label{61}
\sum_{j=1}^{3}e_{j}(\overline{k}(e_{i},e_{j}))=0
\end{equation}
and
\begin{equation}\label{62}
(div_{\overline{g}}\overline{k})(e_{i})
=-\sum_{j=1}^{2}\langle\overline{\nabla}_{e_{j}}e_{i},e_{3}\rangle\overline{k}(e_{3},e_{j})
-\sum_{a=1}^{2}\langle\overline{\nabla}_{e_{3}}e_{i},e_{a}\rangle\overline{k}(e_{a},e_{3})
-\sum_{j=1}^{3}\langle\overline{\nabla}_{e_{j}}e_{j},e_{3}\rangle\overline{k}(e_{i},e_{3}).
\end{equation}
The last sum is zero since $\partial_{\phi}$ is a Killing field. Moreover
\begin{equation}\label{63}
\langle\overline{\nabla}_{e_{3}}e_{i},e_{a}\rangle=-\langle e_{i},\overline{\nabla}_{e_{3}}e_{a}\rangle
=-\langle e_{i},\overline{\nabla}_{e_{a}}e_{3}\rangle
=\langle\overline{\nabla}_{e_{a}}e_{i},e_{3}\rangle,
\end{equation}
since
\begin{equation}\label{64}
[\partial_{\phi},e_{a}]=\mathcal{L}_{\partial_{\phi}}e_{a}=0.
\end{equation}
Thus, we need only show that the first sum in \eqref{62} vanishes. To accomplish this, observe that
\begin{equation}\label{65}
\langle\overline{\nabla}_{e_{j}}e_{i},e_{3}\rangle=-\langle\overline{\nabla}_{e_{i}}e_{j},e_{3}\rangle
\end{equation}
as $\partial_{\phi}$ is Killing. Furthermore a direct computation shows that
\begin{equation}\label{66}
\langle[e_{\overline{\rho}},e_{\overline{z}}],e_{3}\rangle
=|\eta|e^{2\overline{U}-2\overline{\alpha}}(A_{\overline{\rho},\overline{z}}-A_{\overline{z},\overline{\rho}})
=0,
\end{equation}
where \eqref{56} was used. Therefore
\begin{equation}\label{67}
\langle\overline{\nabla}_{e_{j}}e_{i},e_{3}\rangle=\langle\overline{\nabla}_{e_{i}}e_{j},e_{3}\rangle,
\end{equation}
and it follows that the first sum in \eqref{62} vanishes. Hence $|\overline{J}|_{\overline{g}}=0$.

Consider now the energy density
\begin{equation}\label{68}
16\pi \overline{\mu}=\overline{R}+(Tr_{\overline{g}}\overline{k})^{2}-|\overline{k}|_{\overline{g}}^{2}
=\overline{R}-|\overline{k}|_{\overline{g}}^{2}.
\end{equation}
A lengthy computation \eqref{0016} in Appendix A shows that
\begin{equation}\label{69}
\overline{R}-|\overline{k}|_{\overline{g}}^{2}
=-2(div_{\overline{g}}\overline{k})(u\overline{\nabla} f)
+16\pi(\mu-J(v))+|k|_{g}^{2}-|\pi|_{g}^{2}+2(div_{g}k)(v)-2(div_{g}\pi)(v),
\end{equation}
when equation \eqref{37} is satisfied. However, $|\overline{J}|_{\overline{g}}=0$ and \eqref{56} imply that the
right-hand side vanishes. Thus $\overline{\mu}=0$, and $(\overline{M},\overline{g},\overline{k})$
is a vacuum initial data set.

Next, since $\overline{m}=\overline{\mathcal{J}}$ we may now apply the results of \cite{Dain0} and \cite{SchoenZhou} to conclude that $(\overline{M},\overline{g},\overline{k})$ is isometric to the $t=0$ slice
$(\mathbb{R}^{3}-\{0\},g_{EK},k_{EK})$ of the extreme Kerr spacetime $\mathbb{EK}^{4}$. Consider the map $M\rightarrow\mathbb{EK}^{4}$ given by $x\mapsto(x,f(x))$. The induced metric on the graph is given by
\begin{equation}\label{70}
(g_{EK})_{ij}-f_{i}(Y_{EK})_{j}-f_{j}(Y_{EK})_{i}-(u_{EK}^{2}-|Y_{EK}|_{g_{EK}}^{2}) f_{i}f_{j},
\end{equation}
where
\begin{equation}\label{71}
(k_{EK})_{ij}=\frac{1}{2u_{EK}}\left(\nabla^{EK}_{i}(Y_{EK})_{j}+\nabla^{EK}_{j}(Y_{EK})_{i}\right),
\end{equation}
and $(u_{EK},-Y_{EK})$ are the lapse and shift. If $\partial_{\phi}$ denotes the spacelike Killing field
in this spacetime, then $g_{EK}^{ij}(Y_{EK})_{j}\partial_{i}=Y_{EK}^{\phi}\partial_{\phi}$, and $Y_{EK}^{\phi}$
satisfies equation \eqref{30} with $\overline{g}$ replaced by $g_{EK}$, as well as boundary condition \eqref{31}. Since there is a unique solution to \eqref{30}, \eqref{31}, and $\overline{g}\cong g_{EK}$, we have that $Y=Y_{EK}$. Moreover it is a direct calculation to find that $u_{EK}=e^{U_{EK}}=e^{\overline{U}}=u$, where
$U_{EK}$ arises from the Brill coordinate expression for $g_{EK}$. It now follows from \eqref{19} and \eqref{20}
that $g$ agrees with the induced metric \eqref{70}. Furthermore, from \eqref{56} $\pi=k$, showing that the
second fundamental form of the embedding $(M,g)\hookrightarrow\mathbb{EK}^{4}$ is given by $k$. Therefore
the initial data $(M,g,k)$ arise from the extreme Kerr spacetime.

Lastly, if $(M,g,k)$ arises from extreme Kerr, then by the properties of this spacetime, equality in
\eqref{54} holds.
\end{proof}

Theorem \ref{thm2} reduces the proof of the mass-angular momentum inequality, in the general non-maximal case, to  the existence of a solution $(u,Y^{\phi},f)$ to the system of equations \eqref{30}, \eqref{37}, and \eqref{49}.
Notice that this is in fact a coupled system, as the definition of $u$ depends on $\overline{g}$. The first task, which is addressed in the next section, is to analyze the given asymptotic boundary value problems associated with each equation \eqref{30} and \eqref{37}. Before doing so, however, we record the reduction statement for multiple black holes. Let
\begin{equation}\label{72}
\mathcal{F}(\mathcal{J}_{1},\ldots,\mathcal{J}_{N})
\end{equation}
denote the numerical value of the action functional \eqref{44} evaluated at the harmonic map, from
$\mathbb{R}^{3}-\{\overline{\rho}=0\}$ to the two-dimensional hyperbolic space, constructed in Proposition 2.1
of \cite{ChruscielLiWeinstein}. Whether the square of this value agrees with
\begin{equation}\label{73}
\mathcal{J}=\sum_{n=1}^{N}\mathcal{J}_{n}
\end{equation}
is an important open problem. The proof of the following theorem is analogous to that of Theorem \ref{thm2}.

\begin{theorem}\label{thm3}
Let $(M,g,k)$ be a smooth, simply connected, axially symmetric initial data set satisfying the dominant energy condition \eqref{2.1} and conditions \eqref{6}, \eqref{52.0}, and with $N+1$ ends, one designated asymptotically flat and the others either asymptotically flat or asymptotically cylindrical. If the system of equations \eqref{30}, \eqref{37}, \eqref{49} admits a smooth solution $(u,Y^{\phi},f)$ satisfying the asymptotics \eqref{31}, \eqref{40}-\eqref{41.0}, \eqref{50}-\eqref{52}, then
\begin{equation}\label{74}
m\geq\mathcal{F}(\mathcal{J}_{1},\ldots,\mathcal{J}_{N}).
\end{equation}
\end{theorem}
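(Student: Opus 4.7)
The plan is to mirror the proof of Theorem \ref{thm2} almost verbatim, replacing the single-hole mass-angular momentum bound of \cite{Dain0},\cite{SchoenZhou} with the multi-hole harmonic map lower bound of \cite{ChruscielLiWeinstein}, and omitting the rigidity analysis. The first half of the argument goes through unchanged. Given a solution $(u,Y^{\phi},f)$ with the prescribed asymptotics, the deformation of Section \ref{sec2} produces $(\overline{M},\overline{g},\overline{k})$ satisfying \eqref{16}: Lemma \ref{lemma1} gives maximality; equation \eqref{30} yields $\overline{J}(\eta)=0$ and hence a twist potential $\overline{\omega}$; and Theorem \ref{thm1} combined with the choice \eqref{37} of $f$ produces the weak scalar curvature bound \eqref{38}. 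Inserting this into the Brill mass formula \eqref{43}, substituting $u=e^{\overline{U}}$ from \eqref{49}, and applying the dominant energy condition as in \eqref{48} yields
\begin{equation*}
\overline{m}\geq\mathcal{M}(\overline{U},\overline{\omega}).
\end{equation*}
The Appendix C boundary-integral computation must now be run at each inner end $M_{\text{end}}^{n}$ as well as at $M_{\text{end}}^{0}$, but the asymptotics \eqref{31}, \eqref{40}--\eqref{41.0}, \eqref{50}--\eqref{52} were designed precisely so that these boundary terms vanish at every end, whether asymptotically flat or cylindrical.

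The second half invokes Proposition 2.1 of \cite{ChruscielLiWeinstein}: the functional $\mathcal{M}$ is bounded below by its value at the harmonic map from $\mathbb{R}^{3}\setminus\{\overline{\rho}=0\}$ into hyperbolic two-space subject to axis boundary data determined by the $\overline{\mathcal{J}}_{n}$, which gives
\begin{equation*}
\mathcal{M}(\overline{U},\overline{\omega}) \geq \mathcal{F}(\overline{\mathcal{J}}_{1},\ldots,\overline{\mathcal{J}}_{N}).
\end{equation*}
Combining the two inequalities with $\overline{m}=m$ from \eqref{16} and the individual preservation $\overline{\mathcal{J}}_{n}=\mathcal{J}_{n}$ (addressed below) yields the desired \eqref{74}.

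The main obstacle is establishing this individual preservation, since \eqref{16} only records the conservation of the total $\overline{\mathcal{J}}=\mathcal{J}$. Via formula \eqref{14}, the claim reduces to showing that the jumps of the twist potential across consecutive axis intervals are unchanged by the deformation. A clean approach is the following: since $\overline{J}(\eta)=0$, the Komar integral $\tfrac{1}{8\pi}\oint_{\Sigma_{n}}\overline{k}(\nu,\eta)$ is invariant under homology in $\overline{M}$ minus the punctures, so that $\overline{\mathcal{J}}_{n}$ may be computed as the limit over a small coordinate sphere surrounding $i_{n}$. Using \eqref{23.1} to express $\overline{k}(\eta,\nu)$ in terms of the radial derivative of $Y^{\phi}$, this limit is governed by the leading behavior of $Y^{\phi}$ near $i_{n}$. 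The analysis of Sections \ref{sec5}--\ref{sec6} must therefore produce a solution of \eqref{30} whose asymptotic behavior at each inner puncture reproduces the original $\mathcal{J}_{n}$; this is natural because \eqref{30} is linear in $Y^{\phi}$ and its inhomogeneous data at each end is controlled by $k(\eta,\cdot)$ on the original manifold, so the matching is a multi-end generalization of the single-end computation carried out in Lemma \ref{lemma2}.
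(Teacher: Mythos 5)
Your reconstruction matches the paper's intent: the paper only asserts that the proof of Theorem \ref{thm3} is ``analogous to that of Theorem \ref{thm2},'' and your outline (deform, derive $\overline{m}\geq\mathcal{M}(\overline{U},\overline{\omega})$ from \eqref{48}, invoke the Chru\'sciel--Li--Weinstein harmonic map lower bound) is exactly the right skeleton, and omitting the rigidity discussion is appropriate. You also deserve real credit for spotting the subtlety that both the harmonic map estimate and the Appendix~C boundary computation require preservation of the \emph{individual} $\mathcal{J}_{n}$, not merely the total; this is a genuine issue that the paper's one-line remark glosses over.

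However, the final paragraph is where the argument is incomplete and partly off. First, \eqref{30} is a \emph{homogeneous} linear elliptic equation for $Y^{\phi}$ (the operator $L$ in \eqref{145} has no source term arising from $k(\eta,\cdot)$), so the statement that ``its inhomogeneous data at each end is controlled by $k(\eta,\cdot)$'' does not parse. Second, Lemma~\ref{lemma2} works only because \eqref{31} \emph{prescribes} the leading coefficient $-2\mathcal{J}/r^{3}$ of $Y^{\phi}$ at infinity; at an interior puncture $i_{n}$, the asymptotics \eqref{133}--\eqref{134} (and their obvious multi-end analogues) fix only the limiting constant $\mathcal{Y}_{n}$ and a fall-off rate, while the Komar integral $\overline{\mathcal{J}}_{n}$ is controlled by a \emph{subleading} coefficient of $Y^{\phi}$ near $i_{n}$ which is an output of the existence theory, not an input. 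Running the Appendix~C computation at $i_{n}$ gives, exactly as in \eqref{406}, a boundary contribution of the form $8\pi\mathcal{Y}_{n}(\mathcal{J}_{n}-\overline{\mathcal{J}}_{n})$. In the two-ended case the identity $\overline{\mathcal{J}}_{-}=\overline{\mathcal{J}}_{+}=\mathcal{J}_{+}=\mathcal{J}_{-}$ is forced by \eqref{30}, \eqref{13}, Lemma~\ref{lemma2}, and the divergence theorem; with $N\geq 2$ interior ends, those constraints only fix $\sum_{n}\overline{\mathcal{J}}_{n}=\sum_{n}\mathcal{J}_{n}$, so neither the vanishing of $\sum_{n}\mathcal{Y}_{n}(\mathcal{J}_{n}-\overline{\mathcal{J}}_{n})$ nor the identity $\mathcal{F}(\overline{\mathcal{J}}_{1},\ldots,\overline{\mathcal{J}}_{N})=\mathcal{F}(\mathcal{J}_{1},\ldots,\mathcal{J}_{N})$ follows. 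To close the gap, one must either strengthen the hypotheses of Theorem~\ref{thm3} to demand a solution $Y^{\phi}$ of \eqref{30} with prescribed angular momentum flux $\mathcal{J}_{n}$ at each puncture (a boundary value problem with $N-1$ additional constraints whose solvability would need to be addressed), or give an independent argument that the twist potential jumps are automatically preserved by the deformation; asserting that this is ``natural'' is not a substitute for such an argument.
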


\section{A Lower Bound for Area in Terms of Mass and Angular Momentum}
\label{sec4} \setcounter{equation}{0}
\setcounter{section}{4}

In this section we observe that the reduction argument given above, immediately applies to another
geometric inequality for axisymmetric black holes. Let $(M,g,k)$ be as in the previous section, with the
restriction that it possesses only two ends denoted $M_{end}^{\pm}$, such that $M_{end}^{+}$ is asymptotically flat and $M_{end}^{-}$ is either asymptotically flat or asymptotically cylindrical. Based on the heuristic
arguments of Section \ref{sec1} leading to the mass-angular momentum inequality \eqref{1}, combined with
the Hawking area theorem \cite{Hawking}, the following upper and lower bounds are derived \cite{DainKhuriWeinsteinYamada}
\begin{equation}\label{4.1}
m^2-\sqrt{m^4-\mathcal{J}^2}\leq\frac{A_{min}}{8\pi}
\leq m^2+\sqrt{m^4-\mathcal{J}^2},
\end{equation}
where $A_{min}$ is the minimum area required to enclose $M_{end}^{-}$. In \cite{DainKhuriWeinsteinYamada} the lower bound is established in the maximal case, and it is also shown that equality occurs if and only if the
initial data set is isometric to the $t=0$ slice of the extreme Kerr spacetime. The proof relies upon the mass-angular momentum inequality and the area-angular momentum inequality $A_{min}\geq8\pi|\mathcal{J}|$ (\cite{ClementJaramilloReiris}, \cite{DainReiris}). In the non-maximal case, the area-angular momentum inequality has also been established when $A_{min}$ is replaced by the area of a stable, axisymmetric, marginally outer trapped surface (\cite{ClementJaramilloReiris}, \cite{DainJaramilloReiris}). Thus, since we have shown (in the previous section) how to reduce the non-maximal case of the mass-angular momentum inequality to the problem of solving a coupled system of elliptic equations, an analogous lower bound for area may also be reduced to the
same problem. More precisely, Theorem \ref{thm2} combined with Theorem 1.1 in \cite{ClementJaramilloReiris} and the proof of a Theorem 2.5 in \cite{DainKhuriWeinsteinYamada}, produces the following result.

\begin{theorem}\label{thm4}
Let $(M,g,k)$ be a smooth, simply connected, axially symmetric initial data set satisfying the dominant energy condition \eqref{2.1} and conditions \eqref{6}, \eqref{52.0}, and with two ends, one designated asymptotically flat and the other either asymptotically flat or asymptotically cylindrical. If the data possesses a stable axisymmetric marginally outer trapped surface with area $\mathcal{A}$, and the system of equations \eqref{30}, \eqref{37}, \eqref{49} admits a smooth solution $(u,Y^{\phi},f)$ satisfying the asymptotics \eqref{31}, \eqref{40}-\eqref{41.0}, \eqref{50}-\eqref{52}, then
\begin{equation}\label{4.2}
\frac{\mathcal{A}}{8\pi}\geq m^2-\sqrt{m^4-\mathcal{J}^2}
\end{equation}
and equality holds if and only if $(M,g,k)$ arises from an embedding into the extreme Kerr spacetime.
\end{theorem}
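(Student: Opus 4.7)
The plan is to derive \eqref{4.2} by combining the mass-angular momentum reduction already supplied by Theorem \ref{thm2} with a non-maximal area-angular momentum inequality, following the scheme used in Theorem 2.5 of \cite{DainKhuriWeinsteinYamada}. First, notice that the hypotheses of Theorem \ref{thm4} are precisely those of Theorem \ref{thm2}, so the assumed solvability of the system \eqref{30}, \eqref{37}, \eqref{49} immediately yields $m^{2} \geq |\mathcal{J}|$.

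Next, I would invoke Theorem 1.1 of \cite{ClementJaramilloReiris}, which establishes $\mathcal{A} \geq 8\pi|\mathcal{J}|$ for any stable axisymmetric marginally outer trapped surface in a data set satisfying the dominant energy condition. This quasi-local inequality is proved by applying the MOTS stability operator together with a Poincar\'e-type estimate on the cross-section, and crucially it does not require the ambient slice to be maximal. Consequently it may be applied directly to $(M,g,k)$, without appealing to the deformation at all. Combining the two bounds via the elementary algebraic fact $a - \sqrt{a^{2} - b^{2}} \leq b$ for all $a \geq b \geq 0$ (which follows from $(a-b)^{2} \leq (a-b)(a+b) = a^{2} - b^{2}$), with $a = m^{2}$ and $b = |\mathcal{J}|$, one obtains
\begin{equation*}
m^{2} - \sqrt{m^{4} - \mathcal{J}^{2}} \leq |\mathcal{J}| \leq \frac{\mathcal{A}}{8\pi},
\end{equation*}
which is precisely \eqref{4.2}.

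For the rigidity statement, suppose equality holds in \eqref{4.2}. Then both inequalities in the chain above must saturate, which in particular forces $m^{2} - \sqrt{m^{4} - \mathcal{J}^{2}} = |\mathcal{J}|$, equivalently $m^{2} = |\mathcal{J}|$. Thus equality is attained in the mass-angular momentum inequality of Theorem \ref{thm2}, whose rigidity statement identifies $(M,g,k)$ with a $t=0$ slice of the extreme Kerr spacetime. Conversely, a direct computation on extreme Kerr shows that its canonical slice carries a MOTS (the horizon cross-section) of area $8\pi|\mathcal{J}|$, which together with $m^{2} = |\mathcal{J}|$ gives equality in \eqref{4.2}.

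The principal obstacle is conceptual rather than analytic: one must select the correct form of the area-angular momentum inequality. The version bounding $A_{min}$ from below is at present known only under the maximal (or near-maximal) hypothesis, whereas the stable-MOTS version of \cite{ClementJaramilloReiris} requires only the dominant energy condition and so dovetails cleanly with the non-maximal setting of Theorem \ref{thm2}. All remaining steps reduce to the algebraic manipulation above and to the rigidity already established in Theorem \ref{thm2}.
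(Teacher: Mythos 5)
Your proposal is correct and follows essentially the same route the paper indicates: combine the mass–angular momentum inequality from Theorem~\ref{thm2} with the stable-MOTS area–angular momentum inequality from \cite{ClementJaramilloReiris}, in the manner of Theorem~2.5 of \cite{DainKhuriWeinsteinYamada}. Two small points worth making explicit. First, the angular momentum appearing in the area inequality is that of the MOTS, so you should note that condition \eqref{6} guarantees conservation of angular momentum, which identifies it with the ADM angular momentum $\mathcal{J}$ of \eqref{5}. Second, in the rigidity direction, the equation $m^{2}-\sqrt{m^{4}-\mathcal{J}^{2}}=|\mathcal{J}|$ factors as $2|\mathcal{J}|\left(|\mathcal{J}|-m^{2}\right)=0$, so it is not by itself equivalent to $m^{2}=|\mathcal{J}|$; one must also rule out $\mathcal{J}=0$, which follows since $\mathcal{A}>0$ forces $m^{2}-\sqrt{m^{4}-\mathcal{J}^{2}}>0$ and hence $\mathcal{J}\neq 0$. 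With these two remarks added, your argument is complete and matches the paper's intended proof.
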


\section{The Equation for $f$}
\label{sec5} \setcounter{equation}{0}
\setcounter{section}{5}

Let $(M,g,k)$ be a simply connected, axisymmetric initial data set with two ends denoted $M_{end}^{\pm}$,
such that $M_{end}^{+}$ is asymptotically flat and $M_{end}^{-}$ is either asymptotically flat or asymptotically cylindrical. As discussed above there is a global Brill coordinate system $(\rho,\phi,z)$ in which the metric takes the form \eqref{9}. Here we make a change of coordinates to $(r,\phi,\theta)$, where $\rho=r\sin\theta$ and $z=r\cos\theta$. The metric may then be expressed by
\begin{equation}\label{75}
g=e^{-2U+2\alpha}(dr^{2}+r^{2}d\theta^{2})+e^{-2U}r^{2}\sin^{2}\theta(d\phi+A_{r}dr+A_{\theta}d\theta)^{2}.
\end{equation}
In addition to \eqref{10}-\eqref{12}, it is assumed that the initial data and $u$ satisfy the following asymptotics
\begin{equation}\label{76}
u=1+o_{2}(r^{-\frac{1}{2}}),\text{ }\text{ }\text{ }\text{ }Tr_{g}k=O_{2}(r^{-2-\varepsilon}),\text{ }\text{ }\text{ in }\text{ }\text{ }M_{end}^{+},
\end{equation}
for some $\varepsilon\in(0,1)$, and
\begin{equation}\label{77}
u=r^{2}+o_{2}(r^{\frac{5}{2}}),\text{ }\text{ }\text{ }\text{ }Tr_{g}k=O_{2}(r^{4}),\text{ }\text{ }\text{ in asymptotically flat }\text{ }\text{ }M_{end}^{-},
\end{equation}
\begin{equation}\label{78}
u=r+o_{2}(r^{\frac{3}{2}}),\text{ }\text{ }\text{ }\text{ }Tr_{g}k=O_{2}(r^{\frac{3}{2}}),\text{ }\text{ }\text{ in asymptotically cylindrical }\text{ }\text{ }M_{end}^{-}.
\end{equation}
Note that the asymptotics for $u$ are consistent with the choice \eqref{49} and the asymptotics
\eqref{10}-\eqref{12}, while the asymptotics for $Tr_{g}k$ are weaker in $M_{end}^{+}$, and stronger in asymptotically flat $M_{end}^{-}$, as compared with \eqref{3}.

In local coordinates, with the help of \eqref{35}, equation \eqref{37} is given by
\begin{equation}\label{79}
g^{ij}\left(\frac{u\nabla_{ij}f+u_{i}f_{j}+u_{j}f_{i}}{\sqrt{1+u^{2}|\nabla f|^{2}}}-k_{ij}\right)=0.
\end{equation}
Observe that this equation may also be expressed in divergence form
\begin{equation}\label{80}
div_{g}(u^{2}\nabla f)=u(Tr_{g}k)\sqrt{1+u^{2}|\nabla f|_{g}^{2}}.
\end{equation}
The desired asymptotics are
\begin{equation}\label{81}
|f|+r|\nabla f|_{g}+r^{2}|\nabla^{2} f|_{g}\leq cr^{-\varepsilon}\text{ }\text{ }\text{ in }\text{ }\text{ }M_{end}^{+},
\end{equation}
\begin{equation}\label{82}
r^{-1}|\nabla f|_{g}+r^{-2}|\nabla^{2} f|_{g}\leq c\text{ }\text{ }\text{ in asymptotically flat }\text{ }\text{ }M_{end}^{-},
\end{equation}
\begin{equation}\label{83}
|\nabla f|_{g}+|\nabla^{2} f|_{g}\leq cr^{\frac{1}{2}}\text{ }\text{ }\text{ in asymptotically cylindrical }\text{ }\text{ }M_{end}^{-},
\end{equation}
where $c$ is a constant.

We first solve \eqref{80} on the annular domain $\Omega_{\mathbf{r}}=\{(r,\phi,\theta)\mid \mathbf{r}^{-1}
<r<\mathbf{r}\}$, with zero Dirichlet boundary conditions
\begin{equation}\label{83.1}
\Delta_{g}f+2\nabla\log u\cdot\nabla f=u^{-1}(Tr_{g}k)\sqrt{1+u^{2}|\nabla f|_{g}^{2}}\text{ }\text{ }
\text{ in }\text{ }\text{ }\Omega_{\mathbf{r}}, \text{ }\text{ }f=0\text{ }\text{ }\text{ on }\text{ }\text{ }
\partial\Omega_{\mathbf{r}}.
\end{equation}

\begin{prop}\label{prop1}
Given initial data $(M,g,k)$ and a smooth positive function $u$, there exists a unique, smooth, uniformly bounded (independent of $\mathbf{r}$) solution $f$ of \eqref{83.1}.
\end{prop}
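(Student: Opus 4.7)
My plan is to recast the Dirichlet problem in the divergence form
\[
div_g(u^2\nabla f) = u\,\tau\,\sqrt{1+u^2|\nabla f|_g^2},\qquad \tau := Tr_g k,\qquad f|_{\partial\Omega_{\mathbf r}}=0,
\]
so that the quasilinear structure becomes transparent: the principal part $f\mapsto div_g(u^2\nabla f)$ is a \emph{linear} uniformly elliptic operator on the smooth bounded annulus $\Omega_{\mathbf r}$, and the nonlinearity is confined to a zero-order term that grows at most \emph{linearly} in $|\nabla f|$ as $|\nabla f|\to\infty$. This places the problem in the standard setting of Chapter 15 of Gilbarg--Trudinger, and I would attack it by the method of continuity applied to
\[
F_\sigma(f) = div_g(u^2\nabla f) - \sigma\, u\,\tau\,\sqrt{1+u^2|\nabla f|_g^2}=0,\qquad \sigma\in[0,1],
\]
starting from the trivial solution $f\equiv 0$ at $\sigma=0$. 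Openness at each $\sigma$ is immediate: the linearization is a linear uniformly elliptic operator on $\Omega_{\mathbf r}$ with zero Dirichlet data and no zero-order term, hence invertible by the Fredholm alternative together with the strong maximum principle.

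\textbf{A priori estimates.} Closedness, together with the uniform-in-$\mathbf r$ bound claimed in the statement, both reduce to a priori $C^{2,\alpha}$ estimates independent of $\sigma$ and $\mathbf r$. The $L^\infty$ bound is obtained by constructing a single pair of globally defined, uniformly bounded sub- and supersolutions on all of $M$, built from radial profiles matched to the prescribed decay of $u$ and $\tau$ at both ends in \eqref{76}--\eqref{78}; comparison on $\Omega_{\mathbf r}$ then pins $f$ between these two global functions. The interior gradient estimate follows from a Bernstein-type argument applied to an auxiliary quantity of the form $\log(1+u^2|\nabla f|_g^2)+\lambda f$: at an interior maximum, the positive commutator terms coming from the non-degenerate principal part dominate the contribution of $\sqrt{1+u^2|\nabla f|^2}$ from the right-hand side, provided $\lambda$ is chosen large in terms of the $L^\infty$ bound and the data. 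Boundary gradient control is produced by local linear barriers built from the signed distance to the two coordinate-sphere components of $\partial\Omega_{\mathbf r}$, using the $L^\infty$ bound as input. Once $\|f\|_{C^1}$ is controlled, the equation is linear uniformly elliptic with H\"older coefficients, so Schauder estimates and bootstrapping yield the $C^{2,\alpha}$ bound and full smoothness.

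\textbf{Uniqueness and main obstacle.} Uniqueness is essentially free: if $f_1,f_2$ both solve the problem, the difference $w=f_1-f_2$ satisfies the homogeneous linear equation
\[
div_g(u^2\nabla w) - B^i\nabla_i w = 0,\qquad w|_{\partial\Omega_{\mathbf r}}=0,
\]
with
\[
B^i = \frac{u^3\,\tau\,\nabla^i(f_1+f_2)}{\sqrt{1+u^2|\nabla f_1|_g^2}+\sqrt{1+u^2|\nabla f_2|_g^2}},
\]
so the strong maximum principle forces $w\equiv 0$. The step I expect to be the main obstacle is the \emph{uniformity in $\mathbf r$} of the $L^\infty$ bound: because $\Omega_{\mathbf r}$ exhausts $M$ as $\mathbf r\to\infty$, any barrier that deteriorates at one of the ends would be useless, so the global sub/supersolutions have to be matched simultaneously to the asymptotics of $u$ and $\tau$ at both ends, with the decay rates prescribed in \eqref{76}--\eqref{78} used essentially. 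The gradient and higher-order estimates, by contrast, are relatively routine in this setting because --- as emphasized in the discussion following \eqref{37} --- tracing with $g$ rather than $\overline g$ keeps the principal part non-degenerate and rules out the gradient blow-up that afflicts the classical Jang equation.
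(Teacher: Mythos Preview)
Your proposal is correct and follows essentially the same route as the paper: continuity method with the same homotopy, openness via the linearization having no zeroth-order term, $C^0$ control by globally defined radial sub/supersolutions matched to the asymptotics \eqref{76}--\eqref{78} (this is precisely the paper's Lemma~\ref{lemma3}), higher-order estimates from the linear growth of the right-hand side in $|\nabla f|$, and the identical uniqueness argument via the maximum principle applied to the difference. The only cosmetic difference is that the paper simply invokes ``standard techniques for linear elliptic equations'' once the linear-growth structure and the $C^0$ bound are in hand, whereas you sketch an explicit Bernstein-type interior gradient estimate and boundary barriers; either is fine here since the principal part is genuinely linear.
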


\begin{proof}
We will employ the continuity method. Thus, consider the family of equations
\begin{equation}\label{84}
div_{g}(u^{2}\nabla f_{s})=su(Tr_{g}k)\sqrt{1+u^{2}|\nabla f_{s}|_{g}^{2}}\text{ }\text{ }
\text{ in }\text{ }\text{ }\Omega_{\mathbf{r}}, \text{ }\text{ }f_{s}=0\text{ }\text{ }\text{ on }\text{ }\text{ }
\partial\Omega_{\mathbf{r}},
\end{equation}
and the set $\mathcal{S}=\{s\in[0,1]\mid \text{ there exists a unique solution $f_{s}\in C^{2,\beta}(\Omega_{\mathbf{r}})$ of \eqref{84}}\}$.  Clearly $\mathcal{S}$ is nonempty, since the case $s=0$ is solved by $f_{0}=0$. Moreover $\mathcal{S}$ is open by the implicit function theorem, since the linearized equation is strictly elliptic with no zeroth order term. It remains to show that $\mathcal{S}$ is closed. This
will be based on the construction of radial sub and super solutions.

Let $\overline{f}=\overline{f}(r)$ be a radial function. Then
\begin{align}\label{85}
\begin{split}
\Delta_{g}\overline{f}&=\frac{1}{\sqrt{\det g}}\partial_{i}\left(\sqrt{\det g}g^{ij}\partial_{j}\overline{f}\right)\\
&=\frac{1}{\sqrt{\det g}}\partial_{r}\left(\sqrt{\det g}g^{rr}\partial_{r}\overline{f}\right)\\
&=g^{rr}\left(\overline{f}''+\partial_{r}\log(r^{2}e^{-U})\overline{f}'\right),
\end{split}
\end{align}
where we have used
\begin{equation}\label{86}
g^{rr}=e^{2U-2\alpha},\text{ }\text{ }\text{ }g^{r\theta}=0,\text{ }\text{ }\text{ }
g^{\theta\theta}=r^{-2}e^{2U-2\alpha},\text{ }\text{ }\text{ }\det g=r^{4}e^{-6U+4\alpha}\sin^{2}\theta.
\end{equation}
Notice also that
\begin{equation}\label{87}
\nabla\log u\cdot\nabla\overline{f}=g^{rr}(\partial_{r}\log u)\overline{f}',\text{ }\text{ }\text{ }\text{ }
\nabla f\cdot\nabla\overline{f}=g^{rr}(\partial_{r}f)\overline{f}'.
\end{equation}
Let $\widetilde{u}=\widetilde{u}(r)>0$, $\widetilde{U}=\widetilde{U}(r)$, and $h=h(r)>0$ be radial functions such that
\begin{equation}\label{88}
u-\widetilde{u}=o_{1}(r^{\frac{7}{2}}),\text{ }\text{ }\text{ }\text{ }
e^{U}-e^{\widetilde{U}}=o_{1}(r^{\frac{7}{2}}),\text{ }\text{ }\text{ as }\text{ }\text{ }r\rightarrow 0,
\end{equation}
\begin{equation}\label{88.1}
\text{ }\text{ }\text{ }\text{ } u-\widetilde{u}=o_{1}(r^{-\frac{1}{2}}),\text{ }\text{ }\text{ }\text{ }
e^{U}-e^{\widetilde{U}}=o_{1}(r^{-\frac{1}{2}}),\text{ }\text{ }\text{ as }\text{ }\text{ }r\rightarrow\infty,
\end{equation}
\begin{equation}\label{88.2}
\frac{(g^{rr}u)^{-1}Tr_{g}k}{h}=o(1)\text{ }\text{ }\text{ as }\text{ }\text{ }
r\rightarrow 0\text{ }\text{ }\text{ and }\text{ }\text{ }r\rightarrow\infty,\text{ }\text{ }\text{ }\text{ }(g^{rr}u)^{-1}|Tr_{g}k|\leq h,
\end{equation}
and set $X_{s}=u|\nabla f_{s}|_{g}^{-1}\nabla f_{s}$, then
\begin{align}\label{89}
\begin{split}
&\Delta_{g}\overline{f}+2\nabla\log u\cdot\nabla\overline{f}-u^{-1}(Tr_{g}k)X_{s}\cdot\nabla\overline{f}\\
=&g^{rr}\left(\overline{f}''+\partial_{r}\log(r^{2}u^{2}e^{-U})\overline{f}'
-(Tr_{g}k)\frac{\partial_{r}f_{s}}{|\nabla f_{s}|_{g}}\overline{f}'\right)\\
=&g^{rr}\left(\overline{f}''+\partial_{r}\log(r^{2}\widetilde{u}^{2}e^{-\widetilde{U}})\overline{f}'
+\left(\partial_{r}\log\left(\frac{u^{2}e^{-U}}{\widetilde{u}^{2}e^{-\widetilde{U}}}\right)
-(Tr_{g}k)\frac{\partial_{r}f_{s}}{|\nabla f_{s}|_{g}}\right)\overline{f}'\right).
\end{split}
\end{align}

\begin{lemma}\label{lemma3}
Given initial data $(M,g,k)$ and a smooth positive function $u$ satisfying \eqref{10}-\eqref{12} and \eqref{76}-\eqref{78}, there exist negative $($positive$)$ radial sub $($super$)$ solutions $\underline{f}$ $(\overline{f})$ of \eqref{84}, which are independent of $s$ and uniformly bounded, and satisfy the asymptotics \eqref{81}-\eqref{83}.
\end{lemma}

\begin{proof}
The super solution $\overline{f}$ will be chosen as a solution of the ODE
\begin{equation}\label{90}
\overline{f}''+\left(\partial_{r}\log(r^{2}\widetilde{u}^{2}e^{-\widetilde{U}})-b\right)\overline{f}'
=-h,
\end{equation}
where $b=b(r)>0$ is a uniformly bounded function chosen so that
\begin{equation}\label{91}
\left|\partial_{r}\log\left(\frac{u^{2}e^{-U}}{\widetilde{u}^{2}e^{-\widetilde{U}}}\right)\right|
+|Tr_{g}k|\leq b,\text{ }\text{ }\text{ }\text{ }b(r)=O(r^{-\frac{3}{2}})\text{ }\text{ }\text{ as }\text{ }\text{ }r\rightarrow\infty.
\end{equation}
Consider the solution of \eqref{90} with $\lim_{r\rightarrow\infty}\overline{f}(r)=0$ and $\overline{f}'(0)=0$, that is
\begin{equation}\label{92}
\overline{f}(r)
=\int_{r}^{\infty}\left(r^{-2}\widetilde{u}^{-2}e^{\widetilde{U}}e^{\int_{0}^{r}b}
\int_{0}^{r}r^{2}\widetilde{u}^{2}e^{-\widetilde{U}}e^{-\int_{0}^{r}b}h\right).
\end{equation}
In order to see that \eqref{92} is indeed a super solution, use the fact that $\overline{f}'\leq 0$ along with \eqref{89}-\eqref{91} and the definition of $h$ to find
\begin{align}\label{93}
\begin{split}
\Delta_{g}\overline{f}+2\nabla\log u\cdot\nabla\overline{f}
=&g^{rr}\left(-h
+\left(b+\partial_{r}\log\left(\frac{u^{2}e^{-U}}{\widetilde{u}^{2}e^{-\widetilde{U}}}\right)\right)\overline{f}'\right)\\
\leq &g^{rr}(-h-|Tr_{g}k||\overline{f}'|)\\
\leq& -u^{-1}|Tr_{g}k|(1+u|\nabla \overline{f}|_{g})\\
\leq&-u^{-1}|Tr_{g}k|\sqrt{1+u^{2}|\nabla \overline{f}|_{g}^{2}}\\
\leq&su^{-1}(Tr_{g}k)\sqrt{1+u^{2}|\nabla \overline{f}|_{g}^{2}}.
\end{split}
\end{align}
Moreover, with the help of \eqref{76}-\eqref{78} and \eqref{88}-\eqref{88.2}, it is readily checked that $\overline{f}$ satisfies the desired asymptotics. Finally, analogous methods may also be used to construct a subsolution.
\end{proof}

We are now in a position to make uniform $C^{0}$ bounds for solutions of \eqref{84}. Thus, use the third line of
\eqref{93} to find that
\begin{align}\label{94}
\begin{split}
\Delta_{g}(f_{s}-\overline{f})+2\nabla\log u\cdot\nabla(f_{s}-\overline{f})
\geq& su^{-1}(Tr_{g}k)\sqrt{1+u^{2}|\nabla f_{s}|_{g}^{2}} +u^{-1}|Tr_{g}k|(1+u|\nabla \overline{f}|_{g})\\
\geq&su^{-1}|Tr_{g}k|\left(-1-X_{s}\cdot\nabla f_{s}+1+\frac{u\nabla f_{s}}{|\nabla f_{s}|_{g}}\cdot\nabla\overline{f}\right)\\
=&-su^{-1}|Tr_{g}k|X_{s}\cdot\nabla(f_{s}-\overline{f}).
\end{split}
\end{align}
Since $\overline{f}>0$, we have that $(f_{s}-\overline{f})|_{\partial\Omega_{\mathbf{r}}}<0$. Analogous but opposite inequalities hold when applying the sub solution. It now follows from a comparison argument that
\begin{equation}\label{95}
\underline{f}\leq f_{s}\leq\overline{f}.
\end{equation}

In order to produce higher order estimates, observe that the right-hand side of equation \eqref{84} is of linear growth in the first derivatives of $f_{s}$. Therefore, by slightly modifying standard techniques applied to the Dirichlet problem for linear elliptic equations, we obtain uniform $C^{2,\gamma}$ estimates for any $\gamma\in[0,1]$, with the help of \eqref{95}. It follows that $\mathcal{S}$ is closed, yielding a solution
$f=f_{1}\in C^{2,\beta}(\Omega_{\mathbf{r}})$ of \eqref{83.1}.  Elliptic regularity then implies that $f$ is
smooth.

In order to prove uniqueness, assume that two solutions $f_{1}$ and $f_{2}$ exist, then
\begin{align}\label{96}
\begin{split}
\Delta_{g}(f_{1}-f_{2})+2\nabla\log u\cdot\nabla(f_{1}-f_{2})
=& u^{-1}(Tr_{g}k)\left(\sqrt{1+u^{2}|\nabla f_{1}|^{2}} -\sqrt{1+u^{2}|\nabla f_{2}|^{2}}\right)\\
=&u(Tr_{g}k)\left(\frac{\nabla(f_{1}+f_{2})\cdot\nabla(f_{1}-f_{2})}
{\sqrt{1+u^{2}|\nabla f_{1}|^{2}} +\sqrt{1+u^{2}|\nabla f_{2}|^{2}}}\right).
\end{split}
\end{align}
Since $(f_{1}-f_{2})|_{\partial\Omega_{\mathbf{r}}}=0$, the maximum/minimum principle implies that
$f_{1}-f_{2}\equiv 0$.
\end{proof}

We are now ready to establish the main theorem of this section.

\begin{theorem}\label{thm5}
Given initial data $(M,g,k)$ and a smooth positive function $u$ satisfying \eqref{10}-\eqref{12} and \eqref{76}-\eqref{78}, there exists a smooth uniformly bounded solution $f$ of \eqref{79} satisfying the
asymptotics \eqref{81}-\eqref{83}.
\end{theorem}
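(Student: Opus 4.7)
The plan is to obtain the global solution $f$ as a limit of the solutions $f_{\mathbf{r}}$ constructed in Proposition \ref{prop1} on the annular exhaustion $\Omega_{\mathbf{r}}=\{\mathbf{r}^{-1}<r<\mathbf{r}\}$, and then to promote the uniform $C^{0}$ bounds inherited from the sub/super solutions $\underline{f}$, $\overline{f}$ of Lemma \ref{lemma3} into the desired asymptotic decay at all three ends.

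First, pick a sequence $\mathbf{r}_{j}\to\infty$. For each $j$, Proposition \ref{prop1} yields a unique smooth solution $f_{j}=f_{\mathbf{r}_{j}}$ of \eqref{83.1} on $\Omega_{\mathbf{r}_{j}}$ satisfying
\begin{equation*}
\underline{f}\leq f_{j}\leq \overline{f},
\end{equation*}
where the barriers are independent of $j$ and uniformly bounded on $M$. Fix a compact set $K\Subset M$; for $j$ large, $K\Subset\Omega_{\mathbf{r}_{j}}$, and \eqref{79} is uniformly elliptic on $K$ with coefficients controlled in terms of $g$, $u$, and the uniform $C^{0}$ bound on $f_{j}$. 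Because the right-hand side is of linear growth in $\nabla f_{j}$, a standard gradient estimate (Ladyzhenskaya--Ural'tseva-type) applied on a slightly larger compact set gives uniform $C^{1}$ bounds on $K$, and Schauder estimates then yield uniform $C^{2,\gamma}$ bounds on $K$ for every $\gamma\in(0,1)$. A diagonal subsequence argument extracts a subsequence converging in $C^{2,\gamma}_{\mathrm{loc}}(M)$ to a smooth solution $f$ of \eqref{79} on all of $M$, which still satisfies $\underline{f}\leq f\leq \overline{f}$ and is therefore uniformly bounded.

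Next, I would verify the asymptotics \eqref{81}--\eqref{83}. The $C^{0}$ part follows directly from the barriers $\underline{f}$, $\overline{f}$ constructed in Lemma \ref{lemma3}, which were shown there to satisfy these asymptotics. To upgrade to $|\nabla f|_{g}$ and $|\nabla^{2} f|_{g}$ estimates, I would run a standard scaling/rescaling argument at each end. In $M_{\mathrm{end}}^{+}$, for large $r_{0}$, rescale by setting $\tilde f(y)=r_{0}^{\varepsilon}f(r_{0}+r_{0}y_{1},\dots)$ on a unit annulus in $y$-coordinates; the rescaled equation becomes uniformly elliptic with coefficients bounded in $C^{k}$ (thanks to \eqref{10} and \eqref{76}) and with a right-hand side that is uniformly $O(r_{0}^{-1/2-\varepsilon})$, so interior Schauder estimates give $\|\tilde f\|_{C^{2,\gamma}}\leq c$. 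Unscaling yields $r|\nabla f|_{g}+r^{2}|\nabla^{2}f|_{g}\leq cr^{-\varepsilon}$, which is \eqref{81}. The same method applied at the asymptotically flat end $M_{\mathrm{end}}^{-}$, rescaling by the (small) distance $r_{n}$ and using \eqref{11} and \eqref{77}, yields \eqref{82}; and at the cylindrical end one rescales with the factor $r^{-1}$ and uses \eqref{12}, \eqref{78} to obtain \eqref{83}.

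The main technical obstacle I anticipate is ensuring that the rescaled problems at the small-$r$ ends $M_{\mathrm{end}}^{-}$ really do degenerate in a uniformly controllable manner: the metric coefficients in $g$ blow up like powers of $r_{n}^{-1}$ while $u$ vanishes like $r_{n}^{2}$ or $r_{n}$, so the correct scaling has to be chosen so that the principal part of \eqref{79} (after multiplying through by an appropriate power of $r_{n}$) becomes uniformly elliptic with bounded $C^{k}$ coefficients. Once this scaling is in place, the estimates are routine interior Schauder bounds, but the bookkeeping of powers needed to match \eqref{82} and \eqref{83} precisely is the delicate point. Uniqueness is not asserted in the theorem, so only the existence of one such $f$ needs to be extracted from the limit procedure.
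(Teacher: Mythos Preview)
Your existence argument and the scaling argument at $M_{end}^{+}$ match the paper's proof essentially verbatim, and your treatment of the asymptotically flat $M_{end}^{-}$ is equivalent to the paper's (the paper performs the inversion $x\mapsto x/|x|^{2}$ to obtain asymptotically flat coordinates and then runs the same Schoen--Yau rescaling, which is your ``rescale by $r_{n}$'' phrased differently).

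The genuine gap is at the asymptotically cylindrical end. A rescaling argument there yields only $|\nabla f|_{g}+|\nabla^{2}f|_{g}\leq c$, not the required $O(r^{1/2})$ decay. In cylindrical coordinates $\tau=\log r$ the metric is $d\tau^{2}+g_{S^{2}}+O(e^{\tau/2})$, so a unit $g$-ball around $\tau_{0}$ corresponds to a fixed-size region regardless of $\tau_{0}$; interior Schauder on such a ball controls $|\nabla f|_{g}$ by $|f|_{C^{0}}+|\text{RHS}|_{C^{0,\alpha}}$. Since the barriers only tell you $f$ is \emph{bounded} (not that $f-\text{const}=O(e^{\tau/2})$), the $C^{0}$ term contributes $O(1)$ and no decay emerges. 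The paper closes this gap by a separation-of-variables argument: expanding $f(\tau,\cdot)=\sum d_{i}(\tau)\psi_{i}$ in spherical harmonics reduces \eqref{83.1} to the ODEs $d_{i}''+2d_{i}'-\lambda_{i}d_{i}=P_{i}$ with $P_{i}=O(e^{\tau/2})$. Boundedness of $f$ kills the growing homogeneous mode $e^{(-1-\sqrt{1+\lambda_{i}})\tau}$; for $i\geq 1$ the surviving mode $e^{(-1+\sqrt{1+\lambda_{i}})\tau}$ decays as $\tau\to-\infty$, while for $i=0$ one gets $d_{0}=c_{20}+O(e^{\tau/2})$. This spectral step is what forces $f$ to approach a constant at rate $O(e^{\tau/2})$, after which Schauder applied to $f-c_{20}$ gives \eqref{83}. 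Your proposed scaling cannot substitute for this ODE analysis.
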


Notice that the asymptotic behavior for $f$ on $M_{end}^{-}$ is not prescribed, but instead it is stated that the solution remains uniformly bounded and fall-off rates for its derivatives are given.

\begin{proof}
The same methods as in the proof of Proposition \ref{prop1}, yield uniform estimates for the solution $f_{\mathbf{r}}$ of \eqref{83.1} in $C^{2,\beta}_{loc}$. Thus, as $\mathbf{r}\rightarrow\infty$ a subsequence may be extracted which converges on compact subsets to a solution $f$ of \eqref{79}. By elliptic regularity, this solution is smooth. Moreover, in light of \eqref{95} we have the bound
\begin{equation}\label{97}
\underline{f}\leq f\leq\overline{f},
\end{equation}
showing that $f$ is uniformly bounded and has the appropriate asymptotics at $M_{end}^{+}$.

It remains to establish the asymptotics for derivatives. First consider the end $M_{end}^{+}$. Here we may follow the standard scaling argument in \cite{SchoenYau}. From the local $C^{2,\beta}$ estimates we have
\begin{equation}\label{98}
|f(x)|+|\nabla f(x)|_{g}+|\nabla^{2}f(x)|_{g}\leq c\text{ }\text{ }\text{ for }\text{ }\text{ }x\in M.
\end{equation}
Moreover it is also known from \eqref{97} that
\begin{equation}\label{99}
|f(x)|\leq c|x|^{-\varepsilon}\text{ }\text{ }\text{ as }\text{ }\text{ }|x|\rightarrow\infty\text{ }\text{ }\text{ in }\text{ }\text{ }M_{end}^{+}.
\end{equation}
Equation \eqref{83.1} may be viewed as the following linear equation
\begin{equation}\label{100}
\sum_{i,j=1}^{3}a^{ij}(x)\frac{\partial^{2}}{\partial x^{i}\partial x^{j}}f+\sum_{i=1}^{3}b^{i}(x)\frac{\partial}{\partial x^{i}}f=F(x),
\end{equation}
where
\begin{equation}\label{101}
a^{ij}=g^{ij},\text{ }\text{ }\text{ }\text{ }b^{i}=-g^{lm}\Gamma_{lm}^{i}+2(\log u)^{i},\text{ }\text{ }\text{ }\text{ }F=u^{-1}(Tr_{g}k)\sqrt{1+u^{2}|\nabla f|_{g}^{2}}.
\end{equation}
Now fix a point $x_{0}\in M_{end}^{+}$ and define coordinates $\overline{x} = (x-x_{0})/\sigma$, where $\sigma = |x_{0}|/2$. When written in these new coordinates, equation \eqref{100} becomes
\begin{equation}\label{102}
\sum_{i,j=1}^{3}a^{ij}(\overline{x})\frac{\partial^{2}}{\partial \overline{x}^{i}\partial \overline{x}^{j}}f+\sum_{i=1}^{3}\sigma b^{i}(\overline{x})\frac{\partial}{\partial \overline{x}^{i}}f=\sigma^{2}F(\overline{x})
\end{equation}
for $\overline{x}\in B_{1}(0)=\{|\overline{x}|<1\}$. Observe that $\partial_{\overline{x}^{i}}=\sigma\partial_{x^{i}}$, and therefore
\begin{equation}\label{103}
\left|\frac{\partial}{\partial\overline{x}^{l}}a^{ij}(\overline{x})\right|\leq \frac{c\sigma}{|x|^{\frac{3}{2}}}\leq\frac{c}{|x|^{\frac{1}{2}}},
\end{equation}
\begin{equation}\label{104}
\sigma|b^{i}(\overline{x})|+\sigma\left|\frac{\partial}{\partial\overline{x}^{l}}b^{i}(\overline{x})\right|\leq c\left(\frac{\sigma}{|x|^{\frac{3}{2}}}+\frac{\sigma^{2}}{|x|^{\frac{5}{2}}}\right)\leq \frac{c}{|x|^{\frac{1}{2}}},
\end{equation}
\begin{equation}\label{105}
\sigma^{2}|F(\overline{x})|+\sigma^{2}\left|\frac{\partial}{\partial\overline{x}^{l}}F(\overline{x})\right|\leq c\left(\frac{\sigma^{2}}{|x|^{2+\varepsilon}}+\frac{\sigma^{3}}{|x|^{3+\varepsilon}}\right)
\leq \frac{c}{|x|^{\varepsilon}},
\end{equation}
where $c$ represents a constant depending on the initial data and $u$.
These estimates show that we have control of the coefficients, and right-hand side of \eqref{102}, in $C^{0,\beta}$. Thus the interior Schauder estimates apply to yield
\begin{align}\label{106}
\begin{split}
|\overline{\partial} f(\overline{x})|+|\overline{\partial}^{2}f(\overline{x})|&\leq c(\sigma^{2}|F|_{C^{0,\alpha}(B_{1}(0))}+|f|_{C^{0}(B_{1}(0))})\\
& \leq c(\sigma^{2}|F|_{C^{1}(B_{1}(0))}+|f|_{C^{0}(B_{1}(0))})
\end{split}
\end{align}
for $\overline{x}\in B_{1/2}(0)$. It follows that
\begin{align}\label{107}
\begin{split}
\sigma|\partial f(x)|+\sigma^{2}|\partial^{2}f(x)|&\leq c(\sigma^{2}|F|_{C^{1}(B_{1}(0))}+|f|_{C^{0}(B_{1}(0))})\\
&\leq c|x|^{-\varepsilon}
\end{split}
\end{align}
for $x\in B_{\sigma/2}(x_{0})$, and hence the desired estimate holds
\begin{equation}\label{108}
|f(x)|+|x||\nabla f(x)|_{g}+|x|^{2}|\nabla^{2}f(x)|_{g}\leq c|x|^{-\varepsilon}
\text{ }\text{ }\text{ for }\text{ }\text{ }x\in M_{end}^{+}.
\end{equation}
Derivative estimates in the remaining end will be divided into two cases.\medskip

\textit{Case 1: $M_{end}^{-}$ is asymptotically flat.}
By performing an inversion $x\mapsto\frac{x}{|x|^{2}}$ near the origin in Brill coordinates, asymptotically flat
coordinates are obtained in $M_{end}^{-}$. We may now apply the same scaling argument as above. However here, $u\sim |x|^{-2}$ and $Tr_{g}k\sim |x|^{-4}$ so that
\begin{equation}\label{109}
\sigma|b^{i}(\overline{x})|+\sigma\left|\frac{\partial}{\partial\overline{x}^{l}}b^{i}(\overline{x})\right|\leq c\left(\frac{\sigma}{|x|}+\frac{\sigma^{2}}{|x|^{2}}\right)\leq c,
\end{equation}
\begin{equation}\label{110}
\sigma^{2}|F(\overline{x})|+\sigma^{2}\left|\frac{\partial}{\partial\overline{x}^{l}}F(\overline{x})\right|\leq c\left(\frac{\sigma^{2}}{|x|^{2}}+\frac{\sigma^{3}}{|x|^{3}}\right)
\leq c.
\end{equation}
It follows that
\begin{equation}\label{111}
\sigma|\partial f(x)|+\sigma^{2}|\partial^{2}f(x)|\leq c(\sigma^{2}|F|_{C^{1}(B_{1}(0))}+|f|_{B_{1}(0)})\leq c
\text{ }\text{ }\text{ for }\text{ }\text{ }x\in B_{\sigma/2}(x_{0}),
\end{equation}
and hence the desired estimate holds
\begin{equation}\label{112}
|x||\nabla f(x)|_{g}+|x|^{2}|\nabla^{2}f(x)|_{g}\leq c
\text{ }\text{ }\text{ for }\text{ }\text{ }x\in M_{end}^{-}.
\end{equation}

\textit{Case 2: $M_{end}^{-}$ is asymptotically cylindrical.}
According to the asymptotics \eqref{12}
\begin{equation}\label{113}
g= r^{-2}dr^{2}+g_{S^{2}}(\phi,\theta)+G(r,\phi,\theta),
\end{equation}
where $g_{S^{2}}$ is the round metric on the 2-sphere and the remainder satisfies
\begin{equation}\label{114}
G_{rr}=o_{2}(r^{-\frac{3}{2}}),\text{ }\text{ }\text{ }\text{ }G_{ij}=o_{2}(r^{\frac{1}{2}})
\text{ }\text{ }\text{ for }\text{ }\text{ }i,j\neq r.
\end{equation}
A further change of coordinates $\tau=\log r$, in which $d\tau=r^{-1}dr$, displays the canonical cylindrical form of the metric
\begin{equation}\label{115}
g= d\tau^{2}+g_{S^{2}}(\phi,\theta)+G(\tau,\phi,\theta),
\end{equation}
with
\begin{equation}\label{116}
|G_{ij}|+|\partial G_{ij}|+|\partial^{2}G_{ij}|=o(e^{\frac{1}{2}\tau})
\text{ }\text{ }\text{ for all }\text{ }\text{ }i,j.
\end{equation}

The interior Schauder estimates imply that
\begin{equation}\label{117}
|f|+|\nabla f|_{g}+|\nabla^{2}f|_{g}\leq c,
\end{equation}
and hence
\begin{equation}\label{118}
|\partial_{\tau} f|+|\partial_{\theta}f|+|\partial_{\tau}^{2}f|
+|\partial_{\tau}\partial_{\theta}f|+|\partial_{\theta}^{2}f|\leq c.
\end{equation}
It then follows, with the help of \eqref{78}, that
\begin{equation}\label{119}
\nabla\log u\cdot\nabla f=\partial_{\tau}f+O(e^{\frac{1}{2}\tau})
\end{equation}
and
\begin{equation}\label{120}
\Delta_{g}f=\partial_{\tau}^{2}f+\Delta_{S^{2}}f+O(e^{\frac{1}{2}\tau}).
\end{equation}

We now obtain derivative estimates using a separation of variables argument.
Let $\{\psi_{i}\}_{i=0}^{\infty}\subset L^{2}(S^{2})$ be an complete orthonormal set of eigenfunctions for $\Delta_{S^{2}}$, and let $\lambda_{i}=i(i+1)$ denote the corresponding eigenvalues. Since the eigenfunctions are complete, we may write
\begin{equation}\label{121}
f(\tau,\phi,\theta)=\sum_{i=0}^{\infty}d_{i}(\tau)\psi_{i}(\phi,\theta).
\end{equation}
Inserting this into equation \eqref{83.1} produces
\begin{equation}\label{122}
\sum_{i=0}^{\infty}(d_{i}''+2d_{i}'-\lambda_{i}d_{i})\psi_{i}=P,
\end{equation}
and thus
\begin{equation}\label{123}
d_{i}''+2d_{i}'-\lambda_{i}d_{i}=P_{i}(\tau):=\int_{S^{2}}P(\tau,\phi,\theta)\psi_{i}(\phi,\theta).
\end{equation}
The general solution to this ODE is given by the method of variation of parameters
\begin{equation}\label{124}
d_{i}(\tau)=(c_{1i}+p_{1i}(\tau))e^{\left(-1-\sqrt{1+\lambda_{i}}\right)\tau}+(c_{2i}+p_{2i}(\tau))e^{\left(-1+\sqrt{1+\lambda_{i}}\right)\tau},
\end{equation}
where $c_{1i}$ and $c_{2i}$ are constants and
\begin{equation}\label{125}
p_{1i}(\tau)=-\frac{1}{2\sqrt{1+\lambda_{i}}}\int_{-\infty}^{\tau}\!\!e^{\left(1+\sqrt{1+\lambda_{i}}\right)\tau}P_{i}(\tau)d\tau,\text{ }\text{ }
p_{2i}(\tau)=\frac{1}{2\sqrt{1+\lambda_{i}}}\int_{\tau_{0}}^{\tau}e^{\left(1-\sqrt{1+\lambda_{i}}\right)\tau}P_{i}(\tau)d\tau,
\end{equation}
for some $\tau_{0}$. Note that the boundedness of $f$ implies that $c_{1i}=0$. Moreover
\begin{equation}\label{126}
d_{0}(\tau)=c_{20}+p_{20}(\tau)+e^{-2\tau}p_{10}(\tau)=c_{20}+O(e^{\frac{1}{2}\tau}),
\end{equation}
and $P=O(e^{\frac{1}{2}\tau})$ implies that
\begin{equation}\label{127}
d_{i}(\tau)=O\left(\max\{e^{\frac{1}{2}\tau},e^{\left(-1+\sqrt{1+\lambda_{i}}\right)\tau}\}\right),\text{ }\text{ }\text{ }\text{ }i\geq 1.
\end{equation}
Since $-1+\sqrt{1+\lambda_{i}}>7/10$ for $i\geq 1$, it follows that
\begin{equation}\label{128}
|\partial_{\tau}f|+|\partial_{\theta}f|=O(e^{\frac{1}{2}\tau}),
\end{equation}
and hence
\begin{equation}\label{129}
|\nabla f|_{g}=O(r^{\frac{1}{2}}).
\end{equation}
By differentiating the expansion, similar considerations yield
\begin{equation}\label{130}
|\nabla^{2} f|_{g}=O(r^{\frac{1}{2}}).
\end{equation}
\end{proof}

\section{The Equation for $Y^{\phi}$}
\label{sec6} \setcounter{equation}{0}
\setcounter{section}{6}

Let $(M,g,k)$, $u$, and $f$ be as in the previous section, although $f$ is not required to satisfy an equation
here. In particular, $u$ and $f$ satisfy the asymptotics \eqref{76}-\eqref{78} and \eqref{81}-\eqref{83}.
In this section we solve the equation
\begin{equation}\label{131}
div_{\overline{g}}\text{ }\!\overline{k}(\eta)=0\text{ }\text{ }\text{ on }\text{ }\text{ }M,
\end{equation}
for solutions satisfying the following asymptotics
\begin{equation}\label{132}
Y^{\phi}=-\frac{2\mathcal{J}}{r^{3}}+o_{2}(r^{-\frac{7}{2}})\text{ }\text{ }\text{ in }\text{ }\text{ }M_{end}^{+},
\end{equation}
\begin{equation}\label{133}
Y^{\phi}=\mathcal{Y}+O_{2}(r^{5})\text{ }\text{ }\text{ in asymptotically flat }\text{ }\text{ }M_{end}^{-},
\end{equation}
\begin{equation}\label{134}
Y^{\phi}=\mathcal{Y}+O_{2}(r)\text{ }\text{ }\text{ in asymptotically cylindrical }\text{ }\text{ }M_{end}^{-},
\end{equation}
where $\mathcal{J}$ and $\mathcal{Y}$ are constants. In order to obtain a unique solution the value of $\mathcal{J}$ will be prescribed, and in this case the value of $\mathcal{Y}$ is determined by $\mathcal{J}$ and the initial data. Note that these asymptotics are consistent with those of the (sole component of the) shift vector field $Y_{EK}$ in the extreme Kerr spacetime, as is shown in Appendix B, Section \ref{sec8}.

The equation \eqref{131} may be expressed in a more revealing way as follows
\begin{equation}\label{135}
\Delta_{\overline{g}}Y^{\phi}+\overline{\nabla}\log(u^{-1}g_{\phi\phi})\cdot\overline{\nabla}Y^{\phi}=0.
\end{equation}
Since the metric $\overline{g}$ depends on $Y^{\phi}$, it appears that this should be a nonlinear equation.
However certain cancelations occur, and it turns out that when expressed in terms of the metric $g$, the equation is linear elliptic
\begin{align}\label{136}
\begin{split}
0=& \left(g^{ij}-\frac{u^{2}f^{i}f^{j}}{1+u^{2}|\nabla f|_{g}^{2}}\right)\left(\nabla_{ij}Y^{\phi}-\frac{u\pi_{ij}f^{l}}{\sqrt{1+u^{2}|\nabla f|_{g}^{2}}}\partial_{l}Y^{\phi}\right)\\
&+\left(g^{ij}-\frac{u^{2}f^{i}f^{j}}{1+u^{2}|\nabla f|_{g}^{2}}\right)\left(\partial_{i}\log g_{\phi\phi}-\frac{\partial_{i}\log u}{1+u^{2}|\nabla f|_{g}^{2}}\right)\partial_{j}Y^{\phi},
\end{split}
\end{align}
where
\begin{equation}\label{137}
\pi_{ij}=\frac{u}{\sqrt{1+u^{2}|\nabla f|_{g}^{2}}}\left(\nabla_{ij}f+(\log u)_{i}f_{j}+(\log u)_{j}f_{i}+\frac{g_{i\phi}Y^{\phi}_{,j}+g_{j\phi}Y^{\phi}_{,i}}{2u^{2}}\right)
\end{equation}
is the second fundamental form of the graph $\overline{M}=\{t=f(x)\}$ in the Lorentzian setting. It is important to note that the linear character of the equation, arises from the fact that
\begin{equation}\label{138}
\left(g^{ij}-\frac{u^{2}f^{i}f^{j}}{1+u^{2}|\nabla f|_{g}^{2}}\right)\pi_{ij}
\end{equation}
does not depend on $Y^{\phi}$. The equivalence of the three equations \eqref{131}, \eqref{135}, and \eqref{136}
will be proved in Appendix B.\medskip

We now prove existence and uniqueness. The first task is to construct a radial function $Y_{0}=Y_{0}(r)$ which
is an approximate solution in the asymptotic regions. Following the same procedure as in \eqref{89} yields
\begin{equation}\label{139}
\Delta_{g}Y_{0}+\nabla\log(u^{-1}g_{\phi\phi})\cdot\nabla Y_{0}
=g^{rr}\left(Y_{0}''+\partial_{r}\log(r^{4}\widetilde{u}^{-1}e^{-3\widetilde{U}})Y_{0}'
+\partial_{r}\log\left(\frac{u^{-1}e^{-3U}}{\widetilde{u}^{-1}e^{-3\widetilde{U}}}\right)Y_{0}'\right),
\end{equation}
where $\widetilde{u}$ and $e^{\widetilde{U}}$ are defined in \eqref{88} and \eqref{88.1}, and where we have used
$g_{\phi\phi}=e^{-2U}r^{2}\sin^{2}\theta$. The remaining terms may be computed in a similar way. If $L$ denotes
the differential operator on the right-hand side of \eqref{136}, then
\begin{eqnarray}\label{140}
LY_{0}&=& \left(g^{rr}-\frac{u^{2}(f^{r})^{2}}{1+u^{2}|\nabla f|_{g}^{2}}\right)Y_{0}''
+g^{rr}\partial_{r}\log(r^{4}\widetilde{u}^{-1}e^{-3\widetilde{U}})Y_{0}'\nonumber\\
& &+\left(g^{rr}\partial_{r}\log\left(\frac{u^{-1}e^{-3U}}{\widetilde{u}^{-1}e^{-3\widetilde{U}}}\right)
+\frac{u^{2}f^{i}f^{j}\Gamma_{ij}^{r}}{1+u^{2}|\nabla f|_{g}^{2}}
-\frac{u^{2}f^{r}f^{i}\partial_{i}\log g_{\phi\phi}}{1+u^{2}|\nabla f|_{g}^{2}}\right)Y_{0}'\\
& &+\left(\frac{u^{2}f^{r}f^{i}\partial_{i}\log u}{(1+u^{2}|\nabla f|_{g}^{2})^{2}}
+\frac{g^{rr}u^{2}|\nabla f|_{g}^{2}\partial_{r}\log u}{1+u^{2}|\nabla f|_{g}^{2}}
-\left(g^{ij}-\frac{u^{2}f^{i}f^{j}}{1+u^{2}|\nabla f|_{g}^{2}}\right)\frac{u\pi_{ij}f^{r}}{\sqrt{1+u^{2}|\nabla f|_{g}^{2}}}\right)Y_{0}'.\nonumber
\end{eqnarray}
Note that the term $f^{i}\partial_{i}\log g_{\phi\phi}$ appears to cause a problem when $\theta=0,\pi$, since
it involves $f^{\theta}\partial_{\theta}\log\sin\theta=g^{\theta\theta}f_{\theta}\partial_{\theta}\log\sin\theta$
which could blow-up at those values of $\theta$. However, for axisymmetric smooth functions $f$, one must have
$\partial_{\theta}f=0$ on the axis of rotation.
Observe that equation \eqref{140} may be written more simply as
\begin{equation}\label{141}
LY_{0}=\left(g^{rr}-\frac{u^{2}(f^{r})^{2}}{1+u^{2}|\nabla f|_{g}^{2}}\right)\left(Y_{0}''
+\left(\partial_{r}\log(r^{4}\widetilde{u}^{-1}e^{-3\widetilde{U}})+B\right)Y_{0}'\right),
\end{equation}
for some function $B$. With the help of \eqref{76}-\eqref{78}, \eqref{81}-\eqref{83}, \eqref{88}, \eqref{88.1}, and the calculation of Christoffel symbols in Appendix D, it can be shown that this function has the property that
\begin{equation}\label{142}
B=O(r^{-\frac{3}{2}})\text{ }\text{ }\text{ as }\text{ }\text{ }r\rightarrow\infty,
\text{ }\text{ }\text{ }\text{ }\text{ }\text{ }\text{ }\text{ }B=O(r^{\frac{1}{2}})\text{ }\text{ }\text{ as }\text{ }\text{ }r\rightarrow 0.
\end{equation}
We are motivated to choose $Y_{0}$ as the solution to the ODE
\begin{equation}\label{143}
Y_{0}''+\partial_{r}\log(r^{4}\widetilde{u}^{-1}e^{-3\widetilde{U}})Y_{0}'=0
\end{equation}
which satisfies the asymptotics \eqref{132}-\eqref{134}, namely
\begin{equation}\label{144}
Y_{0}(r)=c\int_{r}^{\infty}r^{-4}\widetilde{u}e^{3\widetilde{U}}
\end{equation}
where the constant $c$ is chosen in order to realize the desired $r^{-3}$-fall-off rate in \eqref{132}.

The desired solution of \eqref{136} will be constructed in the form $Y^{\phi}=Y_{0}+Y_{1}$, where $Y_{1}$ solves the equation
\begin{equation}\label{145}
LY_{1}=-LY_{0},
\end{equation}
and has the same asymptotics as in \eqref{132}-\eqref{134} with $\mathcal{J}=0$.

\begin{theorem}\label{thm6}
Given initial data $(M,g,k)$ and smooth functions $u>0$, $f$ satisfying \eqref{10}-\eqref{12}, \eqref{76}-\eqref{78}, and \eqref{81}-\eqref{83}, there exists a unique, smooth, uniformly bounded solution $Y^{\phi}$ of \eqref{136} satisfying the asymptotics \eqref{132}-\eqref{134}.
\end{theorem}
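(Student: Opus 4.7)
\medskip
\noindent\textbf{Proof proposal.} The plan mirrors the proof of Theorem \ref{thm5}, exploiting the crucial simplification noted after \eqref{136}: the operator $L$ is linear, strictly elliptic, and has no zero-order term, with coefficients depending only on $(g,u,f)$ rather than on $Y^\phi$. Following the decomposition suggested by \eqref{144}--\eqref{145}, I split $Y^\phi = Y_0 + Y_1$, where $Y_0$ is the explicit radial profile \eqref{144} with the constant $c$ tuned so that $Y_0$ realizes the leading $-2\mathcal{J}/r^3$ asymptotic at $M_{end}^+$. Combining \eqref{141} and \eqref{142} with the asymptotics \eqref{10}--\eqref{12}, \eqref{76}--\eqref{78}, \eqref{81}--\eqref{83}, a direct expansion shows that $LY_0$ is a smooth axisymmetric function which decays in both ends.

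For the existence of $Y_1$ I proceed as in Proposition \ref{prop1}. On the annular region $\Omega_{\mathbf{r}}$ the Dirichlet problem $LY_1^{\mathbf{r}} = -LY_0$, $Y_1^{\mathbf{r}}|_{\partial\Omega_{\mathbf{r}}} = 0$ admits a unique $C^{2,\beta}$ solution by standard linear elliptic theory, since the absence of a zero-order term ensures the maximum principle. To obtain $\mathbf{r}$-independent $L^\infty$ bounds I adapt Lemma \ref{lemma3}, constructing radial sub/super solutions $\underline{Y}_1(r), \overline{Y}_1(r)$ that solve an ODE of the type \eqref{90} whose right-hand side dominates $|LY_0|$ pointwise, with integration constants chosen so that the sub/super solutions obey the target asymptotics \eqref{132}--\eqref{134} with $\mathcal{J}=0$. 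The linear comparison principle yields $\underline{Y}_1 \leq Y_1^{\mathbf{r}} \leq \overline{Y}_1$; local $C^{2,\gamma}$ Schauder estimates and a diagonal extraction as $\mathbf{r}\to\infty$ then produce a smooth global $Y_1$ and hence the required uniformly bounded solution $Y^\phi := Y_0 + Y_1$ of \eqref{136}.

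The derivative asymptotics are obtained exactly as in the three cases of Theorem \ref{thm5}. A rescaling of \eqref{136} about a point $x_0$ with $\sigma = |x_0|/2$, followed by interior Schauder estimates on the unit ball, handles $M_{end}^+$ and yields \eqref{132}; the inversion $x \mapsto x/|x|^2$ reduces the asymptotically flat case at $M_{end}^-$ to the same scaling argument and gives \eqref{133}; at asymptotically cylindrical $M_{end}^-$, after $\tau = \log r$ and expanding $Y_1 = \sum_i d_i(\tau)\psi_i(\phi,\theta)$ in spherical harmonics of $\Delta_{S^2}$, the resulting ODEs analogous to \eqref{123}, combined with boundedness of $Y_1$ and the $O(e^{\tau/2})$ decay of the right-hand side, kill the exponentially growing modes and force the $O_2(r)$ behavior of \eqref{134}, with $\mathcal{Y}$ emerging as the zero-mode limit. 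Uniqueness then follows: the difference $v$ of two bounded solutions sharing the same $-2\mathcal{J}/r^3$ leading term satisfies $Lv = 0$, tends to $0$ at $M_{end}^+$, and tends to some constant $c$ at $M_{end}^-$; since $L$ has no zero-order term, $L(v-c) = 0$ and $v-c$ decays at both ends, so the maximum principle on exhausting annuli gives $v \equiv c$, and the $M_{end}^+$ limit forces $c = 0$.

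I expect the principal technical obstacle to lie in the asymptotic bookkeeping required to control $LY_0$ uniformly through the messy expression \eqref{140}: in particular, one must verify that the various error terms involving $f^i \partial_i \log g_{\phi\phi}$ near the axis, the Christoffel symbols computed in Appendix D, and the residual discrepancies $u - \widetilde{u}$ and $e^U - e^{\widetilde{U}}$ all decay fast enough that the radial sub/super solutions can be built to respect the prescribed fall-off rates \eqref{132}--\eqref{134} simultaneously at both ends.
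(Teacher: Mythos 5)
Your existence construction (decomposition $Y^{\phi}=Y_{0}+Y_{1}$, radial sub/super solutions as in Lemma \ref{lemma3}, Dirichlet problems on exhausting annuli, interior Schauder bounds, and the case-by-case scaling/inversion/eigenfunction analysis for the derivative asymptotics) matches the paper's argument essentially line for line, and that part is fine.

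The uniqueness step, however, has a genuine gap. You set $v$ to be the difference of two bounded solutions with the same $-2\mathcal{J}/r^{3}$ leading term, so $v\to 0$ in $M_{end}^{+}$ and $v\to c$ in $M_{end}^{-}$ (and $c$ is a priori unknown, since $\mathcal{Y}$ in \eqref{133}--\eqref{134} is \emph{not} prescribed; it is an output of the problem). You then assert that ``$v-c$ decays at both ends,'' but this is false: $v-c\to -c$ in $M_{end}^{+}$, which decays only if $c=0$ — precisely what you are trying to prove. Applying the maximum principle on exhausting annuli to $v$ (or to $v-c$) only yields $\min(0,c)\le v\le\max(0,c)$, which does not force $v\equiv c$ or $c=0$. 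Because the two ends carry two different ``Dirichlet'' limits, one of which is unknown, a bare comparison argument cannot close the loop; one needs a flux/energy identity. The paper supplies exactly this: it observes that $\mathcal{K}:=u^{-1}g_{\phi\phi}\sqrt{1+u^{2}|\nabla f|_{g}^{2}}$ lies in $Ker_{g}L^{*}$, so that $\mathcal{K}L$ is in divergence form; multiplying $LZ=0$ by $\mathcal{K}$ alone and integrating shows the flux at $M_{end}^{-}$ is zero (since the flux at $M_{end}^{+}$ clearly vanishes), and then multiplying by $Z\mathcal{K}$ and integrating by parts produces a positive Dirichlet energy whose boundary contributions all vanish, forcing $\nabla Z\equiv 0$ and hence $Z\equiv 0$. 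Your argument needs to be replaced by something of this kind — identifying the correct weight that puts $L$ in divergence form and running the flux argument — rather than a pointwise maximum principle.
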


\begin{proof}
The first task is to construct radial sub and super solutions $\underline{Y}_{1}$ and $\overline{Y}_{1}$. The super solution will be chosen as a solution of the ODE
\begin{equation}\label{149}
\overline{Y}_{1}''+\left(\partial_{r}\log(r^{4}\widetilde{u}^{-1}e^{-3\widetilde{U}})-\overline{B}\right)\overline{Y}_{1}'
=-\overline{h},
\end{equation}
where the radial functions $\overline{B}>0$ and $\overline{h}>0$ are chosen so that
\begin{equation}\label{150}
|B|\leq \overline{B},\text{ }\text{ }\text{ }\text{ }\text{ }\text{ }
\left(g^{rr}-\frac{u^{2}(f^{r})^{2}}{1+u^{2}|\nabla f|_{g}^{2}}\right)^{-1}|LY_{0}|\leq\overline{h},
\end{equation}
with $\overline{B}$ satisfying the asymptotics \eqref{142}, whereas
\begin{equation}\label{151}
\overline{h}=O(r^{-\frac{11}{2}})\text{ }\text{ }\text{ in }\text{ }\text{ }M_{end}^{+},
\end{equation}
\begin{equation}\label{152}
\overline{h}=O(r^{4})\text{ }\text{ }\text{ in asymptotically flat }\text{ }\text{ }M_{end}^{-},
\end{equation}
\begin{equation}\label{153}
\overline{h}=O(1)\text{ }\text{ }\text{ in asymptotically cylindrical }\text{ }\text{ }M_{end}^{-}.
\end{equation}
To see that $\overline{Y}_{1}$ is a super solution, observe that \eqref{141} and \eqref{150} imply
\begin{align}\label{154}
\begin{split}
L\overline{Y}_{1}&=\left(g^{rr}-\frac{u^{2}(f^{r})^{2}}{1+u^{2}|\nabla f|_{g}^{2}}\right)\left(\overline{Y}_{1}''
+\left(\partial_{r}\log(r^{4}\widetilde{u}^{-1}e^{-3\widetilde{U}})+B\right)\overline{Y}_{1}'\right)\\
&=\left(g^{rr}-\frac{u^{2}(f^{r})^{2}}{1+u^{2}|\nabla f|_{g}^{2}}\right)\left(-\overline{h}+\left(\overline{B}+B\right)\overline{Y}_{1}'\right)\\
&\leq-LY_{0},
\end{split}
\end{align}
if $\overline{Y}_{1}'\leq 0$. Thus we choose
\begin{equation}\label{155}
\overline{Y}_{1}
=\int_{r}^{\infty}e^{\int_{0}^{r}\overline{B}}r^{-4}\widetilde{u}e^{3\widetilde{U}}
\int_{0}^{r}e^{-\int_{0}^{r}\overline{B}}r^{4}\widetilde{u}^{-1}e^{-3\widetilde{U}}\overline{h}.
\end{equation}
Clearly $\overline{Y}_{1}$ is positive, nonincreasing, and satisfies the asymptotics \eqref{132}-\eqref{134} with $\mathcal{J}=0$. Similar methods yield a negative,
nondecreasing subsolution $\underline{Y}_{1}$ satisfying the same asymptotics.

Since \eqref{145} is linear elliptic without a zeroth order term, there exists a unique smooth solution $Y_{1}^{\mathbf{r}}$, with zero Dirichlet boundary conditions, on the annular domain $\Omega_{\mathbf{r}}=\{(r,\phi,\theta)\mid \mathbf{r}^{-1}
<r<\mathbf{r}\}$. By the maximum/minimum principle
\begin{equation}\label{156}
\underline{Y}_{1}<Y_{1}^{\mathbf{r}}<\overline{Y}_{1}.
\end{equation}
The interior Schauder estimates now yield uniform $C^{2,\beta}_{loc}$ bounds, and thus after passing to
a subsequence, we obtain a $C^{2}$ solution $Y_{1}$ on $M$ as $\mathbf{r}\rightarrow\infty$. This solution is
smooth by elliptic regularity and satisfies the estimate
\begin{equation}\label{157}
\underline{Y}_{1}<Y_{1}<\overline{Y}_{1}.
\end{equation}
Asymptotics for the derivatives may be established as in the proof of Theorem \ref{thm5}. Namely, in $M_{end}^{+}$ a scaling argument is employed, and if $\mathcal{Y}_{1}$ denotes $\lim_{r\rightarrow 0}Y_{1}$ then the same is true for asymptotically flat $M_{end}^{-}$, although here one must apply the method to $Y_{1}-\mathcal{Y}_{1}$. Moreover, asymptotics for the derivatives in asymptotically cylindrical $M_{end}^{-}$ may be obtained through the use of eigenfunction expansions.

We now have a solution $Y^{\phi}=Y_{0}+Y_{1}$ of \eqref{136}, satisfying the asymptotics \eqref{132}-\eqref{134}. It remains to prove uniqueness. Thus, consider a solution of $LZ=0$ having the asymptotics \eqref{132}-\eqref{134} with $\mathcal{J}=0$. We must show that $Z\equiv 0$. A direct calculation shows that $\mathcal{K}:=u^{-1}g_{\phi\phi}\sqrt{1+u^{2}|\nabla f|_{g}^{2}}\in Ker_{g} L^{*}$, the kernel of the formal $L^{2}(M,g)$ adjoint of $L$. Alternatively this may be proved by observing, the immediately apparent fact, that $u^{-1}g_{\phi\phi}$ lies in the $L^{2}(\overline{M},\overline{g})$-kernel of the adjoint of the operator \eqref{135}, and using \eqref{29} as well as \eqref{807.1}. Upon multiplying the equation $LZ=0$ through by $Z\mathcal{K}$ and integrating by parts, several boundary terms cancel to yield
\begin{align}\label{158}
\begin{split}
0=&\int_{M}\mathcal{K}\left(g^{ij}-\frac{u^{2}f^{i}f^{j}}{1+u^{2}|\nabla f|_{g}^{2}}\right)\nabla_{i}Z\nabla_{j}Z\\
&+\lim_{r\rightarrow\infty}\int_{\partial B(r)}Z\mathcal{K}\left(g^{ij}-\frac{u^{2}f^{i}f^{j}}{1+u^{2}|\nabla f|_{g}^{2}}\right)\nu_{i}\nabla_{j}Z
-\lim_{r\rightarrow 0}\int_{\partial B(r)}Z\mathcal{K}\left(g^{ij}-\frac{u^{2}f^{i}f^{j}}{1+u^{2}|\nabla f|_{g}^{2}}\right)\nu_{i}\nabla_{j}Z,
\end{split}
\end{align}
where $\nu$ is the unit normal to $\partial B(r)$ pointing towards $M_{end}^{+}$. Note that the boundary term
at $M_{end}^{+}$ clearly vanishes, while the boundary term at $M_{end}^{-}$ becomes
\begin{equation}\label{159}
\lim_{r\rightarrow 0}\mathcal{Y}_{0}\int_{\partial B(r)}\mathcal{K}\left(g^{ij}-\frac{u^{2}f^{i}f^{j}}{1+u^{2}|\nabla f|_{g}^{2}}\right)\nu_{i}\nabla_{j}Z,
\end{equation}
where $\mathcal{Y}_{0}=\lim_{r\rightarrow 0}Y$. Now multiply the equation $LZ=0$ by $\mathcal{K}$ alone to find
\begin{equation}\label{160}
0=\lim_{r\rightarrow\infty}\int_{\partial B(r)}\mathcal{K}\left(g^{ij}-\frac{u^{2}f^{i}f^{j}}{1+u^{2}|\nabla f|_{g}^{2}}\right)\nu_{i}\nabla_{j}Z
-\lim_{r\rightarrow 0}\int_{\partial B(r)}\mathcal{K}\left(g^{ij}-\frac{u^{2}f^{i}f^{j}}{1+u^{2}|\nabla f|_{g}^{2}}\right)\nu_{i}\nabla_{j}Z.
\end{equation}
Again, the boundary term at $M_{end}^{+}$ vanishes, and hence the boundary term at $M_{end}^{-}$ vanishes, which implies that \eqref{159} vanishes. It then follows from \eqref{158} that $Z\equiv 0$.
\end{proof}

\section{Appendix A: The Scalar Curvature Formula}
\label{sec7} \setcounter{equation}{0}
\setcounter{section}{7}

The purpose of this section is to prove Theorem \ref{thm1}. We will follow the ideas in \cite{BrayKhuri2}. In particular, $(M,g)$ will be viewed as a graph $\{t=f(x)\}$ in the Lorentzian setting \eqref{21}. An alternative approach, in which the calculations are made by viewing $(\overline{M},\overline{g})$ as the same graph in the
Riemannian setting \eqref{17} is possible, although not pursued here; in the static case this approach was carried out in \cite{BrayKhuri1}.

Let $(\overline{M}\times\mathbb{R},\widetilde{g})$ denote the Lorentzian stationary spacetime with
\begin{equation} \label{170}
\widetilde{g}=\overline{g}-2Y_{i}dx^{i}dt-\varphi dt^{2}.
\end{equation}
The induced metric on the graph and its inverse are given by
\begin{equation} \label{171}
g_{ij}= \bg_{ij} - f_{i}Y_{j} - f_{j}Y_{i} - \varphi f_{i}f_{j},
\text{ }\text{ }\text{ }\text{ }\text{ }g^{ij}= \bg^{ij} - u^{-2}\by^{i}\by^{j} + w^{i}w^{j},
\end{equation}
where
\begin{equation} \label{172}
w^{i}= \frac{u\overline{g}^{ij}f_{j} + u^{-1}\by^{i}}{\sqrt{\volbarg}},
\text{ }\text{ }\text{ }\text{ }\text{ }u^{2} = \varphi + | \by |_{\bg}^{2}.
\end{equation}
The vector $w$ is in fact the spatial component of the unit normal to the graph. In particular, the unit normal
to the $t=0$ slice and the graph are given given, respectively, by
\begin{equation} \label{173}
n = \frac{\partial_{t}+\by}{u}, \text{ }\text{ }\text{ }\text{ }\text{ } N= \frac{u\barna f + n}{\sqrt{\volbarg}}.
\end{equation}
Note that $\partial_{t}$ is a killing vector on $(\overline{M}\times\mathbb{R},\widetilde{g})$. Thus there is an obvious one-to-one correspondence between $M=\{ t=f(x)\}$ and $\overline{M}=\{t=0\}$. In that sense,
decomposing $n$ into its normal and tangential components with respect to the graph, and decomposing $N$ into its normal and tangential components with respect to the $t=0$ slice, yields
\begin{equation} \label{173}
n = \sqrt{1+u^{2}|\nabla f|_{g}^{2}} N-u\nabla f, \text{ }\text{ }\text{ }\text{ }\text{ }
N= \sqrt{1+u^{2}|\nabla f|_{g}^{2}}(n+u\overline{\nabla}f).
\end{equation}
Here the identity \eqref{1001} was used.

Let $G=Ric_{\widetilde{g}}-\frac{1}{2}R_{\widetilde{g}}\widetilde{g}$ denote the Einstein tensor.
Using \eqref{173} and the Gauss-Codazzi relations $G(N,n)$ may be computed in two different ways, namely
\begin{align} \label{0011}
\begin{split}
G(N,n)
&= \sqrt{\volg} (G(n,n) + G(u\barna f, n))  \\
&= \sqrt{\volg}  \left[(\overline{R}+(Tr_{\bg}\bk)^{2}- | \bk |_{\bg}^{2})/2
 + \bdiv(\bk-(Tr_{\bg}\bk)\bg)(u\barna f)\right],
\end{split}
\end{align}
and
\begin{align} \label{0012}
\begin{split}
G(N,n)
&= \sqrt{\volg} G(N,N) - G(N, u\nabla f)  \\
&= \sqrt{\volg} (R +(Tr_{g}\pi)^{2}- | \pi |_{g}^{2})/2
 - div_{g}(\pi-(Tr_{g}\pi)g)(u \nabla f),
\end{split}
\end{align}
where $\overline{k}$ and $\pi$ denote the second fundamental forms of $\overline{M}$ and $M$, respectively.
It follows that
\begin{align} \label{0013}
\begin{split}
&\overline{R}+ (Tr_{\bg}\bk)^{2}- | \bk |_{\bg}^{2} + 2\bdiv(\bk-(Tr_{\bg}\bk)\bg)(u \barna f)  \\
=& R +(Tr_{g}\pi)^{2}- | \pi |_{g}^{2} - 2 div_{g}(\pi-(Tr_{g}\pi)g)(v)
\end{split}
\end{align}
where
\begin{equation} \label{0014}
v=\frac{u \nabla f}{ \sqrt{\volg}}.
\end{equation}
The energy density and momentum densities for the initial data $(M,g,k)$ are
\begin{align} \label{0015}
\begin{split}
8\pi  \mu = G(N,N)= (R+ (Tr_{g}k)^{2}- | k |_{g}^{2})/2 \\
8 \pi  J(\cdot) = G(N, \cdot) = div_{g}(k-(Tr_{g}k)g)(\cdot).
\end{split}
\end{align}
Thus, combining this with the fact that $Tr_{\bg}\bk = 0$ (Lemma \ref{lemma1}) produces
\begin{align} \label{0016}
\begin{split}
\overline{R}- | \bk |_{\bg}^{2}
+ 2\bdiv \bk (u\barna f)
=& 16 \pi(\mu-J(v)) - | \pi |_{g}^{2} +  | k |_{g}^{2}
-2div_{g}(\pi)(v)+ 2div_{g}(k)(v)  \\
& + (Tr_{g}\pi)^{2}-(Tr_{g}k)^{2} +2v(Tr_{g}\pi- Tr_{g}k).
\end{split}
\end{align}

In what follows, several identities involving significant computation will be proven, which when combined with \eqref{0016} will yield the desired result. The first task is to calculate the second fundamental form $\pi$.
Below, $i,j,l$ etc. will denote indices associated with local coordinates $x^{i}$ on $\overline{M}$, and
$\widetilde{\Gamma}_{ij}^{l}$, $\overline{\Gamma}_{ij}^{l}$, $\Gamma_{ij}^{l}$ will represent Christoffel symbols for the metrics $\widetilde{g}$, $\overline{g}$, and $g$, respectively. Observe that the inverse metric is
\begin{equation} \label{01}
\tg^{tt}= - u^{-2}, \text{ }\text{ }\text{ }\text{ } \tg^{ti}=-u^{-2}\by^{i}, \text{ }\text{ }\text{ }\text{ } \tg^{ij}=\bg^{ij} - u^{-2}\by^{i}\by^{j},
\end{equation}
from which we find
\begin{equation} \label{02}
\tilG^{t}_{tt}=0,  \text{ }\text{ }\text{ }\text{ }
\tilG^{i}_{tt}= \frac{1}{2}\bg^{ij}\varphi_{j}= \frac{1}{2}\varphi^{i}, \text{ }\text{ }\text{ }\text{ }
\tilG^{t}_{ij}= u^{-1} \bk_{ij},\text{ }\text{ }\text{ }\text{ }
\tilG^{k}_{ij} = \barG^{l}_{ij} + u^{-1}\by^{l} \bk_{ij},
\end{equation}
\begin{equation}\label{174}
\tilG^{t}_{it} = \frac{1}{2u^{2}}\left(\varphi_{i} + \by^{l}(Y_{l,i}-Y_{i,l})\right), \text{ }\text{ }\text{ }\text{ }
\tilG^{j}_{it} = -\frac{1}{2}\bg^{jl}(Y_{l,i}-Y_{i,l}) + \frac{\by^{j}}{2 u^{2}}\left(\varphi_{i} + \by^{l}(Y_{l,i}-Y_{i,l})\right).
\end{equation}
Let $X_{i}=\partial_{i}+f_{i}\partial_{t}$ denote tangent vectors to the graph, then
\begin{align} \label{05}
\begin{split}
\pi_{ij}
=& -\widetilde{g}(\tilna_{X_{i}} X_{j},N)   \\
=& - (\tilG^{l}_{ij} + f_{i}\tilG^{l}_{jt} +  f_{j}\tilG^{l}_{it} + f_{i}f_{j}\tilG^{l}_{tt})
 \widetilde{g}\left(\partial_{l}, \frac{u\barna f}{\sqrt{\volbarg}}\right)  \\
& - (\partial_{ij}f+\tilG^{t}_{ij} + f_{i}\tilG^{t}_{jt} +  f_{j}\tilG^{t}_{it} + f_{i}f_{j}\tilG^{t}_{tt}) \widetilde{g}\left(\partial_{t}, \frac{n}{\sqrt{\volbarg}} \right)   \\
=& \frac{u}{\sqrt{\volbarg}}
\left( \barna_{ij} f + u^{-1}\bk_{ij}+ f_{i}(\tilG^{t}_{jt}-\tilG^{l}_{jt}f_{l})
+f_{j}(\tilG^{t}_{it}-\tilG^{l}_{it}f_{l})
+f_{i}f_{j}(\tilG^{t}_{tt}-\tilG^{l}_{tt}f_{l}) \right).
\end{split}
\end{align}
Therefore
\begin{equation} \label{05-1}
\pi_{ij} = \frac{u \barna_{ij} f + \bk_{ij}+\frac{1}{2u}(f_{i} \varphi_{j}+f_{j} \varphi_{i})-\frac{u}{2}f_{i}f_{j} \barna f(\varphi)}{\sqrt{\volbarg}}
 + \frac{1}{2}f_{i}w^{l}(Y_{l,j}-Y_{j,l})
+ \frac{1}{2} f_{j}w^{l}(Y_{l,i}-Y_{i,l}).
\end{equation}

This formula for the second fundamental form involves quantities associated with $\overline{g}$. In order to obtain a formula involving only quantities associated with $g$, we will employ \textbf{Identity 1} below for the difference of Christoffel symbols. In particular
\begin{align} \label{0101}
\begin{split}
\Gamma^{l}_{ij}-\barG^{l}_{ij}
& = \frac{\by^{l}}{2u^{2}}(\barna_{j}Y_{i}+ \barna_{i}Y_{j})
-w^{l} \pi_{ij} + f_{i}f_{j}\tilG^{l}_{tt}
+ f_{i}\tilG^{l}_{jt}  + f_{j}\tilG^{l}_{it}  \\
&=\frac{\by^{l}}{2u^{2}}(\nabla_{j}Y_{i}+\nabla_{i}Y_{j})
+\frac{\by^{l}}{u^{2}}(\Gamma^{m}_{ij}-\barG^{m}_{ij})Y_{m}
-w^{l} \pi_{ij} + f_{i}f_{j}\tilG^{l}_{tt}
+ f_{i}\tilG^{l}_{jt}  + f_{j}\tilG^{l}_{it}.
\end{split}
\end{align}
Multiply by $Y_{l}$ and solve for $(\Gamma^{l}_{ij}-\barG^{l}_{ij})Y_{l}$ to obtain
\begin{equation} \label{0102}
\frac{\varphi}{u^{2}}(\Gamma^{l}_{ij}-\barG^{l}_{ij})Y_{l}
= \frac{| \by |_{\bg}^{2}}{2u^{2}}(\nabla_{j}Y_{i}+\nabla_{i}Y_{j})
-\pi_{ij}  w^{l}Y_{l} + f_{i}f_{j}\tilG^{l}_{tt} Y_{l}
+ f_{i}\tilG^{l}_{jt}Y_{l}  + f_{j}\tilG_{it}^{l}Y_{l}.
\end{equation}
For simplicity we temporarily assume that $\varphi$ does not vanish, however this will have no affect on the final result, which is valid without any such restriction on $\varphi$.
Substituting \eqref{0102} into the last line of \eqref{0101} produces
\begin{align}\label{0103}
\begin{split}
\Gamma^{l}_{ij}-\barG^{l}_{ij}
=& \frac{\by^{l}}{2\varphi}(\nabla_{j}Y_{i}+\nabla_{i}Y_{j})- \pi_{ij}(w^{l}+ \varphi^{-1}w^{m}Y_{m}\by^{l})  \\
&+f_{i}(\tilG^{l}_{jt}+ \varphi^{-1}\tilG^{m}_{jt}Y_{m}\by^{l})
+f_{j}(\tilG^{l}_{it}+ \varphi^{-1}\tilG^{m}_{it}Y_{m}\by^{l})
+f_{i}f_{j}(\tilG^{l}_{tt} +\varphi^{-1}\tilG^{m}_{tt}Y_{m}\by^{l}).
\end{split}
\end{align}
Use this and $\by^{i}f_{i}=0$ to compute the following expression from \eqref{05-1} in terms of $g$
\begin{align} \label{0104}
\begin{split}
\overline{\nabla}_{ij}f + u^{-1}\bk_{ij}
=& \nabla_{ij}f + \frac{1}{2u^{2}}(Y_{i;j}+Y_{j;i}) + (\Gamma^{l}_{ij}-\barG^{l}_{ij})(f_{l}+u^{-2}Y_{l}) \\
=& -\pi_{ij}(w^{l}f_{l}+\varphi^{-1}w^{l}Y_{l})
+ \nabla_{ij}f + \frac{1}{2\varphi}(\nabla_{j}Y_{i}+\nabla_{i}Y_{j}) \\
&+ f_{i}(\tilG^{l}_{jt}f_{l} + \varphi^{-1}\tilG^{l}_{jt}Y_{l})
+ f_{j}(\tilG^{l}_{it}f_{l} + \varphi^{-1} \tilG^{l}_{it}Y_{l})
+f_{i}f_{j}(\tilG^{l}_{tt}f_{l}+\varphi^{-1}\tilG^{l}_{tt}Y_{l}).
\end{split}
\end{align}
Also from \eqref{05}
\begin{equation} \label{0105}
\overline{\nabla}_{ij}f + u^{-1}\bk_{ij}
= u^{-1}\sqrt{\volbarg}\text{ }\! \pi_{ij}
- f_{i}(\tilG^{t}_{it} -  \tilG^{l}_{jt} f_{l} )
- f_{j}(\tilG^{t}_{it}-  \tilG^{l}_{jt} f_{l} )
+ f_{i}f_{j} \tilG^{l}_{tt}f_{l}.
\end{equation}
Now compare \eqref{0104} and \eqref{0105} and solve for $\pi$ to obtain
\begin{equation} \label{0106}
\begin{split}
 \frac{u}{\varphi \sqrt{\volbarg}} \pi_{ij}
=& \left( \frac{\sqrt{\volbarg}}{u} + w^{l}f_{l}+ \frac{w^{l}Y_{l}}{\varphi}\right) \pi_{ij}\\
=& \nabla_{ij}f + \frac{1}{2\varphi}(\nabla_{j}Y_{i}+\nabla_{i}Y_{j})+ f_{i}(\tilG^{t}_{jt} + \varphi^{-1}\tilG^{l}_{jt}Y_{l})
+ f_{j}(\tilG^{t}_{it} + \varphi^{-1}\tilG^{l}_{it}Y_{l}) \\
=&  \nabla_{ij}f + \frac{1}{2\varphi}(\nabla_{j}Y_{i}+\nabla_{i}Y_{j})+  \frac{1}{2}f_{i}(\log\varphi)_{j} + \frac{1}{2}f_{j}(\log\varphi)_{i}.
\end{split}
\end{equation}
Therefore
\begin{equation} \label{0107}
\pi_{ij} = \frac{1}{u\sqrt{\volg}}
\left( \varphi \nabla_{ij}f + \frac{1}{2}(\nabla_{j}Y_{i}+\nabla_{i}Y_{j}) +  \frac{1}{2} f_{i}\varphi_{j} + \frac{1}{2}f_{j}\varphi_{i}\right),
\end{equation}
where \eqref{1001} was used.
Notice that $Y_{i}$ depends on $f$ through the expression
\begin{equation} \label{0108}
Y_{i}= \by^{\phi}\bg_{\phi i} = \by^{\phi} (g_{\phi i}+f_{i}Y_{\phi})
= \by^{\phi} g_{\phi i} + f_{i} | \by |_{\bg}^{2},
\end{equation}
from which it follows that
\begin{equation} \label{0109}
\nabla_{j}Y_{i}+\nabla_{i}Y_{j}= 2 | \by |_{\bg}^{2} \nabla_{ij}f
+ \partial_{i}f \partial_{j}|\by |_{\bg}^{2}
+\partial_{j}f \partial_{i}| \by |_{\bg}^{2}
+ g_{\phi j}\partial_{i}\byp + g_{\phi i}\partial_{j}\byp.
\end{equation}
Inserting \eqref{0109} into \eqref{0107} then produces
\begin{equation}\label{175}
\pi_{ij}=\frac{u\nabla_{ij}f+u_{i}f_{j}+u_{j}f_{i}+\frac{1}{2u}(g_{i\phi}Y^{\phi}_{,j}+g_{j\phi}Y^{\phi}_{,i})}{\sqrt{1+u^{2}|\nabla f|_{g}^{2}}}.
\end{equation}

The remaining part of the proof consists of verifying several identities.

\subsection*{Identity 1}
\begin{equation}\label{06}
\begin{split}
\Gamma^{l}_{ij}-\barG^{l}_{ij}
=& -w^{l} \pi_{ij} + u^{-1}\by^{l}\bk_{ij} + \frac{1}{2}f_{i}f_{j}\varphi^{l} \\
&+ f_{i}\left(-\frac{1}{2}\bg^{lm}(\overline{\nabla}_{j}Y_{m}-\overline{\nabla}_{m}Y_{j}) + \frac{\by^{l}}{2 u^{2}}(\varphi_{j}+\by^{m}(\overline{\nabla}_{j}Y_{m}-\overline{\nabla}_{m}Y_{j})) \right)  \\
& + f_{j}\left(-\frac{1}{2}\bg^{lm}(\overline{\nabla}_{i}Y_{m}-\overline{\nabla}_{m}Y_{i}) + \frac{\by^{l}}{2 u^{2}}(\varphi_{i}+\by^{m}(\overline{\nabla}_{i}Y_{m}-\overline{\nabla}_{m}Y_{i})) \right)
\end{split}
\end{equation}

\begin{proof}
Observe that
\begin{equation} \label{176}
N=w+\frac{1}{u\sqrt{1-|\overline{\nabla} f|_{\overline{g}}^{2}}}\partial_{t},
\end{equation}
which yields
\begin{align} \label{07}
\begin{split}
\widetilde{\nabla}_{X_{i}} X_{j}
&=\nabla_{X_{i}} X_{j} - \widetilde{g}( \tilna_{X_{i}} X_{j}, N ) N   \\
&=\Gamma^{l}_{ij} X_{l} + \pi_{ij} N  \\
&= \Gamma^{l}_{ij} X_{l} + \pi_{ij} w^{l}X_{l}+ u^{-1}\sqrt{\volbarg}\text{ }\! \pi_{ij} \partial_{t}.
\end{split}
\end{align}
Alternatively, using \eqref{05} produces
\begin{align} \label{08}
\begin{split}
\tilna_{X_{i}} X_{j}
=& (\tilG^{l}_{ij} + f_{i}\tilG^{l}_{jt} +  f_{j}\tilG^{l}_{it} + f_{i}f_{j}\tilG^{l}_{tt}) \partial_{l}
+ (\partial_{ij}f+\tilG^{t}_{ij} + f_{i}\tilG^{t}_{jt} +  f_{j}\tilG^{t}_{it} + f_{i}f_{j}\tilG^{t}_{tt}) \partial_{t}     \\
=& (\tilG^{k}_{ij} + f_{,i}\tilG^{k}_{jt} +  f_{,j}\tilG^{k}_{it} + f_{,i}f_{,j}\tilG^{k}_{tt}) X_{k}  \\
& + \left(\partial_{ij}f+\tilG^{t}_{ij} + f_{i}\tilG^{t}_{jt} +  f_{j}\tilG^{t}_{it} + f_{i}f_{j}\tilG^{t}_{tt} - (\tilG^{l}_{ij} + f_{i}\tilG^{l}_{jt} +  f_{j}\tilG^{l}_{it} + f_{i}f_{j}\tilG^{}_{tt})f_{l}\right) \partial_{t}   \\
&= (\tilG^{l}_{ij} + f_{i}\tilG^{}_{jt} +  f_{j}\tilG^{l}_{it} + f_{i}f_{j}\tilG^{l}_{tt}) X_{l}
+ u^{-1}\sqrt{\volbarg}\text{ }\!\pi_{ij} \partial_{t}.
\end{split}
\end{align}
Therefore by comparing \eqref{07} and \eqref{08}
\begin{equation} \label{09}
\Gamma^{l}_{ij} - \barG^{l}_{ij} = -w^{l} \pi_{ij} + u^{-1}\by^{l}\bk_{ij}
+ f_{i}\tilG^{l}_{jt} + f_{j}\tilG^{l}_{it} + f_{i}f_{j}\tilG^{l}_{tt},
\end{equation}
from which the desired result follows with the help of \eqref{02} and \eqref{174}.
\end{proof}

It will be assumed that $k$ and $\pi$ are extended trivially to all of $\overline{M}\times\mathbb{R}$, in that $k(\partial_{t},\cdot)=\pi(\partial_{t},\cdot)=0$.

\subsection*{Identity 2}
\begin{equation} \label{010}
div_{g}k(w)
=u^{-1} \bdiv(uk(w, \cdot)) + w(k(w, w)) -\widetilde{g} (k,\pi) - 2\tg ( k(w, \cdot), \pi(w, \cdot)) + (Tr_{\tg}\pi)k(w, w)
\end{equation}

\begin{proof}
Direct calculation yields
\begin{align} \label{011}
\begin{split}
div_{g}k(w)
=& \left(\bg^{ij} -u^{-2}\by^{i}\by^{j} + w^{i}w^{j}\right)w^{l}\nabla_{j} k_{il}   \\
=& \left(\bg^{ij} -u^{-2}\by^{i}\by^{j} + w^{i}w^{j}\right)w^{l}
(\overline{\nabla}_{j}k_{il} - (\Gamma_{ij}^{m}-\barG_{ij}^{m}) k_{ml} -(\Gamma_{jl}^{m}-\barG_{jl}^{m}) k_{im}) \\
=& \bdiv k(w) - u^{-2}\barna_{\by}k(\by,w) +  \barna_{w}k(w,w)  \\
& -\left(\bg^{ij} -u^{-2}\by^{i}\by^{j} + w^{i}w^{j}\right)w^{l}
\left( (\Gamma_{ij}^{m}-\barG_{ij}^{m}) k_{ml} +(\Gamma_{jl}^{m}-\barG_{jl}^{m}) k_{im} \right).
\end{split}
\end{align}
Next, each term on the right-hand side of \eqref{011} will be computed separately. Let
\begin{equation}\label{177}
A_{ij}= u^{-1}\bk_{ij}
+ f_{i}(\tilG_{jt}^{t}-\tilG_{jt}^{l}f_{l})
+f_{j}(\tilG_{it}^{t}-\tilG_{it}^{l}f_{l})-f_{i}f_{j}\tilG_{tt}^{l}f_{l}.
\end{equation}

\subsection*{Identity 2-1}
\begin{align}\label{012}
\begin{split}
\overline{\nabla}_{j}w_{i}
&= \pi_{ij}+ \left(\pi(w, \partial_{j}) -  \pi \left(\frac{\by}{u\sqrt{\volbarg}}, \partial_{j}\right)\right) w_{i} - \frac{2 u_{j}} {u^{2}\sqrt{\volbarg}}Y_{i}\\
&  - \frac{u}{\sqrt{\volbarg}}\left(A_{ij}-u^{-2}\overline{\nabla}_{j}Y_{i}\right)
 + \frac{1}{\volbarg}\left((\log u)_{i} - u^{2}A(\barna f, \partial_{j})\right) w_{i}
\end{split}
\end{align}

\begin{proof}
Observe that
\begin{align} \label{013}
\begin{split}
\overline{\nabla}_{j}w_{i}
&= \overline{\nabla}_{j}\left(\frac{u\barna  f + u^{-1} \by}{\sqrt{\volbarg}}\right)_{i} \\
&= \frac{u(\overline{\nabla}_{ij}f + u^{-2} \overline{\nabla}_{j}Y_{i} + Y_{i}\partial_{j}u^{-2})}{\sqrt{\volbarg}}
+ \frac{1}{\volbarg}\left((\log u)_{j} + u^{2} \overline{g}^{lm}f_{m}\overline{\nabla}_{lj}f\right)w_{i}.
\end{split}
\end{align}
Now substitute the following expressions
\begin{equation}\label{178}
 \overline{\nabla}_{ij}f = u^{-1}\sqrt{\volbarg}\text{ }\! \pi_{ij} -A_{ij},
\end{equation}
\begin{equation}\label{179}
\overline{g}^{lm}f_{m}\overline{\nabla}_{lj}f
= \left(u^{-1}\sqrt{\volbarg}\text{ }\!w^{l}-u^{-2}\by^{l}\right)\left(u^{-1}\sqrt{\volbarg}\text{ }\! \pi_{lj}\right)- A(\overline{\nabla}f,\partial_{j}).
\end{equation}
\end{proof}

\subsection*{Identity 2-2}
\begin{equation} \label{014}
\begin{split}
 \bdiv k(w)
=& \bdiv (k(w, \cdot))- \bg ( k, \pi )
- \bg \left( k(w, \cdot), \pi\left(w-\frac{\by}{u\sqrt{\volbarg}}, \cdot \right) \right) \\
&  + \frac{u\overline{g}^{il}\overline{g}^{jm}k_{lm}\left(A_{ij}-u^{-2}\overline{\nabla}_{j}Y_{i} \right) }{\sqrt{\volbarg}}
+ \frac{u^{2}\bg ( k(\bw, \cdot), A(\barna f, \cdot))}{\volbarg}    \\
& + 2k \left(\by, \frac{\barna u}{u^{2}\sqrt{\volbarg}} \right)
 - k \left(\bw, \frac{\barna u}{u (\volbarg)} \right)
\end{split}
\end{equation}

\begin{proof}
Since
\begin{equation}\label{180}
\bdiv k(w) = \bdiv(k(w, \cdot)) - \overline{g}^{il}\overline{g}^{jm}k_{lm}\overline{\nabla}_{j}w_{i},
\end{equation}
the desired result follows from \textbf{Identity 2-1}.
\end{proof}

\subsection*{Identity 2-3}
\begin{equation} \label{015}
\barna_{\by}k(\by,w)
=-k(\barna_{\by}\by,w)
-\bg ( k(\by,\cdot), \pi(\by,\cdot) )
 +\frac{u\bg ( k(\by,\cdot), A(\by, \cdot))}{\sqrt{\volbarg}}
- \frac{k \left(\by,\barna_{\by}\by \right)}{u\sqrt{\volbarg}}
\end{equation}

\begin{proof}
Since $\by=\by^{\phi}\partial_{\phi}$
\begin{equation}\label{016}
\barna_{\by}k(\by,w) = \by (k(\by,w))-k\left(\barna_{\by}\by, w\right) - k\left(\barna_{\by} w,\by\right)
=-k\left(\barna_{\by}\by, w \right) - k\left(\barna_{\by} w,\by \right),
\end{equation}
and
\begin{equation} \label{017}
\left(\barna_{\by} w \right)_{i}
= \frac{u\by^{l}\overline{\nabla}_{il}f + u^{-1} \by^{l}\overline{\nabla}_{l}Y_{i}}{\sqrt{\volbarg}}
= \pi(\partial_{i}, \by)
 - \frac{u}{\sqrt{\volbarg}}\left(A(\partial_{i}, \by)- \frac{(\barna_{\by}\by)_{i}}{u^{2}}\right).
\end{equation}
Insert \eqref{017} into \eqref{016} to get \eqref{015}.
\end{proof}

\subsection*{Identity 2-4}
\begin{align}\label{018}
\begin{split}
\barna_{w}k(w,w)
&= w(k(w,w)) -2\bg ( k(w,\cdot), \pi(w,\cdot))
  -u^{-2}(\volbarg) k(w,w)w(\varphi) \\
& -2 k(w,w)\pi(w,w)+\frac{4k(w,\by)w(u)+2u^{3}\bg( k(w,\cdot), A(w,\cdot))
 - 2u k(w, \barna_{w}\by )}{u^{2}\sqrt{\volbarg}}
\end{split}
\end{align}

\begin{proof}
Observe that
\begin{equation} \label{019}
\barna_{w}k(w,w) = w(k(w,w))-2k \left(\barna_{w}w, w \right),
\end{equation}
and with the help of \textbf{Identity 2-1}
\begin{align}\label{020}
\begin{split}
\left(\barna_{w} w \right)_{i} = w^{l} \overline{\nabla}_{l}w_{i}
=& \pi(\partial_{i}, w)+\pi(w, w)w_{i} \\
&- \frac{u}{\sqrt{\volbarg}} \left(A(\partial_{i}, w)- u^{-2}\left(\barna_{w}\by\right)_{i} \right)
 -\frac{2w(u)}{u^{2}\sqrt{\volbarg}}Y_{i} \\
& + \left(\frac{u^{-1}w(u) - u^{2}A(\barna f, w)}{\volbarg}-  \frac{\pi(w,\by)}{u\sqrt{\volbarg}}\right) w_{i}.
\end{split}
\end{align}
The last line of \eqref{020} will now be computed directly using definition of $A$, $u$, and $\bk$, along with
$\by^{l}\overline{\nabla}_{il}f=-\overline{g}^{lj}f_{j}\overline{\nabla}_{i}Y_{l}$, $\overline{g}^{im}f_{m}\overline{g}^{jl}f_{l}\overline{\nabla}_{j}Y_{i}=0$. Namely
\begin{align} \label{021}
\begin{split}
\pi(w, \by) &= \frac{u}{\sqrt{\volbarg}}\left(w^{i}\by^{j}\overline{\nabla}_{ij}f + u^{-1}\bk(w,\by)
+ \frac{1}{2}w^{i}f_{i}\by^{i}\overline{g}^{jl}f_{l}(\overline{\nabla}_{i}Y_{j} - \overline{\nabla}_{j}Y_{i}) \right) \\
&= - \frac{\by^{i}\overline{g}^{jl}f_{l}\overline{\nabla}_{i}Y_{j} }{\volbarg}
+ \frac{\by^{i}\overline{g}^{jl}f_{l}(\overline{\nabla}_{j}Y_{i}+\overline{\nabla}_{i}Y_{j})}{2(\volbarg)}
+ \frac{u^{2}|\barna f|_{\overline{g}}^{2}\by^{i}\overline{g}^{jl}f_{l}(\overline{\nabla}_{i}Y_{j} - \overline{\nabla}_{j}Y_{i})}{2 (\volbarg)} \\
&= \frac{1}{2} \by^{i}\overline{g}^{jl}f_{l}(\overline{\nabla}_{j}Y_{i}-\overline{\nabla}_{i}Y_{j}),
\end{split}
\end{align}
\begin{equation} \label{021-1}
w(u) = \frac{w \left(\varphi + | \by |_{\bg}^{2} \right)}{2u},
\end{equation}
and
\begin{align} \label{021-2}
\begin{split}
A(\barna f, w)
=& \frac{\sqrt{\volbarg}}{u} A(w, w) - \frac{1}{u^{2}}A(\by, w)   \\
=& \frac{\sqrt{\volbarg}}{u} \left(u^{-1}\bk(w, w)
+  u^{-2}w^{i}f_{i}w(\varphi) - \frac{1}{2}(w^{i}f_{i})^{2}\barna f(\varphi) \right) \\
&- \frac{1}{u^{2}} \left(\bk\left(\frac{\barna f}{\sqrt{\volbarg}}, \by\right)
+ \frac{1}{2}w^{l}f_{l}\by^{i}\overline{g}^{jm}f_{m}(\overline{\nabla}_{i}Y_{j} - \overline{\nabla}_{j}Y_{i}) \right)
\\
=& \frac{w(\varphi)}{2} \left(\frac{2|\barna f|_{\overline{g}}^{2}}{u^{2}}-|\barna f|_{\overline{g}}^{4} \right)
+ \frac{\by^{i}\overline{g}^{jm}f_{m}(\overline{\nabla}_{j}Y_{i} + \overline{\nabla}_{i}Y_{j})}{2u^{3}\sqrt{\volbarg}}\\
&
+ \frac{|\barna f|_{\overline{g}}^{2}\by^{i}\overline{g}^{jm}f_{m}(\overline{\nabla}_{j}Y_{i} - \overline{\nabla}_{i}Y_{j}) }{2u \sqrt{\volbarg}}.
\end{split}
\end{align}
Upon using \eqref{021}, \eqref{021-1}, and \eqref{021-2}, the last line of \eqref{020} simplifies by
\begin{equation} \label{022}
\frac{\pi(w, \by)}{u\sqrt{\volbarg}} -\frac{w(u)}{u(\volbarg)}
+ \frac{u^{2} A(\barna f,w)}{\volbarg}
= - \frac{(\volbarg)}{2 u^{2}}w(\varphi).
\end{equation}
The desired result is now obtained by substituting \eqref{020} and \eqref{022} into \eqref{019}.
\end{proof}

The following term from \eqref{011} may be rewritten by combining \textbf{Identity 2-2}, \textbf{Identity 2-3}, and \textbf{Identity 2-4}
\begin{align}\label{023}
\begin{split}
&\bdiv k(w) - u^{-2} \barna_{\by}k(\by,w)  +  \barna_{w}k(w,w) \\
=& \bdiv(k(w, \cdot))+w(k(w,w))- 3\bg ( k(w, \cdot), \pi(w, \cdot))
-2 k(w,w)\pi(w,w)- \bg (k, \pi) \\
&+u^{-2}\bg ( k(\by,\cdot), \pi(\by,\cdot))
+u^{-2}k\left(\barna_{\by}\by, w \right)
-u^{-2} (\volbarg)k(w,w)w(\varphi)
  \\
&+\frac{4k(w,\by)w(u)}{u^{2}\sqrt{\volbarg}}
- \frac{2k \left(w, \barna_{w}\by \right)}{u\sqrt{\volbarg}}
+ \frac{k \left(\by,\barna_{\by}\by \right)}{u^{3}\sqrt{\volbarg}}
+  \frac{2k \left(\by,\barna u\right)}{u^{2}\sqrt{\volbarg}} \\
& - \frac{k \left(w, \barna u\right)}{u (\volbarg)}
 + \frac{u\overline{g}\left(k, \left(A-u^{-2}\overline{\nabla}Y \right)\right) }{\sqrt{\volbarg}}
 +\frac{3 u\bg ( k(w,\cdot), A(w,\cdot))}{\sqrt{\volbarg}}   \\
&+ \frac{\bg ( k(w, \cdot), \pi(\by, \cdot))}{u\sqrt{\volbarg}}
- \frac{\bg ( k(w, \cdot), A(\by, \cdot)) }{\volbarg}
-\frac{\bg ( k(\by,\cdot), A(\by , \cdot))}{u\sqrt{\volbarg}}.
\end{split}
\end{align}
Each term involving $A$ will now be computed with \eqref{02}, \eqref{174}, and
\begin{equation}\label{181}
\barna f = u^{-1}\sqrt{\volbarg}\text{ }\!w - u^{-2}\by.
\end{equation}
Namely
\begin{align} \label{024}
\frac{u\overline{g}\left(k, \left(A-u^{-2}\overline{\nabla}Y \right)\right) }{\sqrt{\volbarg}}
&=2k \left(w-\frac{\by}{u\sqrt{\volbarg}}, \bg^{jl}(\tilG^{t}_{jt}-\tilG^{m}_{jt}f_{m})\partial_{l} \right) \nonumber\\
&- \tilG^{l}_{tt}f_{l} k\left(w-\frac{\by}{u\sqrt{\volbarg}}, u^{-1}\sqrt{\volbarg}\text{ }\!w-u^{-2}\by \right)\nonumber\\
&=2k \left(w-\frac{\by}{u\sqrt{\volbarg}}, \frac{\barna \varphi}{2 u^{2}}
+\frac{\sqrt{\volbarg}}{2u}\bg^{jl}w^{i}(\overline{\nabla}_{j}Y_{i}-\overline{\nabla}_{i}Y_{j})\partial_{l} \right) \\
&- \frac{\barna f (\varphi)}{2} k\left(w-\frac{\by}{u\sqrt{\volbarg}}, u^{-1}\sqrt{\volbarg}\text{ }\!w-u^{-2}\by \right),\nonumber
\end{align}
\begin{align} \label{024-1}
\begin{split}
\frac{u\bg ( k(w,\cdot), A(w,\cdot))}{\sqrt{\volbarg}}
=& \frac{\bg ( k(w,\cdot), \bk(w,\cdot)) }{\sqrt{\volbarg}}
+k \left(w, w- \frac{\by}{u\sqrt{\volbarg}} \right)
w^{j}(\tilG^{t}_{jt}-\tilG^{l}_{jt}f_{l}) \\
&
-(w^{m}f_{m})\tilG^{l}_{tt}f_{l}k\left(w, w- \frac{\by}{u\sqrt{\volbarg}} \right) \\
&+\frac{u^{2} |\barna f|_{\overline{g}}^{2}}{\volbarg} k \left(w, \bg^{jl}(\tilG^{t}_{jt}-\tilG^{m}_{jt}f_{m})\partial_{l} \right)\\
=&
\frac{u^{2} |\barna f|_{\overline{g}}^{2}}{\volbarg}k \left(w, \frac{\barna \varphi}{2u^{2}}
+ \frac{\sqrt{\volbarg}}{2u}\bg^{jl}w^{i}(\overline{\nabla}_{j}Y_{i}-\overline{\nabla}_{i}Y_{j})\partial_{l} \right) \\
&+k \left(w, w- \frac{\by}{u\sqrt{\volbarg}} \right) \frac{(\volbarg)w(\varphi)}{2u^{2}}\\
&+k \left(w, \frac{\bg^{jl}w^{i}(\overline{\nabla}_{j}Y_{i}+\overline{\nabla}_{i}Y_{j})\partial_{l}}{2u\sqrt{\volbarg}} \right),
\end{split}
\end{align}
\begin{equation} \label{024-2}
 \frac{\bg( k(w, \cdot), \pi (\by, \cdot ))}{u\sqrt{\volbarg}}
  - \frac{\bg ( k(w, \cdot), A(\by, \cdot))}{\volbarg}
= -\frac{k \left(w, \bg^{jl}\overline{g}^{im}f_{m}\overline{\nabla}_{j}Y_{i}\partial_{l} \right)}{\volbarg},
\end{equation}
and
\begin{align} \label{024-3}
\begin{split}
\frac{\bg ( k(\by,\cdot), A(\by, \cdot)) }{u\sqrt{\volbarg}}
=& \frac{k \left(\by, \bg^{jl}\by^{i}(\overline{\nabla}_{j}Y_{i}+\overline{\nabla}_{i}Y_{j})\partial_{l} \right)}{2u^{3}\sqrt{\volbarg}}\\
&
- k \left(\by, \frac{w}{u^{2}} - \frac{\by}{u^{3}\sqrt{\volbarg}} \right)\pi(w, \by).
\end{split}
\end{align}
Now use \eqref{024}-\eqref{024-3} as well as the formula
\begin{equation}\label{182}
u\barna^{l}{u}= \frac{1}{2}\barna^{l}\varphi + \bg^{jl}\by^{i}\overline{\nabla}_{j}Y_{i},
\end{equation}
to rewrite the last three lines of \eqref{023} together with the last term on the third line by
\begin{align} \label{025}
\begin{split}
 &- k \left (\by , \frac{\bg^{jl}f^{i}\overline{\nabla}_{j}Y_{i}\partial_{l}}{u\sqrt{\volbarg}} \right)
+k \left(\by ,\frac{\bg^{jl}\by^{i}(\overline{\nabla}_{j}Y_{i}+\overline{\nabla}_{i}Y_{j})\partial_{l}}{2u^{3}\sqrt{\volbarg}} \right)
\\
&+ u^{-2}k \left(\by , \barna_{w}\by \right)
+ k \left(w , \frac{1}{2}\bg^{jl}w^{i}(\overline{\nabla}_{j}Y_{i}-\overline{\nabla}_{i}Y_{j})\partial_{l} \right)
\left(\frac{3 + u^{2} |\barna f|_{\overline{g}}^{2} }{u\sqrt{\volbarg}} \right)\\
&+k(w, \by)\left(\frac{4w(u)}{u^{2}\sqrt{\volbarg}} - \frac{\sqrt{\volbarg}\text{ }\!w(\varphi)}{2u^{3}} +\frac{1}{u^{2}} \pi(w , \by) \right)\\
&-k(\by, \by) \left(\frac{w(\varphi)}{2u^{4}} + \frac{\pi(w, \by)}{u^{3}\sqrt{\volbarg}} \right)
+ k \left(w , \frac{1}{2}\barna \varphi \right)
\left(\frac{1+u^{2}|\barna f |_{\overline{g}}^{2}}{u^{2}(\volbarg)} \right).
\end{split}
\end{align}
Note also that
\begin{align}\label{183}
\begin{split}
\overline{g}^{il}f_{l}\overline{\nabla}_{j}Y_{i}=&-\by^{i}\overline{\nabla}_{ij}f\\
 =& -u^{-1}\sqrt{\volbarg}\text{ }\!\pi(\by, \partial_{j})+A(\by, \partial_{j})\\
 =& -u^{-1}\sqrt{\volbarg}\text{ }\!\pi(\by, \partial_{j}) + \frac{1}{2}u^{-2}\by^{i}(\overline{\nabla}_{j}Y_{i} + \overline{\nabla}_{i}Y_{j})\\
& -u \sqrt{1-u^{2} | \overline{\nabla}f|_{\overline{g}}^{2}}\pi(w, \by)\left(\frac{\overline{g}_{ij}w^{i}}{u^{2}}-\frac{Y_{j}}{u^{3}\sqrt{\volbarg}}\right),
\end{split}
\end{align}
and
\begin{equation}\label{184}
\overline{g}^{lm}f_{m}\overline{Y}^{j}\overline{\nabla}_{j}Y_{i}
= -\by^{l}\by^{j}\overline{\nabla}_{jl}f= -u^{-1}\sqrt{\volbarg}\text{ }\!\pi(\by, \by),
\end{equation}
so that \eqref{023} becomes
\begin{align}\label{026}
\begin{split}
&\bdiv k(w) - u^{-2} \barna_{\by}k(\by,w)  +  \barna_{w}k(w,w)   \\
=& \bdiv(k(w, \cdot)) + w(k(w, w))
-3\bg ( k(w, \cdot), \pi(w, \cdot) ) -2k(w,w)\pi(w,w)-\bg ( k,\pi )   \\
&+ 2u^{-2}\bg ( k(\by, \cdot),\pi(\by, \cdot))
-u^{-4}k(\by,\by)\pi(\by,\by) +u^{-2}k \left(w, \barna_{\by}\by \right)
+ u^{-2}k \left(\by, \barna_{w}\by \right) \\
&
-u^{-3}w(u)k(\by, \by)
+ \frac{1}{2}k(w, \barna \varphi)\frac{1+ u^{2} | \barna f |_{\overline{g}}^{2}}{u^{2}(\volbarg)}  \\
&+ k(w, \by) \left(\frac{4w(u)}{u^{2}\sqrt{\volbarg}}
- \frac{\sqrt{\volbarg}}{2 u^{3}}w(\phi)
+ \frac{2}{u^{2}}\pi(w, \by) \right)  \\
&+ \frac{1}{2}k \left(w, w^{l}(\overline{\nabla}_{i}Y_{l}-\overline{\nabla}_{l}Y_{i})\overline{g}^{ij}\partial_{j} \right)
\frac{3+u^{2} |\barna f |_{\overline{g}}^{2}}{u\sqrt{\volbarg}}.
\end{split}
\end{align}

The last line of \eqref{011} will now be computed.

\subsection*{Identity 2-5}
\begin{align}\label{027}
\begin{split}
&\left(\bg^{ij} -u^{-2}\by^{i}\by^{j} + w^{i}w^{j} \right)w^{l}
\left( (\Gamma_{ij}^{m}-\barG_{ij}^{m}) k_{ml} + (\Gamma_{jl}^{m}-\barG_{jl}^{m}) k_{im} \right) \\
=& - (Tr_{\tg}\pi)k(w,w) -2k(w,w)\pi(w,w) -\bg ( k(w, \cdot), \pi(w,\cdot)) \\
&-\frac{1}{2u^{2}}k \left(w, \barna | \by |_{\bg}^{2} \right)
+u^{-2}k \left(w, \barna_{\by}\by \right)
+ u^{-2}k \left(\by, \barna_{w}\by \right)
-u^{-3}w(u)k(\by, \by)\\
&+ k(w, \by) \left(\frac{4w(u)}{u^{2}\sqrt{\volbarg}}
- \frac{\sqrt{\volbarg}}{2 u^{3}}w(\varphi) \right) \\
&+\frac{1}{2}k \left(w, w^{l}(\overline{\nabla}_{i}Y_{l}-\overline{\nabla}_{l}Y_{i})\overline{g}^{ij}\partial_{j}\right)
\frac{3+u^{2} |\barna f |_{\overline{g}}^{2}}{u\sqrt{\volbarg}}
+ k\left(w, \barna \varphi \right) \frac{ | \barna f |_{\overline{g}}^{2}}{\volbarg}
\end{split}
\end{align}

\begin{proof}
Recall that
\begin{equation}\label{185}
\Gamma_{ij}^{l}-\barG_{ij}^{l}=-w^{l}\pi_{ij} + u^{-1}\by^{l}\bk_{ij} + f_{i}\tilG_{jt}^{l}+ f_{j}\tilG_{it}^{l} + f_{i}f_{j}\tilG_{tt}^{l},
\end{equation}
and so
\begin{equation}\label{028.1}
 \tg^{ij}(\Gamma_{ij}^{l}-\barG_{ij}^{l})
= -w^{l} Tr_{\tg}\pi +  2\overline{g}^{ij}f_{i}\tilG_{jt}^{l}
 + | \barna f |_{\overline{g}}^{2} \tilG_{tt}^{l},
\end{equation}
\begin{equation}\label{028.2}
 w^{i}w^{j}(\Gamma_{ij}^{l}-\barG_{ij}^{l})
=-w^{l}\pi(w,w) + u^{-1}\by^{l}\bk(w,w)
+2w^{m}f_{m}w^{j}\tilG_{jt}^{l} + (w^{m}f_{m})^{2}\tilG_{tt}^{l},
\end{equation}
\begin{align}\label{028.3}
\begin{split}
\bg^{ij}w^{l}(\Gamma_{jl}^{m}-\barG_{jl}^{m}) k_{im}
=& -\bg ( k(w, \cdot),\pi(w, \cdot)) + u^{-1}\bg ( k(\by, \cdot),\bk (w, \cdot)) \\
&+w^{l}f_{l}\bg^{ij}\tilG_{jt}^{m}k_{im}
+ w^{l}f_{l}k(\barna f, \tilG_{tt}^{m} \partial_{m})
+ k(\barna f, w^{l}\tilG_{lt}^{m} \partial_{m}),
\end{split}
\end{align}
and
\begin{equation}\label{028.4}
  \by^{i}\by^{j}w^{l}(\Gamma_{jl}^{m}-\barG_{jl}^{m}) k_{im}
= -k(w,\by)\pi(w,\by)+ u^{-1} k(\by,\by)\bk(w,\by)
+  w^{i}f_{i} k(\by, \by^{l}\tilG_{lt}^{m} \partial_{m}).
\end{equation}
Next, evaluate \eqref{028.1}-\eqref{028.4} with \eqref{02} and \eqref{174}.
Using also \eqref{181} and
\begin{equation}\label{186}
\tilG_{it}^{m}= \tilG_{it}^{t}\by^{m}
- \frac{1}{2}\bg^{ml}(\overline{\nabla}_{i}Y_{l}-\overline{\nabla}_{l}Y_{i}),
\end{equation}
leads to \eqref{027}.
\end{proof}

To complete the proof of \textbf{Identity 2}, subtract \textbf{Identity 2-5} from \eqref{026} to find
\begin{align}\label{029}
\begin{split}
 div_{g}k(w)
=& \bdiv(k(w, \cdot))
 +u^{-1}k \left(w, \barna u \right)
+ w (k(w, w))  +(Tr_{\tg}\pi)k(w,w)\\
& -\bg ( k,\pi )+ 2u^{-2} \bg ( k(\by, \cdot),\pi(\by, \cdot))
-u^{-4}k(\by,\by)\pi(\by,\by)  \\
&
-2\bg ( k(w, \cdot), \pi(w, \cdot))
+ 2u^{-2}k(w, \by)\pi(w, \by)  \\
=&u^{-1} \bdiv(u k(w, \cdot)) + w(k(w,w)) -\tg ( k,\pi ) - 2\tg ( k(w, \cdot), \pi(w, \cdot))
+ (Tr_{\tg}\pi)k(w, w).
\end{split}
\end{align}
\end{proof}

\subsection*{Identity 3}
\begin{align}\label{030}
\begin{split}
div_{g}k(\partial_{\phi})
=& \bdiv(k(\partial_{\phi}, \cdot)) + u^{-1}\bdiv(u k(\partial_{\phi}, w)w)
-g ( k(\partial_{\phi}, \cdot), \pi(v, \cdot) ) \\
&+ \bg ( k(\partial_{\phi}, \cdot), \bk (u\barna f, \cdot)) -u^{-1}\barna f(u) k(\partial_{\phi}, \by)
\end{split}
\end{align}

\begin{proof}
Observe that
\begin{equation}\label{187}
g^{ij}\Gamma_{j\phi}^{l}k_{li}=\frac{1}{2}k^{lj}(\partial_{j}g_{l \phi}- \partial_{l}g_{j \phi})=0,
\end{equation}
and so
\begin{align} \label{031}
\begin{split}
div_{g}k(\partial_{\phi})
=& g^{ij}\nabla_{j}k_{i\phi}   \\
=& g^{ij}(\partial_{j}(k_{i \phi}) - \Gamma_{ij}^{l}k_{l \phi}-\Gamma_{j\phi}^{l}k_{li})\\
=& \left(\bg^{ij} -u^{-2}\by^{i}\by^{j} + w^{i}w^{j} \right)
\left( \barna_{\partial_{j}}(k(\partial_{\phi}, \cdot))(\partial_{i})
- (\Gamma_{ij}^{l}-\barG_{ij}^{l}) k_{l\phi} \right) \\
=& \bdiv(k(\partial_{\phi}, \cdot))
+  u^{-2}k \left(\barna_{\by}\by,\partial_{\phi} \right)
+\barna_{w}(k(\partial_{\phi}, \cdot))(w) \\
& -\left(\bg^{ij} -u^{-2}\by^{i}\by^{j} + w^{i}w^{j} \right)
(\Gamma_{ij}^{l}-\barG_{ij}^{l}) k_{ l\phi} .
\end{split}
\end{align}

\subsection*{Identity 3-1}
\begin{equation} \label{032}
\begin{split}
\barna_{w}(k(\partial_{\phi}, \cdot))(w)
=& w(k(w,\partial_{\phi}))
  -g( k(\partial_{\phi},\cdot), \pi(v,\cdot))
+ \bg ( k(\partial_{\phi}, \cdot), \bk(u\barna f, \cdot)) \\
&- u^{-1}\barna f(u)k(\partial_{\phi}, \by)
 - \frac{k\left(\partial_{\phi}, \barna_{w}\by \right)}{u\sqrt{\volbarg}}
+ \frac{2w(u)k(\partial_{\phi}, \by)}{u^{2}\sqrt{\volbarg}} \\
&
+ \frac{|\barna f |_{\overline{g}}^{2}k \left(\partial_{\phi},  u\barna u \right)}{\volbarg}
+ \frac{k \left(\partial_{\phi}, f^{l}\overline{\nabla}^{i}Y_{l}\partial_{i} \right)}{\volbarg}
\end{split}
\end{equation}

\begin{proof}
First note that
\begin{equation} \label{033}
\barna_{w}(k(\partial_{\phi}, \cdot))(w) = w (k(w,\partial_{\phi}))-k \left(\barna_{w}w, \partial_{\phi}\right).
\end{equation}
Use \eqref{020} and \eqref{022} to evaluate $\barna_{w}w$ as in \textbf{Identity 2-4}, then
\begin{align} \label{035}
\begin{split}
\barna_{w}(k(\partial_{\phi}, \cdot))(w)
=& w(k(w,\partial_{\phi}))
-\bg ( k(\partial_{\phi},\cdot), \pi(w,\cdot)) - k(\partial_{\phi},w)\pi(w,w) \\
& -\frac{1}{2}u^{-2}w(\phi)(\volbarg)k(w,\partial_{\phi})
 +\frac{2w(u)k(\partial_{\phi},\by)}{u^{2}\sqrt{\volbarg}}
\\
& +\frac{u(\bg ( k(\partial_{\phi},\cdot), A(w,\cdot))
 - u^{-2} k(\partial_{\phi}, \barna_{w}\by))}{\sqrt{\volbarg}}.
\end{split}
\end{align}
Consider the first term on the last line of \eqref{035}. With help from \eqref{024} and the trivial identity
\begin{equation}\label{188}
w^{i}(\overline{\nabla}_{j}Y_{i}-\overline{\nabla}_{i}Y_{j})
= -w^{i}(\overline{\nabla}_{j}Y_{i}+\overline{\nabla}_{i}Y_{j})+2w^{i}\overline{\nabla}_{j}Y_{i},
\end{equation}
it follows that
\begin{align}\label{036}
\begin{split}
\frac{u\bg ( k(\partial_{\phi},\cdot), A(w,\cdot))}{\sqrt{\volbarg}}
=&k \left(\partial_{\phi}, \frac{\bg^{jl}w^{i}(\overline{\nabla}_{j}Y_{i}+\overline{\nabla}_{i}Y_{j})\partial_{l}}{2u\sqrt{\volbarg}} \right)\\&
+k \left(\partial_{\phi}, w- \frac{\by}{u\sqrt{\volbarg}} \right)\frac{(\volbarg)w(\varphi)}{2u^{2}} \\
&+\frac{u^{2} |\barna f|_{\overline{g}}^{2}}{\volbarg} k\left(\partial_{\phi}, \frac{\barna \varphi}{2u^{2}}
+ \frac{\sqrt{\volbarg}}{2u}\bg^{jl}\bw^{i}(\overline{\nabla}_{j}Y_{i}-\overline{\nabla}_{i}Y_{j})\partial_{l} \right) \\
=&\frac{\sqrt{\volbarg}}{2u}k \left(\partial_{\phi}, \bar{g}^{jl}w^{i}(\overline{\nabla}_{j}Y_{i}+\overline{\nabla}_{i}Y_{j})\partial_{l} \right)
\\
&+k \left(\partial_{\phi}, w- \frac{\by}{u\sqrt{\volbarg}} \right) \frac{(\volbarg)w(\varphi)}{2u^{2}}\\
&+\frac{ |\barna f|_{\overline{g}}^{2}}{\volbarg}k(\partial_{\phi}, u \barna u)
+\frac{u^{2}|\barna f|_{\overline{g}}^{2}}{\volbarg}k \left(\partial_{\phi}, \bg^{jl}f^{i}\overline{\nabla}_{j}Y_{i}\partial_{l} \right).
\end{split}
\end{align}
Now use \eqref{021}, as well as
\begin{equation}\label{189}
v=w-\frac{\sqrt{\volbarg}}{u}\by,\text{ }\text{ }\text{ }\text{ }
\frac{\sqrt{\volbarg}}{u}\pi(\by, \partial_{j})=-f^{l}\overline{\nabla}_{j}Y_{l}+u^{-1}\bk(\by, \partial_{j}) - f_{j}\pi(w,\by),
\end{equation}
to find
\begin{equation} \label{037}
\begin{split}
 \bg ( k(\partial_{\phi},\cdot), \pi(w,\cdot)) + k(\partial_{\phi}, w)\pi(w,w)
&= g( k(\partial_{\phi},\cdot), \pi(v,\cdot)) + u^{-2} k(\partial_{\phi}, \by) \pi(v, \by) \\
&+ \frac{\sqrt{\volbarg}}{u}[k(\partial_{\phi}, w)\pi(w, \by)
+ \bg ( k(\partial_{\phi}, \cdot), \pi(\by, \cdot))] \\
&=g ( k(\partial_{\phi},\cdot), \pi(v,\cdot))
+ u^{-1} \bg ( k(\partial_{\phi},\cdot), \bk(\by,\cdot))\\
&-k \left(\partial_{\phi}, \bg^{jl}f^{i}\overline{\nabla}_{j}Y_{i}\partial_{l} \right)
+ \frac{1}{2u^{2}}k(\partial_{\phi}, \by) \barna f(| \by |_{\bg}^{2}).
\end{split}
\end{equation}
Substituting \eqref{036} and \eqref{037} into \eqref{035} yields \eqref{032}, with the help of \eqref{181}.
\end{proof}

\subsection*{Identity 3-2}
\begin{equation} \label{039}
\begin{split}
& \left(\bg^{ij} -u^{-2}\by^{i}\by^{j} + w^{i}w^{j} \right)
 \left(\Gamma_{ij}^{l}-\barG_{ij}^{l} \right) k_{l\phi}  \\
&= -k(w,\partial_{\phi})(Tr_{\tg}\pi + \pi(w,w))
+ u^{-2}k \left(\partial_{\phi}, \barna_{\by}\by \right)
- \frac{1}{u\sqrt{\volbarg}}
k(\partial_{\phi}, \barna_{w}\by)  \\
&+ 2k(\partial_{\phi}, \by) \left(\frac{w(u)}{u^{2}\sqrt{\volbarg}} \right)
+ k(\partial_{\phi},  u\barna u)\frac{| \barna f |_{\overline{g}} ^{2}}{\volbarg} + \frac{k \left(\partial_{\phi}, \bg^{ij}f^{l}\overline{\nabla}_{i}Y_{l} \partial_{j} \right)}{\volbarg}
\end{split}
\end{equation}

\begin{proof}
Proceed in the same way as in the proof of \textbf{Identity 2-5}. Namely use \eqref{185}-\eqref{028.2}, \eqref{186}, and substitute the expressions \eqref{02} and \eqref{174}.
\end{proof}

Combining \textbf{Identity 3-1} and \textbf{Identity 3-2} produces
\begin{align} \label{041}
\begin{split}
div_{g}k(\partial_{\phi})
=& \bdiv(k(\partial_{\phi}, \cdot))
+\bw(k(w,\partial_{\phi})) + k(w,\partial_{\phi})(Tr_{\tg}\pi + \pi(w,w)) \\
&-g( k(\partial_{\phi}, \cdot),\pi(v, \cdot)) + \bg( k(\partial_{\phi}, \cdot), \bk(u\barna f, \cdot)) -u^{-1}\barna f(u)k(\partial_{\phi}, \by).
\end{split}
\end{align}

\subsection*{Identity 3-3}
\begin{equation}\label{042}
\bg^{ij}\overline{\nabla}_{j}w_{i}= Tr_{\tg}\pi + \pi(w,w) -u^{-1}w(u)
\end{equation}

\begin{proof}
By \textbf{Identity 2-1}
\begin{align}\label{043}
\begin{split}
\bg^{ij}\overline{\nabla}_{j}w_{i}
=& Tr_{\bg}\pi + \pi(w,w)
- \pi \left(\frac{\by}{u \sqrt{\volbarg}}, w \right)+\frac{w(u)}{u(\volbarg)}  \\
& - \frac{u}{\sqrt{\volbarg}} \bg^{ij} \left(A_{ij}- u^{-2}\overline{\nabla}_{j}Y_{i} \right)
-\frac{u^{2}A(\barna f, w)}{\volbarg} \\
=& Tr_{\bg}\pi + \pi(w, w)
- \pi(\by, w)\left(\frac{u |\barna f|_{\overline{g}}^{2}}{ (\volbarg)^{\frac{3}{2}}}+\frac{3}{u\sqrt{\volbarg}} \right)
- \frac{u \bk(w, \barna f)}{\volbarg}
\\
&+\frac{w(u)}{u(\volbarg)}
-\frac{w(\varphi)}{u^{2}(\volbarg)}
+ \frac{|\barna f|_{\overline{g}}^{2} w(\varphi)}{2(\volbarg)} \\
=& Tr_{\bg}\pi + \pi(w, w) -u^{-2}\pi(\by, \by) - u^{-1}w(u) \\
=& Tr_{\tg}\pi + \pi(w, w) - u^{-1}w(u).
\end{split}
\end{align}
The last line holds since
\begin{equation}\label{190}
\pi(\by, \by)= -\frac{uf^{l}\by^{j}\overline{\nabla}_{j}Y_{l}}{\sqrt{\volbarg}}.
\end{equation}
\end{proof}


Inserting \eqref{042} into \eqref{041} yields \textbf{Identity 3}.
\end{proof}

\subsection*{Identity 4}
\begin{align} \label{045}
\begin{split}
\bdiv(\pi(\partial_{\phi}, \cdot))
=& \frac{1}{\sqrt{\volbarg}}\bdiv(u \overline{\nabla}^{2}f(\partial_{\phi}, \cdot))
- u^{-1} \bdiv(u \pi(\partial_{\phi}, w)w)\\
&+ g( \pi(\partial_{\phi}, \cdot),\pi(v, \cdot)) - \bg( \pi(\partial_{\phi}, \cdot), \bk(u\barna f, \cdot))
+ u^{-1}\barna f(u)\pi(\partial_{\phi}, \by)
\end{split}
\end{align}

\begin{proof}
The following condition will be used throughout this proof
\begin{equation} \label{047}
\bdiv \bk(\partial_{\phi})=0.
\end{equation}
A direct computation with \eqref{05-1} produces
\begin{align} \label{048}
\begin{split}
 \bdiv(\pi(\partial_{\phi}, \cdot))
=& \frac{u}{\sqrt{\volbarg}}
\left[\bdiv(\overline{\nabla}^{2}f(\partial_{\phi}, \cdot))
-\bk \left(\partial_{\phi}, u^{-2}\barna u \right)
- \barna f(\pi(w, \partial_{\phi}))\right]\\
&  - \frac{u\pi(w, \partial_{\phi}))\Delta_{\overline{g}}f}{\sqrt{\volbarg}}
+\pi \left(\partial_{\phi}, \barna \log \left(\frac{u}{\sqrt{\volbarg}} \right) \right).
\end{split}
\end{align}

\subsection*{Identity 4-1}
\begin{align} \label{049}
\begin{split}
&\pi \left(\partial_{\phi}, \barna \log \left(\frac{u}{\sqrt{\volbarg}} \right) \right)\\
=& g ( \pi(\partial_{\phi}, \cdot), \pi(v, \cdot))-
\bg ( \pi(\partial_{\phi}, \cdot), \bk (u \barna f, \cdot))
 +\pi \left(\partial_{\phi}, u^{-1}\barna u \right)  \\
& -\pi(\partial_{\phi}, w)\pi(w, w)
-  \frac{\barna f(\varphi)\sqrt{\volbarg}}{2u} \pi(\partial_{\phi}, w)
+ u^{-1}\barna f(u) \pi(\partial_{\phi}, \by)
\end{split}
\end{align}

\begin{proof}
Observe that
\begin{equation} \label{050}
\begin{split}
&\pi\left(\partial_{\phi}, \barna \log\left(\frac{u}{\sqrt{\volbarg}}\right)\right) \\
=& \pi(\partial_{\phi}, u^{-1}\barna u)
+ \frac{| \barna f |_{\overline{g}}^{2}}{\volbarg}\pi(\partial_{\phi}, u\barna u)
+ \frac{u^{2}}{2(\volbarg)} \pi(\partial_{\phi},\barna | \barna f |_{\overline{g}}^{2}) \\
=& \frac{\pi(\partial_{\phi}, \barna u)}{u(\volbarg)}
+ \bg\left(\pi\left(\frac{u\barna f}{\sqrt{\volbarg}} , \cdot\right), \pi(\partial_{\phi}, \cdot)\right)
- \frac{u^{2}\bg( A(\barna f, \cdot), \pi(\partial_{\phi}, \cdot))}{\volbarg}.
\end{split}
\end{equation}
In order to evaluate the last term in \eqref{050}, follow the computations in \eqref{024-1}, replacing $k(w,\cdot)$ by $\pi(\partial_{\phi}, \cdot)$ to find
\begin{align} \label{051}
\begin{split}
&\frac{u^{2}\bg( A(\barna f, \cdot), \pi(\partial_{\phi}, \cdot))}{\volbarg}\\
=& \frac{u\bg( A(w, \cdot), \pi(\partial_{\phi}, \cdot))}{\sqrt{\volbarg}}
-\frac{\bg( A(\by, \cdot), \pi(\partial_{\phi}, \cdot))}{\volbarg}  \\
=& \pi \left(\partial_{\phi}, \frac{\bg^{jl}w^{i}(\overline{\nabla}_{j}Y_{i}
+\overline{\nabla}_{i}Y_{j})\partial_{l}}{2u\sqrt{\volbarg}} \right)
+\pi \left(\partial_{\phi}, w- \frac{\by}{u\sqrt{\volbarg}} \right) \frac{(\volbarg)w(\varphi)}{2u^{2}} \\
&+\frac{u^{2} |\barna f|_{\overline{g}}^{2}}{\volbarg} \pi\left(\partial_{\phi}, \frac{\barna \varphi}{2u^{2}}
+ \frac{\sqrt{\volbarg}}{2u}\bg^{jl}w^{i}(\overline{\nabla}_{j}Y_{i}
-\overline{\nabla}_{i}Y_{j})\partial_{l} \right)
-\frac{\bg( A(\by, \cdot), \pi(\partial_{\phi}, \cdot)) }{\volbarg}.
\end{split}
\end{align}
Moreover
\begin{align} \label{052}
\begin{split}
&\bg\left( \pi \left(\frac{u\barna f}{\sqrt{\volbarg}}, \cdot \right),\pi(\partial_{\phi}, \cdot)\right) \\
=&\bg\left( \pi \left(v-\frac{u | \barna f |_{\overline{g}}^{2} \by }{\sqrt{\volbarg}}, \cdot \right), \pi(\partial_{\phi}, \cdot)\right) \\
=& g ( \pi(v, \cdot), \pi(\partial_{\phi}, \cdot)) +u^{-2}\pi(\partial_{\phi}, \by)\pi(v, \by)- \bg \left( \pi \left(\frac{u | \barna f |_{\overline{g}}^{2} \by }{\sqrt{\volbarg}}, \cdot \right), \pi(\partial_{\phi}, \cdot)\right)
\\
&- \pi(\partial_{\phi}, w)\pi \left(w - \frac{\sqrt{\volbarg}}{u}\by , w \right).
\end{split}
\end{align}
Substitute \eqref{051}, \eqref{052} into \eqref{050}, and use the following relations to get the desired result:
\begin{equation}\label{191}
\pi\left(\frac{u | \barna f |_{\overline{g}}^{2} \by}{\sqrt{\volbarg}}, \partial_{i} \right)= \frac{u^{2} | \barna f |_{\overline{g}}^{2} }{\volbarg}(-f^{l}\overline{\nabla}_{i}Y_{l} + A(\by, \partial_{i})),
\end{equation}
\begin{align} \label{054}
\begin{split}
&\frac{\barna \varphi}{2u^{2}}
+ \frac{\sqrt{\volbarg}}{2u}\bg^{jl}w^{i}(\overline{\nabla}_{j}Y_{i}
-\overline{\nabla}_{i}Y_{j})\partial_{l}
-\bg^{jl}f^{i}\overline{\nabla}_{j}Y_{i}\partial_{l} \\
=& \frac{\barna u}{u} -  \frac{\sqrt{\volbarg}}{2u}\bg^{jl}w^{i}(\overline{\nabla}_{j}Y_{i}
+\overline{\nabla}_{i}Y_{j})\partial_{l},
\end{split}
\end{align}
\begin{align} \label{054-1}
\bg(A(\by,\cdot ), \pi(\partial_{\phi}, \cdot))
=& \pi \left(\partial_{\phi}, \frac{\bg^{jl}\by^{i}(\overline{\nabla}_{j}Y_{i} +\overline{\nabla}_{i}Y_{j})\partial_{l}}{2u^{2}} \right)
- \pi \left(\partial_{\phi}, \frac{\sqrt{\volbarg}}{u}w - \frac{\by}{u^{2}} \right) \pi(w, \by) \nonumber\\
=& - \bg(\pi(\partial_{\phi}, \cdot), \bk(u \barna f, \cdot))
+ \frac{\sqrt{\volbarg}}{2u} \pi \left(\partial_{\phi}, \bg^{jl}w^{i}(\overline{\nabla}_{j}Y_{i} +\overline{\nabla}_{i}Y_{j} )\partial_{l} \right) \\
&- \pi \left(\partial_{\phi}, \frac{\sqrt{\volbarg}}{u}w - \frac{\by}{u^{2}} \right) \pi(w, \by),\nonumber
\end{align}
and
\begin{equation} \label{054-2}
\pi(v, \by) = \frac{f^{l}\by^{j}(\overline{\nabla}_{j}Y_{i} +\overline{\nabla}_{i}Y_{j})}{2u}
= -\pi(w, \by) + \frac{\barna f (|\by|_{\overline{g}}^{2})}{2u^{2}}.
\end{equation}
\end{proof}

We now finish the proof of \textbf{Identity 4}. Employ \textbf{Identity 2} and
\begin{equation}\label{192}
\pi(w,\by)=\frac{\by^{i}f^{j}(Y_{i\bar{;}j}-Y_{j\bar{;}i})}{2},
\text{ }\text{ }\text{ }\text{ }
\pi(\by,\by) = \frac{u\by^{i}\by^{j}\overline{\nabla}_{ij}f}{\sqrt{\volbarg}} =
-\frac{u\by^{i}f^{j}\overline{\nabla}_{i}Y_{j}}{\sqrt{\volbarg}},
\end{equation}
to find
\begin{align} \label{055}
\begin{split}
\Delta_{\overline{g}}f =&  u^{-1}\sqrt{\volbarg} Tr_{\bg}\pi - Tr_{\bg}A \\
=&u^{-1}\sqrt{\volbarg} Tr_{\bg}\pi
-2u^{-2}\pi(w, \by)
-  \frac{1}{2}\left(2u^{-2}-| \barna f |_{\overline{g}}^{2} \right) \barna f(\varphi)  \\
=& u^{-1}\sqrt{\volbarg} Tr_{\tg}\pi
+ u^{-3}\sqrt{\volbarg}\text{ }\! \pi(\by, \by) \\
&-2u^{-2}\pi(w, \by)
- \frac{1}{2}\left(2u^{-2}-| \barna f |_{\overline{g}}^{2} \right)\barna f(\varphi)  \\
=& u^{-1}\sqrt{\volbarg} Tr_{\tg}\pi
- u^{-1}\barna f (u) - \frac{(\volbarg)\barna f(\varphi)}{2u^{2}}.
\end{split}
\end{align}
Substituting \textbf{Identity 4-1} and \eqref{055} into \eqref{048} produces
\begin{align}\label{056}
\begin{split}
\bdiv(\pi(\partial_{\phi}, \cdot))
=&\frac{u}{\sqrt{\volbarg}}
\left(\bdiv(\overline{\nabla}^{2}f(\partial_{\phi}, \cdot))
-u^{-2}\bk(\partial_{\phi},\barna u) \right)
+u^{-1}\pi(\partial_{\phi}, \barna u) \\
&+g( \pi(\partial_{\phi},\cdot), \pi(v, \cdot))
-\bg( \pi(\partial_{\phi},\cdot), \bk(u \barna f, \cdot))
+ u^{-1}\pi (\partial_{\phi}, \by)\barna f(u) \\
& -w(\pi(\partial_{\phi}, w))
-\pi(\partial_{\phi}, w) \left(Tr_{\tg}\pi + \pi(w, w)  -u^{-1}w(u) \right).
\end{split}
\end{align}
The desired result may now be achieved with the help of
\eqref{042} and
\begin{equation}\label{193}
\pi \left(\partial_{\phi}, \frac{\barna u}{u} \right)
= \frac{u}{\sqrt{\volbarg}} \left(\overline{\nabla}^{2}f\left(\frac{\barna u}{u}, \partial_{\phi}\right)
+ \frac{\bk(\barna u, \partial_{\phi})}{u^{2}} - \frac{\barna f (u)}{u}\pi(w, \partial_{\phi}) \right). \end{equation}
\end{proof}

\subsection*{Identity 5}
\begin{equation} \label{057}
div_{g} \pi(\partial_{\phi})
= \frac{\bdiv \left(u\overline{\nabla}^{2} f(\partial_{\phi}, \cdot) \right)}{\sqrt{\volbarg}}
\end{equation}

\begin{proof}
Replace $k$ by $\pi$ in \textbf{Identity 3} to obtain
\begin{equation} \label{058}
\begin{split}
div_{g} \pi(\partial_{\phi})
=& \bdiv(\pi(\partial_{\phi}, \cdot)) + u^{-1}\bdiv(u \pi(\partial_{\phi}, w)w)\\
&-g( \pi(\partial_{\phi}, \cdot),\pi(v, \cdot))
+ \bg(\pi(\partial_{\phi}, \cdot), \bk (u \barna f, \cdot))
-u^{-1}\barna f (u)\pi(\partial_{\phi}, \by).
\end{split}
\end{equation}
Now employ \textbf{Identity 4}.
\end{proof}

\subsection*{Identity 6}
\begin{align} \label{059}
div_{g} (k-\pi)(v)
=&  | \pi |_{g}^{2} - g ( \pi, k) - u\overline{g}(\bk,\overline{\nabla}^{2}f)
 \nonumber \\
& +u^{-1}\bdiv \left(u \left(\overline{\nabla}^{2}f(\by, \cdot) + (k-\pi)(w, \cdot)
+(k-\pi)(w, w)\frac{udf}{\sqrt{\volbarg}} \right) \right)
\end{align}

\begin{proof}
We have
\begin{equation} \label{060}
div_{g} (k-\pi)(v)
= div_{g} (k-\pi)(w) - u^{-1}\sqrt{\volbarg}\text{ }\!div_{g}(k-\pi)(\by).
\end{equation}
Replace $k$ by $(k-\pi)$ in \textbf{Identity 2} to calculate $div_{g} (k-\pi)(w)$. Next use \textbf{Identity 5} to evaluate $\byp div_{g} \pi(\partial_{\phi})$, and note that $\byp div_{g}k(\partial_{\phi})=0$ as
well as $\bk(\barna f, \barna u)=0$ to find
\begin{align} \label{061}
\begin{split}
div_{g} (k-\pi)(v)
=& |\pi |_{\tg}^{2} - \tg(\pi, k)+ 2|\pi(w, \cdot) |_{\tg}^{2}
 - 2  \tg(\pi(w, \cdot), k(w, \cdot)) \\
& + u^{-1} \bdiv(u (k-\pi)(w, \cdot))
 + w((k-\pi)(w,w))+ (k-\pi)(w,w)(Tr_{\tg}\pi) \\
& + u^{-1}\bdiv(u\overline{\nabla}^{2}f(\by, \cdot)) - u\overline{g}(\bk,\overline{\nabla}^{2}f).
\end{split}
\end{align}
Next observe that
\begin{equation} \label{062}
| \pi |_{\tg}^{2}  + 2 | \pi(w, \cdot) |_{\tg}^{2}
 =  | \pi |_{g}^{2} - \pi(w, w)^{2}, \text{ }\text{ }\text{ }\text{ }
 \tg( \pi, k)  + 2  \tg( \pi(w, \cdot), k(w, \cdot))
= g(\pi, k ) - \pi(w,w)k(w,w),
\end{equation}
as well as the fact that \textbf{Identity 3-3} together with $\bg^{ij}\overline{\nabla}_{j}Y_{i}=0$ imply
\begin{align} \label{064}
 \bw((k-\pi)(w, w))+ (k-\pi)(w,w)(Tr_{\tg}\pi+ \pi(w,w))
&= u^{-1} \bdiv(u(k-\pi)(w, w) w) \nonumber\\
&= u^{-1}\bdiv \left( u(k-\pi)(w, w)\frac{u df}{\sqrt{\volbarg}} \right).
\end{align}
Combining \eqref{061}-\eqref{064} yields the desired result.
\end{proof}

\subsection*{Identity 7}
\begin{equation} \label{065}
\overline{R} - | \bk |_{\bg}^{2}
= 16\pi(\mu - J(v)) +   | k-\pi |_{g}^{2} + \frac{2}{u} \bdiv(u Q(\cdot))
+ (Tr_{g}\pi)^{2}-(Tr_{g}k)^{2} +2v(Tr_{g}\pi- Tr_{g}k)
\end{equation}
where $Q$ is the 1-form defined by
\begin{equation} \label{066}
Q (\cdot)= \overline{\nabla}^{2}f(\by, \cdot)-  \bk(u \barna f, \cdot)
+ (k-\pi)(w, \cdot)+  (k-\pi)(w,w) \frac{u  df}{\sqrt{\volbarg}}.
\end{equation}

\begin{proof}
Recall the formula for $\overline{R}$ in \eqref{0016}, and observe that
\begin{equation}\label{193}
 \bdiv \bk (u \barna f) = \frac{1}{u} \bdiv(u\bk (u\barna f, \cdot))- u \overline{g}(\bk,\overline{\nabla}^{2}f).
\end{equation}
Apply \textbf{Identity 6} and solve for $\overline{R}-| \bk |_{\bg}^{2} $.
\end{proof}

\section{Appendix B: Computations Related to $Y^{\phi}$}
\label{sec8} \setcounter{equation}{0}
\setcounter{section}{8}

The purpose of this section is twofold. Namely, to give several equivalent versions of the equation satisfied by $Y^{\phi}$, and to compare the prescribed asymptotics of $Y^{\phi}$ with examples from the extreme Kerr spacetime.

Recall the basic defining equation for $Y^{\phi}$
\begin{equation}\label{800}
div_{\overline{g}}\text{ }\!\overline{k}(\eta)=0.
\end{equation}
Let $(\overline{\rho},\phi,\overline{z})$ denote Brill coordinates \eqref{42} for $\overline{g}$, with corresponding Christoffel symbols $\overline{\Gamma}_{ij}^{l}$. By Lemma \ref{lemma1} and the fact that $\partial_{\phi}$ is a Killing field
\begin{equation}\label{801}
\bk_{i \phi} = \frac{g_{\phi \phi}}{2u}\partial_{i}\byp,\text{ }\text{ }\text{ }\text{ }\text{ }\text{ }
\bg^{ij} \barG^{l}_{j \phi} \bk_{li}
= \frac{1}{2}\bk^{jl}(\partial_{j}\bg_{l \phi}-\partial_{l}\bg_{j \phi})=0,
\end{equation}
so that
\begin{align} \label{302}
\begin{split}
div_{\overline{g}}\text{ }\!\overline{k}(\eta)=\bg^{ij}\overline{\nabla}_{j}\bk_{\phi i}
&=\bg^{ij} \left(\partial_{j}\bk_{\phi i} - \barG^{l}_{ij}\bk_{l \phi} - \barG^{l}_{j \phi} \bk_{li} \right)  \\
&=\bg^{ij}\left(\frac{1}{2}\partial_{j} (u^{-1}g_{\phi \phi}\partial_{i}Y^{\phi})- \barG^{l}_{ij}\bk_{l \phi} \right)  \\
&=-\frac{g_{\phi\phi}}{u^{2}} \overline{g}^{ij}\partial_{j}u\partial_{i}Y^{\phi}
+ \frac{1}{2 u} \bg^{ij}\left( \partial_{j}(g_{\phi \phi} \partial_{i}\byp )
-g_{\phi\phi}\barG^{l}_{ij}\partial_{l}\byp\right)   \\
&=-\frac{g_{\phi\phi}}{u^{2}} \overline{g}^{ij}\partial_{j}u\partial_{i}Y^{\phi}
+\frac{g_{\phi \phi}}{2u} \left(\Delta_{\overline{g}}\byp +  \barna \log g_{\phi\phi}\cdot \barna \byp \right)  \\
&=\frac{g_{\phi \phi}}{2u} \left(\Delta_{\overline{g}}\byp + \barna \log (u^{-1}g_{\phi\phi})\cdot \barna \byp  \right).
\end{split}
\end{align}

Equation \eqref{302} may also be expressed explicitly in Brill coordinates.
Observe that
\begin{equation}  \label{303-1}
\bg^{pq} = e^{2\bu-2\bal}\delta^{pq}, \text{ }\text{ }\text{ }\text{ }
\bg^{p \phi} =  -A_{p}e^{2\bu - 2 \bal}, \text{ }\text{ }\text{ }\text{ }p,q=\overline{\rho},\overline{z},
\text{ }\text{ }\text{ }\text{ }
\bg^{\phi \phi}
= \brho^{-2}e^{2\bu} + e^{2\bu- 2\bal}(A_{\brho}^{2} + A_{\bz}^{2}),
\end{equation}
and
\begin{equation}\label{802}
e_p = e^{\bu - \bal}(\partial_{p}-A_{p}\partial_{\phi}),\text{ }\text{ }\text{ }\text{ }p=\overline{\rho},\overline{z},
\text{ }\text{ }\text{ }\text{ }
 e_{\phi} = \brho^{-1}e^{\bu} \partial_{\phi},
\end{equation}
so that
\begin{align} \label{303}
\begin{split}
\bg (\barna_{e_{\overline{\rho}}}e_{\overline{\rho}}, e_{\overline{z}} )
&=  e^{\bu- \bal}(\barG^{\overline{z}}_{\overline{\rho}\overline{\rho}} - 2A_{\brho}\barG^{\overline{z}}_{\overline{\rho} \phi} + A_{\brho}^{2}\barG^{\overline{z}}_{\phi \phi})  \\
&=e^{\bu - \bal} \left(\frac{\bg^{\overline{z}\overline{z}}}{2}(2\partial_{\overline{\rho}}\bg_{\overline{\rho} \overline{z}}-\partial_{\overline{z}}\bg_{\overline{\rho}\overline{\rho}}) + \bg^{\overline{z} \phi}\partial_{\overline{\rho}}\bg_{\overline{\rho} \phi} \right) \\
&-e^{\bu-\bal}A_{\brho} (\bg^{\overline{z}\overline{z}}(\partial_{\overline{\rho}}\bg_{\phi \overline{z}}-\partial_{\overline{z}}\bg_{\overline{\rho} \phi})+\bg^{\overline{z} \phi}\partial_{\overline{\rho}}g_{\phi \phi}) - \frac{1}{2}e^{\bu-\bal}A_{\brho}^{2}\bg^{\overline{z}\overline{z}}\partial_{\overline{z}}g_{\phi \phi} \\
&= \partial_{\overline{z}}(\bu - \bal)e^{\bu -\bal}.
\end{split}
\end{align}
Similarly
\begin{equation}\label{803}
\bg ( \barna_{e_{\overline{z}}}e_{\overline{z}}, e_{\overline{\rho}} )
= \partial_{\overline{\rho}}(\bu -\bal)e^{\bu-\bal}
\end{equation}
and
\begin{equation}\label{804}
\bg (\barna_{e_{\phi}}e_{\phi}, e_{p} )
= g_{\phi \phi}^{-1}e^{-\bu + \bal}\barG^{p}_{\phi \phi}
= -\frac{1}{2}e^{\bu -\bal}  \partial_{p}\log g_{\phi \phi},\text{ }\text{ }\text{ }\text{ }p=\overline{\rho},\overline{z}.
\end{equation}
It follows that
\begin{align} \label{304}
\begin{split}
\Delta_{\overline{g}}\byp
&= \sum_{p=\overline{\rho}, \overline{z}} \left( e_{p}(e_{p} \byp) - \barna_{e_{p}} e_{p} (\byp) - \barna_{e_{\phi}} e_{\phi} (\byp) \right) \\
&= \sum_{p=\overline{\rho}, \overline{z}} \left(e^{\bu-\bal}\partial_{p}(e^{\bu -\bal}\partial_{p}\byp)
- e^{2\bu - 2\bal} \partial_{p}(\bu - \bal)\partial_{p}\byp
+ \frac{1}{2}e^{2\bu -2\bal} \partial_{p}\log g_{\phi \phi} \partial_{p}\byp \right)  \\
&= \sum_{p=\overline{\rho}, \overline{z}} \left(e^{2\bu-2\bal}\partial_{p}^{2}\byp
+ \frac{1}{2}e^{2\bu -2\bal} \partial_{p} \log g_{\phi \phi} \partial_{p}\byp \right),
\end{split}
\end{align}
and
\begin{equation}\label{805}
\barna \log (u^{-1}g_{\phi\phi})\cdot \barna \byp
= \sum_{p=\overline{\rho}, \overline{z}} e^{2\bu - 2\bal} \left( g_{\phi\phi}^{-1}\partial_{p}g_{\phi\phi}-u^{-1}\partial_{p}u \right)\partial_{p}\byp.
\end{equation}
Hence
\begin{align} \label{306}
\begin{split}
div_{\overline{g}}\text{ }\!\overline{k}(\eta)
&=\frac{g_{\phi \phi}}{2u} \left(\Delta_{\overline{g}}\byp + \barna \log (u^{-1}g_{\phi\phi})\cdot \barna \byp  \right) \\
& = \frac{g_{\phi \phi}e^{2\bu -2\bal}}{2u} \sum_{p=\overline{\rho}, \overline{z}}
\left(\partial_{p}^{2} \byp + \left( \frac{3}{2}g_{\phi\phi}^{-1}\partial_{p}g_{\phi\phi}-u^{-1}\partial_{p}u \right)\partial_{p}\byp \right) \\
&=
\frac{e^{2\bu - 2\bal}}{2\sqrt{g_{\phi \phi}}}\sum_{p=\overline{\rho}, \overline{z}} \partial_{p} \left(u^{-1}g_{\phi \phi}^{3/2} \partial_{p}\byp \right).
\end{split}
\end{align}

We will now express \eqref{800} in terms of the metric $g$. Observe that
\begin{equation} \label{307}
\Delta_{\overline{g}}Y^{\phi}=\overline{g}^{ij}(\partial_{ij}Y^{\phi}-\Gamma_{ij}^{l}\partial_{l}Y^{\phi})
+\overline{g}^{ij}(\Gamma_{ij}^{l}-\overline{\Gamma}_{ij}^{l})\partial_{l}Y^{\phi}
\end{equation}
and
\begin{equation} \label{806}
\barna \log (u^{-1}g_{\phi\phi})\cdot \barna \byp
=\overline{g}^{ij}\partial_{i}\log (u^{-1}g_{\phi\phi})\partial_{j}Y^{\phi},
\end{equation}
where $\Gamma_{ij}^{l}$ are Christoffel symbols for $g$. In \textbf{Identity 1} of Section \ref{sec7} the difference between Christoffel symbols is computed, so that
\begin{equation} \label{308}
\bg^{ij}(\Gamma^{l}_{ij}- \barG^{l}_{ij})\partial_{l}\byp
= -w(\byp) Tr_{\bg}\pi
+  \frac{1}{2}\mid \barna f\mid_{\bg}^{2}\barna \varphi\cdot\barna \byp
- \bg^{ij}f^{l}(\overline{\nabla}_{l}Y_{j}-\overline{\nabla}_{j}Y_{l})\partial_{i} \byp.
\end{equation}
In order to proceed, we will also need
\begin{equation} \label{309}
Y^{i}=g^{ij}Y_{j}=g^{ij}(\byp g_{i \phi} +  | \by |^{2}_{\overline{g}}\partial_{i}f)
=\byp \delta^{i}_{\phi} + | \by |^{2}_{\overline{g}}f^{i},
\end{equation}
\begin{equation} \label{309-1}
\bg^{ij}
 = g^{ij} - \frac{u^{2}f^{i}f^{j}}{\volg}
- \frac{\byp (\delta^{i}_{\phi}f^{j}+f^{i}\delta^{j}_{\phi})}{\volg}
+  \frac{| \nabla f |^{2}_{g} (\byp)^{2} \delta^{i}_{\phi}\delta^{j}_{\phi}}{\volg}.
\end{equation}
It follows that
\begin{align} \label{310}
\begin{split}
&\bg^{ij}(\partial_{ij}\byp-\Gamma^{l}_{ij}\partial_{l}\byp) \\
=& \left(g^{ij} - \frac{u^{2}f^{i}f^{j}}{\volg} \right)\nabla^{2}\byp(\partial_{i}, \partial_{j})
+ \frac{2\byp f^{j}\Gamma^{l}_{\phi j}\partial_{l}\byp}{\volg}
-  \frac{| \nabla f |_{g}^{2} (\byp)^{2}\Gamma^{l}_{\phi \phi}\partial_{l}\byp}{\volg},
\end{split}
\end{align}
\begin{align}\label{807}
\begin{split}
\bg^{ij}(\Gamma^{l}_{ij}- \barG^{l}_{ij})\partial_{l}\byp
&= -\frac{u\nabla f(\byp)(Tr_{\bg}\pi)}{\sqrt{\volg}}
+ \frac{ |\nabla f |_{g}^{2}\nabla\varphi\cdot \nabla \byp}{2(\volg)}
-\frac{u^{2} | \nabla f |_{g}^{2}\nabla f(\varphi)\nabla f(Y^{\phi})}{2(\volg)^{2}} \\
&+ \frac{2\byp (\byp)^{l}\Gamma^{i}_{l \phi}\partial_{i}f}{\volg}
+ \frac{ Y_{\phi}| \nabla f |_{g}^{2} | \nabla \byp |_{g}^{2}}{\volg}
- \frac{Y_{\phi}(\nabla f(\byp))^{2}}{\volg},
\end{split}
\end{align}
and
\begin{align} \label{310-1}
\begin{split}
    \barna \log (u^{-1}g_{\phi\phi})\cdot \barna \byp
&= \left( g^{ij} - \frac{u^{2}f^{i}f^{j}}{\volg} \right)\partial_{i}\log (u^{-1}g_{\phi\phi})\partial_{j}\byp \\
&= \nabla \log (u^{-1}g_{\phi\phi})\cdot \nabla \byp
-\frac{u^{2}f^{l}\partial_{l}\log(u^{-1}g_{\phi\phi}) f^{j}\partial_{j}Y^{\phi}}{\volg}.
\end{split}
\end{align}
Next note that with the help of \eqref{35} and \eqref{309-1}
\begin{align} \label{311}
\begin{split}
Tr_{\bg} \pi
&=\left(g^{ij} - \frac{u^{2}f^{i}f^{j}}{\volg}\right)\pi_{ij} \\
&- \frac{g_{\phi \phi}\byp }{u^{2}(\volg)}
 \left(\frac{u \nabla f (\byp)}{\sqrt{\volg}} \right)
- \frac{(\byp)^{2} | \nabla f |_{g}^{2}}{\volg}
\left(\frac{u\Gamma^{l}_{\phi \phi}\partial_{l}f}{\sqrt{\volg}} \right).
\end{split}
\end{align}
Therefore employing \eqref{310}, \eqref{807}, \eqref{310-1}, \eqref{311}, and the identity
\begin{equation} \label{312}
\frac{1}{2} \varphi^{l} - (\byp)^{2}\Gamma^{l}_{\phi \phi} + Y_{\phi} (\byp)^{l}
= u u^{l},
\end{equation}
produces
\begin{align}\label{807.1}
\begin{split}
\Delta_{\overline{g}}\byp + \barna \log (u^{-1}g_{\phi\phi})\cdot \barna \byp&= \left(g^{ij}-\frac{u^{2}f^{i}f^{j}}{1+u^{2}|\nabla f|_{g}^{2}}\right)\left(\nabla_{ij}Y^{\phi}-\frac{u\pi_{ij}f^{l}}{\sqrt{1+u^{2}|\nabla f|_{g}^{2}}}\partial_{l}Y^{\phi}\right)\\
&+\left(g^{ij}-\frac{u^{2}f^{i}f^{j}}{1+u^{2}|\nabla f|^{2}}\right)\left(\partial_{i}\log g_{\phi\phi}-\frac{\partial_{i}\log u}{1+u^{2}|\nabla f|^{2}}\right)\partial_{j}Y^{\phi}.
\end{split}
\end{align}
We now record what has been shown.

\begin{lemma}\label{lemma4}
The following equations are equivalent:
\begin{equation} \label{808}
div_{\overline{g}}\text{ }\!\overline{k}(\eta)=0,
\end{equation}
\begin{equation} \label{809}
\Delta_{\overline{g}}\byp + \barna \log (u^{-1}g_{\phi\phi})\cdot \barna \byp=0,
\end{equation}
\begin{equation} \label{810}
\frac{e^{2\bu - 2\bal}}{2\sqrt{g_{\phi \phi}}}\sum_{p=\overline{\rho}, \overline{z}} \partial_{p} \left(u^{-1}g_{\phi \phi}^{3/2} \partial_{p}\byp \right)=0,
\end{equation}
\begin{align}\label{811}
\begin{split}
& \left(g^{ij}-\frac{u^{2}f^{i}f^{j}}{1+u^{2}|\nabla f|_{g}^{2}}\right)\left(\nabla_{ij}Y^{\phi}-\frac{u\pi_{ij}f^{l}}{\sqrt{1+u^{2}|\nabla f|_{g}^{2}}}\partial_{l}Y^{\phi}\right)\\
&+\left(g^{ij}-\frac{u^{2}f^{i}f^{j}}{1+u^{2}|\nabla f|^{2}}\right)\left(\partial_{i}\log g_{\phi\phi}-\frac{\partial_{i}\log u}{1+u^{2}|\nabla f|^{2}}\right)\partial_{j}Y^{\phi}=0.
\end{split}
\end{align}
\end{lemma}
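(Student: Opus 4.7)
The plan is to observe that all four equations have already been derived in the computations preceding the lemma statement, so the proof amounts to collecting these pieces and verifying that no loose ends remain. Specifically, I will establish three equivalences: \eqref{808} $\Leftrightarrow$ \eqref{809}, \eqref{809} $\Leftrightarrow$ \eqref{810}, and \eqref{809} $\Leftrightarrow$ \eqref{811}. Since $g_{\phi\phi}/(2u)$ is a strictly positive smooth factor away from the axis (where axisymmetry and smoothness of $Y^{\phi}$ handle the degeneracy), scalar multiplication by such a factor preserves the equation.

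First I would record that \eqref{302}, which is a direct calculation using Lemma \ref{lemma1} (giving $\bk_{i\phi}=\tfrac{g_{\phi\phi}}{2u}\partial_i Y^{\phi}$) together with the Killing identity $\bg^{ij}\barG^l_{j\phi}\bk_{li}=0$, yields
\[
\bdiv\bk(\eta)=\frac{g_{\phi\phi}}{2u}\bigl(\Delta_{\bg}Y^{\phi}+\barna\log(u^{-1}g_{\phi\phi})\cdot\barna Y^{\phi}\bigr),
\]
which gives \eqref{808} $\Leftrightarrow$ \eqref{809}. Next, for the equivalence with \eqref{810}, I expand in Brill coordinates: the formulas \eqref{303-1}, \eqref{802}, and the Koszul-type computations \eqref{303}-\eqref{804} (using that $\partial_\phi$ is Killing so that all $\phi$-derivatives of metric coefficients vanish) give the Laplacian in \eqref{304} and the gradient term in \eqref{805}. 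Collecting these and combining into a single divergence, as in \eqref{306}, yields \eqref{810}.

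The main work is the equivalence \eqref{809} $\Leftrightarrow$ \eqref{811}, and this is where I anticipate the principal obstacle, since it requires translating from $\bg$-geometry back into $g$-geometry. The strategy is to split $\Delta_{\bg}Y^{\phi}=\bg^{ij}(\partial_{ij}Y^{\phi}-\Gamma^l_{ij}\partial_l Y^{\phi})+\bg^{ij}(\Gamma^l_{ij}-\barG^l_{ij})\partial_l Y^{\phi}$ and handle each piece. The first term is rewritten using the inverse metric formula \eqref{309-1}, producing the principal symbol with the correction $-u^2 f^i f^j/(1+u^2|\nabla f|_g^2)$ together with lower-order $\byp$-coupling terms arising from $Y^i=\byp\delta^i_\phi+|\by|^2_{\bg}f^i$; this is displayed as \eqref{310}. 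The difference-of-Christoffel term is controlled by \textbf{Identity 1} of Appendix A, which gives \eqref{308} and, after substitution, \eqref{807}. Meanwhile \eqref{311} expresses $Tr_{\bg}\pi$ in the same projected form.

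The final step is algebraic bookkeeping: the various $\byp$-coupling terms produced by \eqref{310}, \eqref{807}, \eqref{310-1}, and \eqref{311} must cancel against one another, leaving only the terms in \eqref{811}. The key identity enabling the cancellation is \eqref{312}, $\tfrac12\varphi^l-(\byp)^2\Gamma^l_{\phi\phi}+Y_\phi(\byp)^l=uu^l$, which follows from $\varphi=u^2-g_{\phi\phi}(\byp)^2$ (Lemma \ref{lemma1}) by direct differentiation together with $\Gamma^l_{\phi\phi}=-\tfrac12 g^{lj}\partial_j g_{\phi\phi}$ (valid by axisymmetry). Substituting this into the collected expressions regroups the $\nabla\varphi$ and $(\byp)^2$ terms into $\nabla\log u$ terms, producing precisely \eqref{807.1}, which is \eqref{811}. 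The main risk is sign and coefficient bookkeeping in this reduction; I would proceed term-by-term, separating contributions proportional to $\nabla^2 Y^{\phi}$, to $\nabla Y^{\phi}$ coupled with $\pi_{ij}$, and to $\nabla Y^{\phi}$ coupled with $\partial\log g_{\phi\phi}$ or $\partial\log u$, verifying cancellation within each family.
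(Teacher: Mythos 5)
Your proposal follows essentially the same three-step route as the paper's Appendix B argument: \eqref{302} gives \eqref{808}$\Leftrightarrow$\eqref{809}, the Brill-coordinate expansion \eqref{303-1}--\eqref{306} gives \eqref{809}$\Leftrightarrow$\eqref{810}, and the Christoffel-symbol decomposition via Identity~1 together with \eqref{307}--\eqref{312} and the key identity $\tfrac12\varphi^l-(Y^\phi)^2\Gamma^l_{\phi\phi}+Y_\phi(Y^\phi)^l=uu^l$ gives \eqref{809}$\Leftrightarrow$\eqref{811}. The proof is correct and matches the paper's own reasoning, including its reliance on \eqref{807.1} as the culminating computation.
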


Lastly, the prescribed asymptotics \eqref{132}-\eqref{134} for $Y^{\phi}$ will be compared with the example from
the (extreme) Kerr spacetime. Recall that in Boyer-Lindquist coordinates the Kerr metric takes the form
\begin{equation} \label{314}
 -\frac{\Delta - a^{2}\sin^{2}\theta}{\Sigma} dt^{2} + \frac{4ma\widetilde{r}\sin^{2}\theta}{\Sigma}dtd\phi
+  \frac{(\widetilde{r}^{2}+a^{2})^{2} - \Delta a^{2}\sin^{2}\theta}{\Sigma} \sin^{2}\theta d\phi^{2}
+ \frac{\Sigma}{\Delta} d\widetilde{r}^{2} + \Sigma d\theta^{2}
\end{equation}
where
\begin{equation}\label{812}
\Delta = \widetilde{r}^{2} + a^{2} -2m\widetilde{r},\text{ }\text{ }\text{ }\text{ }\text{ } \Sigma = \widetilde{r}^{2} + a^{2}\cos^{2}\theta.
\end{equation}
The event horizon is located at the larger of the two solutions to the quadratic equation
$\Delta=0$, namely $\widetilde{r}_{+}=m+\sqrt{m^{2}-a^{2}}$. For $\widetilde{r}>\widetilde{r}_{+}$ it holds that $\Delta>0$, so that a new radial coordinate may be defined by
\begin{equation}\label{813}
r=\frac{1}{2}(\widetilde{r}-m+\sqrt{\Delta}),
\end{equation}
or rather
\begin{equation} \label{315}
\begin{split}
\widetilde{r} &= r + m + \frac{m^{2}-a^{2}}{4r},\text{ }\text{ }\text{ }\text{ }\text{ } m^{2} \neq a^{2}  \\
\widetilde{r} &= r + m, \text{ }\text{ }\text{ }\text{ }\text{ } m^{2}=a^{2}.
\end{split}
\end{equation}
Note that the new coordinate is defined for $r>0$, and a critical point for the right-hand side of \eqref{315} ($m^{2}\neq a^{2}$) occurs at the horizon, so that two isometric copies of the outer region are encoded on this interval.  Moreover the $t=0$ slice of the metric takes the form \eqref{75}, showing that $(r,\phi,\theta)$ are
an appropriate set of Brill coordinates.

Observe that
\begin{equation} \label{317}
\byp= g^{\phi \phi} Y_{\phi} =  -\frac{2ma\widetilde{r}}{(\widetilde{r}^{2}+a^{2})^{2} -\Delta a^{2}\sin^{2}\theta}.
\end{equation}
Therefore at spatial infinity
\begin{equation} \label{318}
\byp \sim -\frac{2ma}{r^{3}} \quad \textrm{as} \quad r \to \infty,
\end{equation}
which is consistent with \eqref{132} since $\mathcal{J}=am$. Furthermore
\begin{equation} \label{319}
\begin{split}
\byp &= O(r^{3}),\text{ }\text{ }\text{ }\text{ }  m^{2} \neq a^{2}, \text{ }\text{ }\text{ as }\text{ }\text{ }r\rightarrow 0,    \\
\byp &= -\frac{2m^{2}a}{(m^{2}+a^{2})^{2}}+O(r), \text{ }\text{ }\text{ }\text{ }  m^{2} = a^{2}, \text{ }\text{ }\text{ as }\text{ }\text{ }r\rightarrow 0.
\end{split}
\end{equation}
This is consistent with \eqref{134}, but not \eqref{133}. The reason for the inconsistency is that the lapse function for the Kerr spacetime does not satisfy the required asymptotics \eqref{77}, whereas the lapse function
for the extreme Kerr spacetime does satisfy the desired asymptotics \eqref{78}.

\section{Appendix C: Boundary Terms}
\label{sec9} \setcounter{equation}{0}
\setcounter{section}{9}

Consider the basic inequality \eqref{48}. Under the hypotheses of Theorem \ref{thm2} this yields
\begin{equation}\label{900}
\overline{m}-\mathcal{M}(\overline{U},\overline{\omega})
\geq\frac{1}{8\pi}\lim_{\overline{r}\rightarrow\infty}
\int_{S_{\overline{r}}}uQ(\overline{\nu}) dA_{\overline{g}}
-\frac{1}{8\pi}\lim_{\overline{r}\rightarrow 0}
\int_{S_{\overline{r}}}uQ(\overline{\nu}) dA_{\overline{g}},
\end{equation}
where $\overline{\nu}$ is the unit normal pointing towards $M_{end}^{+}$ for the coordinate spheres $S_{\overline{r}}$. Here $(\overline{r},\phi,\overline{\theta})$ are spherical coordinates as in \eqref{75},
but with respect to $\overline{g}$. The purpose of this section is to show that
\begin{equation}\label{901}
\lim_{\overline{r}\rightarrow\infty}
\int_{S_{\overline{r}}}uQ(\overline{\nu}) dA_{\overline{g}}
-\lim_{\overline{r}\rightarrow 0}
\int_{S_{\overline{r}}}uQ(\overline{\nu}) dA_{\overline{g}}=8\pi\mathcal{Y}(\overline{\mathcal{J}})(\overline{\mathcal{J}}-\mathcal{J}),
\end{equation}
where $\mathcal{Y}(\overline{\mathcal{J}})=\lim_{r\rightarrow 0}Y^{\phi}$ as in \eqref{133}, \eqref{134}. Thus, the choice $\overline{\mathcal{J}}=\mathcal{J}$ guarantees \eqref{53}.

Recall that
\begin{equation} \label{401}
Q (\cdot)
= \overline{\nabla}^{2}f(\by, \cdot)-  \bk(u \barna f, \cdot)
+ (k-\pi)(w, \cdot)+  (k-\pi)(w,w)\sqrt{1+u^{2}|\nabla f|_{g}^{2}}\!\text{ }udf.
\end{equation}
It is clear from the asymptotics \eqref{31}, \eqref{40} that the first term on the left-hand side of \eqref{901}
vanishes, so we will focus on the second term. In what follows, it will be assumed that
\begin{equation} \label{902}
|k|_{g}+|k(\partial_{\phi},\cdot)|_{g}+|k(\partial_{\phi},\partial_{\phi})|\leq c\text{ }\text{ }\text{ on }
\text{ }\text{ }M.
\end{equation}
Note also that \eqref{76}-\eqref{78} and \eqref{81}-\eqref{83} imply that
\begin{equation} \label{903}
|\pi|_{g}+|\pi(\partial_{\phi},\cdot)|_{g}+|\pi(\partial_{\phi},\partial_{\phi})|\leq c\text{ }\text{ }\text{ on }\text{ }\text{ }M,
\end{equation}
and
\begin{equation} \label{402}
u \to 0, \text{ }\text{ }\text{ }\text{ }\text{ }|\nabla f|_{g} \to 0\text{ }\text{ }\text{ }\text{ as }\text{ }\text{ }\text{ }
r\rightarrow 0.
\end{equation}
Let us now consider terms in \eqref{401} when applied to $\overline{\nu}$. Since
\begin{equation}\label{402-1}
w=\frac{u \nabla_{g} f + u^{-1}\by}{\sqrt{\volg}},
\end{equation}
it follows that
\begin{equation}\label{904}
k(w,\overline{\nu})=u^{-1}k(\overline{Y},\overline{\nu})+O\left(u|\nabla f|_{g}^{2}|k(\overline{Y},\overline{\nu})|+u|k(\nabla f,\overline{\nu})|\right).
\end{equation}
Moreover
\begin{equation}\label{905}
|k(\overline{Y},\overline{\nu})|\leq|Y^\phi||k(\partial_{\phi},\cdot)|_{\overline{g}},
\end{equation}
and
\begin{align}\label{906}
\begin{split}
|k(\partial_{\phi},\cdot)|_{\overline{g}}^{2}&=\overline{g}^{ij}k_{il}\eta^{l}k_{jm}\eta^{m}\\
&=|k(\partial_{\phi},\cdot)|_{g}^{2}-\eta^{l}\eta^{m}w^{i}w^{j}k_{il}k_{jm}
+u^{-2}\eta^{l}\eta^{m}\overline{Y}^{i}\overline{Y}^{j}k_{il}k_{jm}\\
&\leq|k(\partial_{\phi},\cdot)|_{g}^{2}+u^{-2}(Y^{\phi})^{2}|k(\partial_{\phi},\partial_{\phi})|^{2},
\end{split}
\end{align}
so that
\begin{equation}\label{907}
|k(\overline{Y},\overline{\nu})|\leq c(|k(\partial_{\phi},\cdot)|_{g}+u^{-1}|k(\partial_{\phi},\partial_{\phi})|).
\end{equation}
Similarly
\begin{equation}\label{907}
|k(\nabla f,\overline{\nu})|\leq c(|k|_{g}|\nabla f|_{g}+u^{-1}|k(\partial_{\phi},\cdot)|_{g}|\nabla f|_{g}).
\end{equation}
Hence
\begin{equation}\label{908}
k(w,\overline{\nu})=u^{-1}k(\overline{Y},\overline{\nu})+O(|\nabla f|_{g}).
\end{equation}
Analogous computations show that
\begin{equation}\label{909}
k(w,w)=u^{-2}k(\overline{Y},\overline{Y})+O(|\nabla f|_{g}),
\end{equation}
and also
\begin{equation}\label{909.0}
\pi(w,\overline{\nu})=u^{-1}\pi(\overline{Y},\overline{\nu})+O(|\nabla f|_{g}),\text{ }\text{ }\text{ }\text{ }\text{ }
\pi(w,w)=u^{-2}\pi(\overline{Y},\overline{Y})+O(|\nabla f|_{g}).
\end{equation}
We now have
\begin{equation}\label{910}
Q(\overline{\nu})=\overline{\nabla}^{2}f(\by, \overline{\nu})-  u\bk( \barna f, \overline{\nu})
+ u^{-1}(k-\pi)(\overline{Y}, \overline{\nu})+u^{-1}(k-\pi)(\overline{Y},\overline{Y})\overline{\nu}(f)+O(|\nabla f|_{g}).
\end{equation}

According to \eqref{05-1} and \eqref{021}
\begin{equation} \label{403}
\pi(\by, \cdot)
= u\sqrt{1+u^{2}|\nabla f|_{g}^{2}}
\left(\overline{\nabla}^{2}f(\by, \cdot)+ u^{-1}\bk (\by, \cdot) - \pi(\bw,\by) df \right),
\end{equation}
so that
\begin{equation}\label{911}
\pi(\by, \overline{\nu})= u \overline{\nabla}^{2}f(\by, \overline{\nu})
+ \bk(\by, \overline{\nu}) - \pi(\by, \by)\overline{\nu}(f) +O(u|\nabla f|_{g}).
\end{equation}
Combining this with \eqref{910}, and the fact that $\bk(\barna f,\overline{\nu})=0$ (as $\overline{\nu}$
is the normal for an axisymmetric surface), produces
\begin{equation}\label{405}
Q(\overline{\nu})=u^{-1}k (\by, \overline{\nu} )
+ u^{-1}k(\by,\by)\overline{\nu}(f)
- u^{-1}\bk(\by,\overline{\nu})
+ O( | \nabla f |_{g}).
\end{equation}
In sum
\begin{align}\label{406}
\begin{split}
\lim_{\overline{r}\to 0}\int_{S_{\overline{r}}} u Q(\overline{\nu}) dA_{\bg}
&= \lim_{\overline{r} \to 0} \int_{S_{\overline{r}}}
\left( k(\by, \overline{\nu}) + k(\by, \overline{\nu})\overline{\nu}(f)
- \bk (\by, \overline{\nu}) +
O(u | \nabla f|_{g}) \right) dA_{\bg}    \\
&= \lim_{\overline{r} \to 0} \int_{S_{\overline{r}}}
\left( k(\by, \overline{\nu}) + k(\by,\by)\overline{\nu}(f)+ O(u | \nabla f |_{g})\right) dA_{\bg}
- 8  \pi  \mathcal{Y} \overline{\mathcal{J}},
\end{split}
\end{align}
where the last line is obtained from the definition of angular momentum and \eqref{30}.

In order to proceed, it will be necessary to express $\overline{\nu}$ in terms of quantities asasociated
with the metric $g$. Recall that $(M,g)$ is embedded via a graph $t=f(x)$ in the spacetime
$(\overline{M} \times \mathbb{R},\tg=\bg-2Y_{i}dx^{i}dt-\varphi dt^{2})$. Let $\mathcal{S}_{\overline{r}}\subset
M$ be the natural lifting of $S_{\overline{r}}\subset\overline{M}$ to the graph. The goal is to compute
$\overline{\nu}$ in terms of $\xi$, the unit normal to $\mathcal{S}_{\overline{r}}$ pointing towards $M_{end}^{+}$. Observe that an orthonormal
frame for $(\overline{M},\overline{g})$ is given by
\begin{equation}\label{912}
\overline{\nu}=e_{\overline{r}}=e^{\overline{U}-\overline{\alpha}}(\partial_{\overline{r}}-A_{\overline{r}}\partial_{\phi}),\text{ }\text{ }\text{ }\text{ }e_{\overline{\theta}}=\frac{e^{\overline{U}-\overline{\alpha}}}{\overline{r}}
(\partial_{\overline{\theta}}-A_{\overline{\theta}}\partial_{\phi}),
\text{ }\text{ }\text{ }\text{ }e_{\phi}=\frac{e^{\overline{U}}}{\overline{r}\sin\overline{\theta}}\partial_{\phi},
\end{equation}
and that
\begin{equation} \label{407}
X_{i} = e_{i} + e_{i}(f)\partial_{t},\text{ }\text{ }\text{ }\text{ } i= \br, \bth, \phi,
\end{equation}
is a basis for the tangent space of $(M,g)$. Thus, a normal to $\mathcal{S}_{\overline{r}}$ may be written as
\begin{equation} \label{407-1}
\xi':= X_{\overline{r}} + C_{\overline{\theta}}X_{\overline{\theta}} + C_{\phi}X_{\phi}
\end{equation}
for some constants $C_{\overline{\theta}}$, $C_{\phi}$. In order to calculate these constants, note that
\begin{align}\label{408}
\begin{split}
0 &= \tg (\xi', X_{\phi})   \\
&= \tg ( X_{\overline{r}}+C_{\overline{\theta}}X_{\overline{\theta}} + C_{\phi}X_{\phi}, e_{\phi} )  \\
&= \tg (e_{\br}(f) \partial_{t}, e_{\phi})
+ C_{\bth} \tg ( e_{\bth}(f)\partial_{t}, e_{\phi} )
+ C_{\phi}  \\
&= -e_{\br}(f)Y(e_{\phi}) - C_{\overline{\theta}}e_{\bth}(f)Y(e_{\phi}) + C_{\phi}
\end{split}
\end{align}
and
\begin{align}\label{409}
\begin{split}
0 &= \tg ( \xi', X_{\bth} )  \\
&= \tg  (X_{\overline{r}}+C_{\overline{\theta}}X_{\overline{\theta}} + C_{\phi}X_{\phi}, e_{\bth}+e_{\bth}(f)\partial_{t} )
 \\
&= \tg ( e_{\br}+e_{\br}(f) \partial_{t}, e_{\bth}+e_{\bth}(f)\partial_{t} )
+ C_{\overline{\theta}} \tg ( e_{\bth}+e_{\bth}(f) \partial_{t}, e_{\bth}+e_{\bth}(f)\partial_{t} )
+ C_{\phi}e_{\bth}(f)\tg ( e_{\phi}, \partial_{t} )  \\
&= -e_{\bth}(f) Y(e_{\br})-e_{\br}(f)Y(e_{\bth})- \varphi e_{\br}(f)e_{\bth}(f)\\
&
+ C_{\overline{\theta}}(1-2 e_{\bth}(f) Y(e_{\bth})-\varphi e_{\bth}(f)^{2})
- C_{\phi}e_{\bth}(f)Y(e_{\phi})  \\
&= - \varphi e_{\br}(f)e_{\bth}(f)
+ C_{\overline{\theta}}(1-\varphi e_{\bth}(f)^{2})
- C_{\phi}e_{\bth}(f)Y(e_{\phi}),
\end{split}
\end{align}
where in the last line the identity $Y(e_{\br}) = Y(e_{\bth}) = 0 $ is used. Solving for $C_{\overline{\theta}}$ and $C_{\phi}$ yields
\begin{equation}\label{410}
\xi'= X_{\overline{r}}+ \frac{u^{2}e_{\br}(f)e_{\bth}(f)}{1-u^{2}e_{\bth}(f)^{2}} X_{\overline{\theta}} + \frac{e_{\br}(f)Y(e_{\phi})}{1-u^{2}e_{\bth}(f)^{2}} X_{\phi},
\end{equation}
and hence
\begin{equation} \label{411}
\xi=\frac{\xi'}{|\xi'|_{g}}= \sqrt{\frac{1-u^{2}e_{\bth}(f)^{2}}{1-u^{2}|\overline{\nabla} f|_{\overline{g}}^{2}}}
\left(\overline{\nu}+\overline{\nu}(f)\partial_{t}
+\frac{u^{2}\overline{\nu}(f)e_{\bth}(f)}{1- u^{2}e_{\bth}(f)^{2}} X_{\overline{\theta}}
+\frac{\overline{\nu}(f)Y(e_{\phi})}{1-u^{2}e_{\bth}(f)^{2}} X_{\phi} \right).
\end{equation}

Consider now the integrand on the right-hand side of \eqref{406}. Since $k(\partial_{t}, \cdot)=0$ and
\begin{equation} \label{914}
|e_{i}(f)|\leq|\overline{\nabla} f|_{\overline{g}}=\frac{|\nabla f|_{g}}{\sqrt{1+u^{2}|\nabla f|_{g}^{2}}}\leq|\nabla f|_{g},\text{ }\text{ }\text{ }\text{ }Y(e_{\phi})e_{\phi}=\overline{Y},\text{ }\text{ }\text{ }\text{ }dA_{g} = \sqrt{1-u^{2}e_{\bth}(f)^{2}} dA_{\bg},
\end{equation}
it follows that
\begin{equation}\label{916}
\left(k(\by, \overline{\nu}) + k(\by,\by)\overline{\nu}(f)\right) dA_{\bg}
=\left(k(\by, \xi)+O(u|\nabla f|_{g})\right)dA_{g}\text{ }\text{ }\text{ as }\text{ }\text{ }
\overline{r}\rightarrow 0.
\end{equation}
Note also that the area $|\mathcal{S}_{\overline{r}}|$ grows like $\overline{r}^{\!\text{ }-2}$ when $M_{end}^{-}$ is asymptotically flat, and is bounded when $M_{end}^{-}$ is asymptotically cylindrical.
Therefore with the help of the asymptotics \eqref{76}-\eqref{78} and \eqref{81}-\eqref{83}
\begin{align}\label{416}
\begin{split}
\lim_{\overline{r}\to 0}\int_{S_{\overline{r}}} u Q(\overline{\nu}) dA_{\bg}
&= \lim_{\overline{r} \to 0} \int_{\mathcal{S}_{\overline{r}}}
\left( k(\by,\xi) + O(u | \nabla f |_{g})\right) dA_{g}
- 8  \pi  \mathcal{Y} \overline{\mathcal{J}}\\
&= 8\pi \mathcal{Y} (\mathcal{J}-\overline{\mathcal{J}}).
\end{split}
\end{align}

\section{Appendix D: Miscellaneous Formulae}
\label{sec10} \setcounter{equation}{0}
\setcounter{section}{10}

In this section we will compute certain Christoffel symbols used in Section \ref{sec6}, and record how the twist potential encodes angular momentum.

Christoffel symbols will be expressed in terms of the Brill coordinate system \eqref{75}, where $\rho=r\sin\theta$. Observe that components of the inverse metric are given by
\begin{equation}\label{d1}
g^{rr} = e^{2U-2\alpha},\text{ }\text{ }\text{ }\text{ }
g^{\theta \theta} = r^{-2} e^{2U-2\alpha},\text{ }\text{ }\text{ }\text{ }
g^{\phi \phi}
= \rho^{-2}e^{2U} + e^{2U- 2\alpha}(A_{r}^{2} + r^{-2}A_{\theta}^{2}),
\end{equation}
\begin{equation}\label{d2}
g^{r\theta} =0,\text{ }\text{ }\text{ }\text{ }
g^{r \phi} =  -A_{r}e^{2U - 2 \alpha},\text{ }\text{ }\text{ }\text{ }
g^{\theta \phi} =  - r^{-2}A_{\theta}e^{2U - 2 \alpha}.
\end{equation}
It follows that
\begin{equation}\label{d3}
\Gamma_{rr}^{r}=\frac{1}{2}e^{2U-2\alpha}\partial_{r}(e^{-2U+2\alpha}
+\rho^{2}e^{-2U}A_{r}^{2})-e^{2U-2\alpha}A_{r}\partial_{r}(\rho^{2}e^{-2U}A_{r}),
\end{equation}
\begin{equation}\label{d4}
\Gamma_{\theta\theta}^{r}=\frac{1}{2}e^{2U-2\alpha}
\left[2\partial_{\theta}(\rho^{2}e^{-2U}A_{r}A_{\theta})-\partial_{r}(r^{2}e^{-2U+2\alpha}
+\rho^{2}e^{-2U}A_{\theta}^{2})\right]
-e^{2U-2\alpha}A_{r}\partial_{\theta}(\rho^{2}e^{-2U}A_{\theta}),
\end{equation}
\begin{equation}\label{d5}
\Gamma_{\phi\phi}^{r}=-\frac{1}{2}e^{2U-2\alpha}\partial_{r}(\rho^{2}e^{-2U}),
\end{equation}
\begin{equation}\label{d6}
\Gamma_{r\theta}^{r}=\frac{1}{2}e^{2U-2\alpha}\partial_{\theta}(e^{-2U+2\alpha}
+\rho^{2}e^{-2U}A_{r}^{2})
-\frac{1}{2}e^{2U-2\alpha}A_{r}\left[\partial_{r}(\rho^{2}e^{-2U}A_{\theta})
+\partial_{\theta}(\rho^{2}e^{-2U}A_{r})\right],
\end{equation}
\begin{equation}\label{d7}
\Gamma_{r\phi}^{r}=-\frac{1}{2}e^{2U-2\alpha}A_{r}\partial_{r}(\rho^{2}e^{-2U}),
\end{equation}
\begin{equation}\label{d8}
\Gamma_{\theta\phi}^{r}=-\frac{1}{2}e^{2U-2\alpha}A_{r}\partial_{\theta}(\rho^{2}e^{-2U})
+\frac{1}{2}e^{2U-2\alpha}\left[\partial_{\theta}(\rho^{2}e^{-2U}A_{r})
-\partial_{r}(\rho^{2}e^{-2U}A_{\theta})\right].
\end{equation}

We now record how the twist potential encodes angular momentum. Again, consider the coordinate system in \eqref{75}.
An orthonormal basis is given by
\begin{equation}\label{a2}
e_{r}=e^{U-\alpha}(\partial_{r}-A_{r}\partial_{\phi}),\text{ }\text{ }\text{ }\text{ }e_{\theta}=\frac{e^{U-\alpha}}{r}(\partial_{\theta}-A_{\theta}\partial_{\phi}),
\text{ }\text{ }\text{ }\text{ }e_{\phi}=\frac{e^{U}}{r\sin\theta}\partial_{\phi}.
\end{equation}
The twist potential may be calculated in terms of $k$ by
\begin{equation}\label{a3}
\frac{1}{2}\partial_{i}\omega=\epsilon_{ijl}k^{j}_{m}\eta^{l}\eta^{m},
\end{equation}
where $\epsilon_{ijl}$ is the volume element of $g$. It follows that
\begin{align}\label{a4}
\begin{split}
\frac{e^{U-\alpha}}{2r}\partial_{\theta}\omega &=\frac{1}{2}e_{\theta}(\omega)\\
&=-\epsilon(e_{r},e_{\theta},e_{\phi})k(e_{r},e_{\phi})|\eta|^{2}\\
&=-e^{-U} k(e_{r},\partial_{\phi})r\sin\theta,
\end{split}
\end{align}
or rather
\begin{equation}\label{a5}
k(e_{r},\partial_{\phi})=-\frac{e^{2U-\alpha}}{2r^{2}\sin\theta}\partial_{\theta}\omega.
\end{equation}
Hence, if there are only two ends
\begin{align}\label{a6}
\begin{split}
\mathcal{J}&=\frac{1}{8\pi}\int_{S_{\infty}}(k_{ij}-(Tr k)g_{ij})\nu^{i}\eta^{j}\\
&=\lim_{r\rightarrow0}\frac{1}{8\pi}\int_{\partial B(r)}k(\partial_{\phi},e_{r})dA\\
&=\lim_{r\rightarrow0}\frac{1}{8\pi}\int_{\partial B(1)}k(\partial_{\phi},e_{r})e^{-2U+\alpha}r^{2}\sin\theta d\theta d\phi\\
&=-\lim_{r\rightarrow0}\frac{1}{16\pi}\int_{\partial B(1)}\partial_{\theta}\omega d\theta d\phi\\
&=\frac{1}{8}(\omega|_{I_{+}}-\omega|_{I_{-}}).
\end{split}
\end{align}

\bigskip\bigskip\bigskip

\end{document}